\newtheorem{theorem}{Theorem}
\newtheorem{lemma}[theorem]{Lemma}
\newtheorem{fact}[theorem]{Fact}
\newtheorem{definition}[theorem]{Definition}
\newtheorem{corollary}[theorem]{Corollary}
\newtheorem{example}[theorem]{Example}
\newcounter{note}[section]
\newcommand{\R}{\mathbb{R}}
\newcommand{\poly}{\text{poly}}
\newcommand{\calP}{\mathcal{P}}
\newcommand{\eps}{\varepsilon}
\newcommand{\mr}[1]{\mathrm{#1}}
\newcommand{\tl}[1]{\widetilde{#1}}
\newcommand{\tO}{\tl{O}}
\newcommand{\1}[1]{\mathbb{1}[#1]}
\newcommand{\congest}{$\mathsf{CONGEST}\,$}
\newcommand{\virt}{\ensuremath{_{\mathrm{virt}}}\xspace}
\newcommand{\orig}{\ensuremath{_{\mathrm{orig}}}\xspace}
\newcommand{\real}{\ensuremath{_{\mathrm{real}}}\xspace}
\newcommand{\up}{\ensuremath{_{\text{up}}}\xspace}
\newcommand{\down}{\ensuremath{_{\text{down}}}\xspace}
\newcommand{\nquad}{\!\!\!}
\newcommand{\Cov}{\mathrm{Cov}}
\newcommand{\CrossCov}{\mathrm{CrossCov}}
\newcommand{\Cut}{\mathrm{Cut}}
\newcommand{\parent}{\mathrm{parent}}
\newcommand{\depth}{\mathrm{depth}}
\newcommand{\LCA}{\mathrm{LCA}}
\newcommand{\HLdepth}{\text{HL-depth}}
\newcommand{\subtree}{\mathrm{subtree}}
\newcommand{\mytop}{\mathrm{top}} 
\newcommand{\bottom}{\mathrm{bottom}}
\newcommand{\anc}{\mathrm{anc}}
\newcommand{\desc}{\mathrm{desc}}
\newcommand{\SQ}{\mathrm{SQ}}
\title{Universally-Optimal Distributed Exact Min-Cut\footnote{Supported in part by funding from the European Research Council (ERC) under the European Union's Horizon 2020 research and innovation program (grant agreement No. 853109) and the Swiss National Foundation (project grant 200021-184735).}}
\author{Mohsen Ghaffari \and Goran Zuzic}
\date{\today}
\begin{document}

\maketitle

\begin{abstract}
  We present a universally-optimal distributed algorithm for the exact weighted min-cut. The algorithm is guaranteed to complete in $\widetilde{O}(D + \sqrt{n})$ rounds on every graph, recovering the recent result of Dory, Efron, Mukhopadhyay, and Nanongkai~[STOC'21], but runs much faster on structured graphs. Specifically, the algorithm completes in $\widetilde{O}(D)$ rounds on (weighted) planar graphs or, more generally, any (weighted) excluded-minor family.

  \smallskip

  We obtain this result by designing an aggregation-based algorithm: each node receives only an aggregate of the messages sent to it. While somewhat restrictive, recent work shows any such black-box algorithm can be simulated on any minor of the communication network. Furthermore, we observe this also allows for the addition of (a small number of) arbitrarily-connected virtual nodes to the network. We leverage these capabilities to design a min-cut algorithm that is significantly simpler compared to prior distributed work. We hope this paper showcases how working within this paradigm yields simple-to-design and ultra-efficient distributed algorithms for global problems.

  \smallskip

  Our main technical contribution is a distributed algorithm that, given any tree $T$, computes the minimum cut that $2$-respects $T$ (i.e., cuts at most $2$ edges of $T$) in universally near-optimal time. Moreover, our algorithm gives a \emph{deterministic} $\widetilde{O}(D)$-round 2-respecting cut solution for excluded-minor families and a \emph{deterministic} $\widetilde{O}(D + \sqrt{n})$-round solution for general graphs, the latter resolving a question of Dory, et al.~[STOC'21]
\end{abstract}
\thispagestyle{empty}

\newpage
\tableofcontents
\bigskip
\thispagestyle{empty} 

\newpage

\section{Introduction}\setcounter{page}{1}\label{sec:intro}
Computing the \emph{minimum cut} in a graph is one of the fundamental and well-studied graph problems. This problem asks for computing the smallest collection of edges, in terms of their number in unweighted graphs and in terms of the total sum of their weights in weighted graphs, whose removal would disconnect the graph. This notion captures important properties of the network such as its \emph{robustness to failure}---e.g., how many link failures can the network withstand before it gets disconnected--- or \emph{communication bottlenecks}---e.g., the smallest capacity of links connecting one set of nodes to the rest of the network. Over the past decade, we have witnessed significant developments on this problem in the distributed computing setting. To review these results, let us first recall the message-passing model of distributed graph algorithms.

\paragraph{Model.} As standard, we work with the standard message-passing model of distributed computing, often referred to as the \congest model~\cite{peleg2000distributed}. The network is abstracted as an $n$-node connected undirected graph $G=(V, E)$ where each node represents one of the computers in the network (i.e., has its own processor and private memory). Communication takes place in synchronous rounds and per round, each node can send one $O(\log n)$-bit message to each of its neighbors. The nodes do not know the topology of the network at the start of the algorithm (except for each knowing its own neighbors, and perhaps some estimates on the total number of nodes $n$ and the network diameter $D$). Initially, nodes only know their unique $O(\log n)$-bit ID and the IDs of adjacent nodes. In the end, each node $v$ should know its own part of the output, e.g., the size of the minimum cut and which adjacent edges are in the computed cut.

\paragraph{State of the art on distributed computation of min-cut.} The initial progress on distributed algorithms on min-cut focused on approximations. Ghaffari and Kuhn~\cite{ghaffari2013distributed} gave a randomized algorithm that computes a $2+\eps$ approximation, for an arbitrarily small positive constant $\eps$, of minimum cut in $\tl{O}(D+\sqrt{n})$ rounds for weighted graphs. They also showed, by a minor adaptation of the lower bound of Das Sarma et al.~\cite{dassarma2012distributed}, that any non-trivial approximation of minimum cut in weighted graphs requires $\tl{\Omega}(D+\sqrt{n})$ rounds. For unweighted graphs, their lower bound degrades to $\tl{\Omega}(D+\sqrt{n/\lambda})$ where $\lambda$ denotes the minimum cut size. Nanongkai and Su~\cite{nanongkai2014almost} improved the approximation factor to a $1+\eps$ while maintaining the same $\tl{O}(D+\sqrt{n})$ round complexity. Progress on exact computation was more scarce, until a result of Daga, Henzinger, Nanongkai, and Saranurak~\cite{daga2019distributed} that obtained the first sublinear-time algorithm for unweighted graphs. Concretely, their algorithm computes the exact minimum cut in $\tl{O}(n^{1-1/353}D^{1/353} + n^{1-1/706})$ rounds in unweighted graphs. Ghaffari, Nowicki, and Thorup~\cite{ghaffari2020faster} then provided a different exact algorithm for unweighted graphs that improved the round complexity further to $\tl{O}(n^{0.8}D^{0.2} + n^{0.9})$. Also, Parter~\cite{parter2019small} gave an algorithm with round complexity $(\lambda D)^{O(\lambda)}$ for computing the exact unweighted min-cut, where $\lambda$ denotes the min-cut size; this, in particular, runs in $\poly(D)$ for unweighted graphs with constant min-cut size. Finally, in a recent breakthrough, Dory, Efron, Mukhopadhyay, and Nanongkai~\cite{dory2021distributed} presented an algorithm that achieves the worst-case optimal round complexity of $\tl{O}(D+\sqrt{n})$ for exact computation of minimum cut in unweighted graphs.

\paragraph{Beyond worst-case.} When can we call a distributed algorithm ``optimal'' or ``near-optimal'' and what exactly do we mean by that? The $\tl{O}(D+\sqrt{n})$ complexity achieved above is near-optimal, in a worst-case sense, as follows: there is a weighted graph with diameter $D=O(\log n)$ in which any min-cut algorithm would need $\tl{\Omega}(D+\sqrt{n})$ rounds. This optimality is stronger than another worst-case optimality, where we would consider $\tl{O}(n)$-round algorithms \emph{near-optimal}. Notice that the latter is also a correct statement, as there is a graph in which any algorithm needs $\Omega(n)$ rounds (namely, a simple $n$-node cycle). However, the former gives a sharper bound for a wide range of graphs of interest, particularly, graphs where the diameter $D$ is small. Is there an even stronger notion of optimality?

One could think about focusing on particular graph parameters that capture ``usual'' network graphs and aim for faster algorithms when these parameters are small. Even then, we are essentially justifying the performance of the algorithm on any network $G$ of the family, because of the mere existence of one (concocted) network $G'$ in the family where the algorithm cannot perform faster. Plausibly, in most usages of the algorithm, the network $G$ is much more well-behaved than that tailored worst-case graph $G'$, and thus we could desire much faster algorithms. 

\paragraph{Universal Optimality.} A far more ambitious goal is to seek \emph{universal optimality}. That is, to seek a single (i.e., uniform) algorithm which, when run on a network $G$, has the time-complexity that is competitive with the fastest (correct) algorithm's round complexity on that particular network $G$ itself. This paper's objective is to develop such a universally near-optimal algorithm for exact computation of minimum cut. Toward this goal, let us briefly take a detour and recall the concept of low-congestion shortcuts.

\paragraph{Detour to low-congestion shortcuts and shortcut quality.} Given a network $G=(V, E)$, Ghaffari and Haeupler~\cite{ghaffari2016distributed} defined the \emph{shortcut quality} $\SQ(G)$ as the smallest value $Q$ such that we have the following: for any (adversarial) partition of vertices $V$ into disjoint \emph{parts} $V_1, V_2, \ldots, V_N$, each of which induces a connected subgraph $G[V_i]$, there exists a collection of subgraphs $H_1, H_2, \ldots, H_N$,  such that (1) for each $i\in [1, N]$ the diameter of $G[V_i] \cup H_i$ is at most $Q$, and (2) each edge $e\in E$ appears in at most $Q$ many of the subgraphs $H_i$. The graph $H_i$ is called the \emph{shortcut} for part $V_i$.

Ghaffari and Haeupler~\cite{ghaffari2016distributed} showed that any $D$-diameter $n$-node graph admits a shortcut with quality $O(D+\sqrt{n})$,
and they showed algorithms with round complexity $\tl{O}(\SQ(G))$ for exact computation of minimum spanning tree and $1+\eps$ approximation of minimum cut in weighted graphs. These assume that shortcuts can be computed in $\tl{O}(\SQ(G))$, and otherwise, the construction time should be added to the complexity. This result immediately recovers the $\tl{O}(D+\sqrt{n})$ complexity of the minimum spanning tree and $1+\eps$ approximation of minimum cut in general weighted graphs with hop-diameter $D$ and $n$ nodes. But it also leads to significantly faster algorithms in more well-behaved graphs.

In particular, Ghaffari and Haeupler~\cite{ghaffari2016distributed} showed that the shortcut quality $\SQ(G)$ is smaller for many other graph families. For instance, $\SQ(G) = \tl{O}(D)$ for any planar graph or constant-genus family. This was later sharpened and extended for graphs with bounded genus, bounded treewidth, and bounded pathwidth~\cite{haeupler2016near}. Haeupler, Li, and Zuzic~\cite{haeupler2018minor} gave shortcuts for excluded-minor graphs, with quality and construction-time $\tl{O}(D^2)$. Finally, Ghaffari and Haeupler~\cite{ghaffari2021excluded} improved and strengthened all these results and showed that excluded-minor graphs, which contain all previously mentioned graphs, admit shortcuts with quality $\tl{O}(D)$. For all of the aforementioned results, it is known how to construct shortcuts of quality $\tl{O}(\SQ(G))$ with an efficient $\tl{O}(\SQ(G))$-round and \emph{deterministic} distributed algorithm~\cite{haeupler2016low,haeupler2018round,ghaffari2021excluded}. Hence, these imply an $\tl{O}(D)$-round algorithm for $(1+\eps)$-approximation of weighted min-cut in any $D$-diameter excluded-minor graph network.

These results focus on mostly on sparse graphs, in a vague sense. On the opposite side, for well-connected graphs, the results of Ghaffari, Kuhn, and Su~\cite{ghaffari2017MixingTime}, which were sharpened by Ghaffari and Li~\cite{ghaffari2018new} showed that any graph with $1/\poly(\log n)$ mixing time admits a shortcut with quality $\poly(\log n)$ and one can compute shortcut of quality $2^{O(\sqrt{\log n})}$ in them in $2^{O(\sqrt{\log n})}$ rounds (indeed in any graph with mixing time $2^{- O(\sqrt{\log n})}$). These imply an $2^{O(\sqrt{\log n})}$-round algorithm for $(1+\eps)$-approximation of weighted min-cut in well-connected graphs, with mixing-time $1/\poly(\log n)$ or even $2^{- O(\sqrt{\log n})}$. This in particular includes Erdos-Renyi random graphs above the connectivity threshold. 

\paragraph{Back to Universal Optimality.} Haeupler, Wajc, and Zuzic~\cite{haeupler2020shortcuts} showed that the shortcut quality is not only an upper bound for the round complexity of computing a minimum spanning tree or approximation of min-cut, as shown in \cite{ghaffari2016distributed} but also a \emph{universal} lower bound for it. That is, roughly speaking, for any network graph $G$ with shortcut quality $\SQ(G)$, one can show that any distributed algorithm that works correctly on all graphs needs $\tl{\Omega}(\SQ(G))$ rounds to solve the (both approximate or exact) minimum cut problem on the network $G$ itself. Notice that in this statement, while the topology network $G$ itself is fixed, and can be even known to all the nodes, the weights on the edges of $G$ are the input to the problem.

In light of this, we can say that the $(1+\eps)$-min-cut approximation algorithm of Ghaffari and Haeupler~\cite{ghaffari2016distributed}, which runs in $\tl{O}(\SQ(G))$ rounds once given the shortcuts, is a universally-optimal algorithm for $(1+\eps)$-approximation min-cut since any correct algorithm requires $\tl{\Omega}(\SQ(G))$ rounds. This is modulo one small but important issue: the time for computing shortcuts has not been taken into account. However, arguably, that is an orthogonal topic within the ambitious path toward the holy grail of obtaining universally-optimal distributed algorithms for all global graph problems: we can separate the issue of efficient shortcut computation in different graphs from the issue of how to design algorithms for various problems whose complexity is proportional to the shortcut quality once efficient computation is assumed.
 
Only the latter part is within the scope of this paper. A universally-optimal min-cut algorithm that assumes efficient construction still implies an unconditional universally-optimal algorithm when the network is guaranteed to not contain a fixed minor, and for the setting with known topology (where the weights are still unknown and a part of the input, also known as the supported CONGEST) as studied by Haeupler, Wajc, and Zuzic~\cite{haeupler2020shortcuts}.
Furthermore, for the more standard unknown-topology setting, there has been significant recent progress on the former issue of fast construction of shortcuts: in particular, Haeupler \textcircled{r} Raecke \textcircled{r} Ghaffari~\cite{haeupler2022oblRouting} showed that one can obtain shortcuts with quality $\poly(\SQ(G)) n^{o(1)}$ in the same number of rounds. As such, combining this with \cite{ghaffari2016distributed}, we can now compute a $(1+\eps)$-approximation of min-cut in $\poly(\SQ(G)) n^{o(1)}$ in any network $G$, and this is within a polynomial of the best-possible bound for the network $G$ itself, modulo an $n^{o(1)}$ factor. Moreover, any future $\tl{O}(\SQ(G))$-quality construction in $\tl{O}(\SQ(G))$ rounds would retroactively turn these conditional universally-optimal algorithms into unconditional ones.

\paragraph{The Minor-Aggregation model.} State-of-the-art distributed algorithms have become increasingly more complex, due to the influx of new ideas and their increasing complexity. To address this issue, Zuzic~\textcircled{r}~al.~\cite{goranci2022universally} introduced the \emph{Minor-Aggregation model}: a simple and powerful interface for designing ultra-fast distributed algorithms in the standard message-passing (i.e., CONGEST) model. The interface provides high-level primitives that simplify algorithm design; the primitives are then, in turn, efficiently implemented in CONGEST using low-congestion shortcuts and the multitude of tools developed around them. For example, \cite{goranci2022universally} used the interface to simplify the design of their universally-optimal $(1+\eps)$-approximate distributed shortest path algorithm. On a technical level, the Minor-Aggregation model restricts the algorithm to operate only on aggregates: each node receives only an aggregate value (e.g., sum, max, logical-OR, etc.) of all messages sent to it. However, this restriction, combined with low-congestion shortcuts, enables efficient edge contractions which are difficult to efficiently implement in a distributed setting. In other words, this allows a black-box algorithm to be run on an arbitrary minor. As an instructive example, consider the classic Boruvka's MST algorithm which works by computing the minimum-weight outgoing edge from each node and then contracting all such edges; this iteration is repeated for $O(\log n)$ steps until the graph is trivial. Boruvka's algorithm naturally operates on aggregates, hence it can be immediately performed on the minor resulting from contracting minimum-weighted edges, giving us an $O(\log n)$-round Minor-Aggregation algorithm. Using prior work, this can be turned into, say, an $\tl{O}(D)$-round algorithm for (weighted) planar networks. More generally, a $\tau_1$-round Minor-Aggregation algorithm can be turned into an $\tl{O}(\tau_1 \cdot \tau_2)$-round CONGEST algorithm, where $\tau_2$ is the time to construct shortcuts of quality $\tau_2$ (see below for a list of implications).


\paragraph{Our contributions.}
Our first contribution of this paper is to develop a $\poly(\log n)$-round Minor-Aggregation algorithm that computes the exact minimum cut in weighted graphs.
This \emph{unconditionally} recovers the breakthrough $\tl{O}(D + \sqrt{n})$-round CONGEST algorithm for general graphs of Dory et al.~\cite{dory2021distributed}. Moreover, it \emph{unconditionally} implies the following set of novel results.

\begin{theorem}\label{theorem:final-min-cut-result}
  Suppose $G$ is an $n$-node graph with hop-diameter $D$. There are randomized distributed CONGEST algorithms $A_1, A_2, A_3, A_4$ over $G$ for the exact weighted min-cut problem with the following guarantees:
  \begin{itemize}\setlength\itemsep{0em}
  \item When $G$ is an excluded-minor graph (e.g., a planar network), $A_1$ terminates in universally-optimal $\tl{O}(D)$ rounds.
  \item When the graph topology $G$ is known to all nodes, $A_2$ terminates in universally-optimal $\tl{O}(\SQ(G))$ rounds. (Note: this bullet, along with known implications, implies all other bullets.)
  \item When $G$ well-connected graph with mixing-time $2^{O(\sqrt{\log n})}$, $A_3$ terminates in almost-universally-optimal $2^{O(\sqrt{\log n})}$ rounds.
  \item When $\SQ(G) \le n^{o(1)}$, $A_4$ terminates in (almost-universally-optimal) $n^{o(1)}$ rounds.
  \end{itemize}
\end{theorem}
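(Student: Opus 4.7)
The plan is to establish \Cref{theorem:final-min-cut-result} in two stages. First, I would design a Minor-Aggregation algorithm for exact weighted min-cut that runs in $\poly(\log n)$ rounds. Second, I would appeal to the existing compilers from the Minor-Aggregation model to CONGEST to derive each of the four bullets.

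For the first stage, I would follow Karger's classical framework. Using a tree packing (computable by a few Boruvka-style MST contractions, which are naturally aggregation-based), one produces a distribution over spanning trees such that, with high probability, the global min-cut $2$-respects at least one tree in an $O(\log n)$-size sample. This reduces exact min-cut to the $2$-respecting cut problem, which is precisely the paper's main technical contribution: a $\poly(\log n)$-round Minor-Aggregation algorithm that, given a tree $T$ and edge weights on $G$, outputs the minimum weight cut that cuts at most two edges of $T$. Taking the minimum over the $O(\log n)$ sampled trees then yields the exact min-cut, still in $\poly(\log n)$ Minor-Aggregation rounds.

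For the second stage, I would invoke the standard translation: a $\tau_1$-round Minor-Aggregation algorithm compiles into an $\tl{O}(\tau_1 \cdot \tau_2)$-round CONGEST algorithm whenever shortcuts of quality $\tau_2$ can be constructed in $\tau_2$ rounds. Setting $\tau_1 = \poly(\log n)$ and plugging in the known constructions handles each bullet: (i)~excluded-minor graphs admit $\tl{O}(D)$-quality shortcuts computable deterministically in $\tl{O}(D)$ rounds by Ghaffari--Haeupler, yielding $A_1$; (ii)~in the supported-CONGEST (known-topology) model, shortcuts of optimal quality $\SQ(G)$ are available without communication, yielding $A_2$; (iii)~for well-mixing graphs, Ghaffari--Kuhn--Su and Ghaffari--Li give $2^{O(\sqrt{\log n})}$-quality shortcuts in matching time, yielding $A_3$; (iv)~for arbitrary $G$, the Haeupler--Raecke--Ghaffari oblivious-routing construction yields shortcuts of quality $\poly(\SQ(G)) \cdot n^{o(1)}$ in the same round count, which combined with $\SQ(G) \le n^{o(1)}$ gives $A_4$.

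The hard part is the first stage: translating Karger's framework into an aggregation-only algorithm. The $2$-respecting cut problem inherently ranges over pairs of tree edges, and a naive implementation must evaluate $\Theta(n^2)$ pairs, while each Minor-Aggregation round only delivers a single aggregate per node. The route I would take is a heavy-light decomposition of $T$ together with appropriate path/subtree aggregates, so that each pair is effectively handled via logarithmically many aggregates; the key obstacle is showing that (a)~the needed auxiliary structures (heavy-light decomposition, ancestor relations, subtree weight sums, descendant-vs-cross-path classification) can be built using only aggregation primitives over minors of $G$, and (b)~the cross-path interaction terms can be reduced to a small number of aggregate queries by introducing a controlled number of arbitrarily-connected virtual nodes, a capability the paper highlights as being available in the Minor-Aggregation model. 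Once these pieces are in place, the $\poly(\log n)$ Minor-Aggregation round bound follows, and the four CONGEST corollaries fall out of the standard translations above.
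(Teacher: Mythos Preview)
Your proposal is correct and follows essentially the same approach as the paper's own proof: combine the $\poly(\log n)$-round Minor-Aggregation tree-packing (\Cref{thm:treePacking}) with the $\poly(\log n)$-round Minor-Aggregation $2$-respecting min-cut algorithm (\Cref{thm:general-2-respecting-cut}), then compile down to CONGEST via the shortcut-based simulation (\Cref{thm:congest-simulation}), instantiating each bullet with the relevant shortcut construction. The only small omission is that to justify the adjective ``universally-optimal'' you also need to invoke the $\tl{\Omega}(\SQ(G))$ lower bound of Haeupler--Wajc--Zuzic for min-cut, which the paper cites explicitly.
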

%


Our second contribution is a simple but powerful extension of the \emph{Minor-Aggregation model}.
Specifically, we show that any black-box Minor-Aggregation algorithm can be logically executed on a network graph $G$ \emph{adjoined with (a small number of) arbitrarily-connected virtual nodes which do not need to exist in $G$}; this is compiled down to an algorithm that only communicates using the existing links in the underlying network graph $G$ while suffering only a small overhead. Any such property fails for general CONGEST (i.e., non-aggregation based) algorithms, as adding a single fully-connected virtual node greatly increases the computational power of the model.
Virtual nodes allow us to import various techniques like divide-and-conquer from the centralized and parallel settings into the distributed world, which is the reason why this paper can simplify and speed up the arguably-complicated exact min-cut algorithm of \cite{dory2021distributed}. Moreover, our virtual-node extension has already found prolific use in  Rozhon~\textcircled{r}~al.~\cite{rozhon2022undirected}, which gives a unified algorithm for the deterministic $(1+\eps)$-shortest path that is both the first near-optimal in the parallel setting and the first universally optimal in the distributed setting.

Our third contribution is a \emph{deterministic} Minor-Aggregation algorithm for the 2-respecting min-cut problem, which often implies a deterministic CONGEST algorithm. To give context, our exact min-cut algorithm follows the strategy outlined by Karger~\cite{karger2000minimum}---and frequently used later, e.g,~\cite{daga2019distributed, dory2021distributed}---which is comprised of exactly two self-contained pieces: the \emph{tree packing} and the \emph{2-respecting min-cut}. The former piece, tree packing, is about finding a collection of $\poly(\log n)$ spannings such that every min-cut $2$-respects one tree $T$ in the collection, in the sense that the cut includes at most $2$ edges of $T$. The latter piece, $2$-respecting min-cut, is when we are given a tree $T$ and we should compute the minimum cut in graph $G$ among those that $2$-respect $T$. We note that obtaining ``efficient'' deterministic tree packing is still an active area of research even in the centralized setting (with some exciting recent progress by Li~\cite{li2021deterministic}). In contrast, computing 2-respecting min-cut has been successfully derandomized in the centralized~\cite{gawrychowski2021note} and parallel settings~\cite{lopez2021work}. We contribute the analogous distributed result and obtain a deterministic CONGEST 2-respecting min-cut that terminates in $\tl{O}(D)$ rounds for weighted excluded-minors graphs (e.g., weighted planar graphs), and $\tl{O}(D + \sqrt{n})$ rounds for general graphs. The latter result resolves an open question of \cite{dory2021distributed}, who asked for a $\tl{O}(D + \sqrt{n})$-round deterministic algorithm for this $2$-respecting min-cut problem. To achieve this result, we contribute to the low-congestion shortcut ecosystem of tools by derandomizing several important primitives like heavy-light decompositions, subtree sums, and ancestor sums of trees.


\paragraph{Other related work on exact min-cut.} Algorithms for min-cut have seen a flurry of recent progress. Mukhopadhyay and Nanongkai~\cite{mukhopadhyay2020weighted} observed several structural properties of min-cut that enable the min-cut to be computed more efficiently and in different models. Specifically, they obtain a sequential $O(m \log^2 n + n\log^6 n)$-time algorithm, which compares favorably to the celebrated sequential $O(m\log^3 n)$-time algorithm of Karger~\cite{karger2000minimum}. Subsequent result include work-optimal parallel algorithms for non-sparse graphs~\cite{lopez2021work}, near-existentially-optimal distributed algorithms~\cite{dory2021distributed}, faster directed algorithm for directed min-cut~\cite{cen2021minimum}, etc. At the same time and independently of \cite{mukhopadhyay2020weighted}, Gawrychowski, Mozes, and Weimann~\cite{gawrychowski_et_al:LIPIcs:2020:12464} proposed an $O(m \log^2 n)$-time algorithm for the 2-respecting min-cut in the centralized setting that is deterministic and faster than that of Karger~\cite{karger2000minimum}, and they strengthened and simplified the approach of Mukhopadhyay and Nanongkai to obtain a sequential $O(m \log^2 n + n\log^3 n)$-time algorithm~\cite{gawrychowski2021note}.

\section{An Overview of Our Methods}

\noindent\paragraph{Minor-Aggregation with virtual nodes.} We start by giving a short and informal preliminary on the Minor-Aggregation model, as introduced in \cite{goranci2022universally} (see \Cref{sec:prelim-minor-aggregation} for a formal discussion). A distributed algorithm in this model performs computations in synchronous rounds. In each round, the algorithm first contracts an arbitrary subset of edges. Then, each super-node (which is created from contracting a connected component of the contracted edges) sends a message to all of its neighbors. On the receiving end, each node $v$, instead of receiving each of the messages $m_1, \ldots, m_k$ sent to it, receives only an aggregate value $\bigoplus_{i=1}^k m_i$, where $\bigoplus$ is some \emph{aggregate function} like the sum or the max, but can also be as complicated as an arbitrary mergeable sketch. Several observations are immediate: we can run black-box algorithms on minors (due to contractions), and we can run simultaneous algorithms on node-disjoint connected subgraphs (we add the warning that edge-disjointedness would not suffice). The goal is to find a $\poly(\log n)$-round min-cut algorithm in this model, which corresponds to a universally-optimal algorithm (under certain conditions orthogonal to this paper). We contribute to the model in the following ways:
\begin{itemize}
\item \emph{Virtual nodes.} (\Cref{sec:virtual-nodes}) We observe the simple but powerful property that aggregation-based algorithms behave remarkably well under the addition of \emph{virtual nodes}. Specifically, we allow to add $\poly(\log n)$ virtual nodes to the underlying network and arbitrarily connect them with virtual edges, either among themselves or between virtual nodes and nodes of $G$. Any algorithm on the resulting \emph{virtual graph} can be simulated on $G$ with a $\poly(\log n)$ multiplicative blowup in the number of rounds. Note that no such property exists for CONGEST without introducing polynomial blowup factors in the computation. 

\item \emph{Modeling power and caveats when using virtual nodes.} This possibility of adding virtual nodes greatly enhances the modeling power of the Minor-Aggregation model. For example, one can turn a black-box single-source shortest path algorithm in the Minor-Aggregation model into a multi-source shortest path algorithm by creating a virtual super-source and connecting it to a set of source nodes. Moreover, when combined with recursions, virtual nodes can be used to bring many recursive graph algorithms from the centralized and parallel settings into the distributed world. To see why, when doing a recursive call, one often needs to change the subgraphs before passing them to the recursive calls. Virtual nodes provide a very simple way of achieving this. However, we should add a warning about an issue we call \emph{simulation cascade} that can arise when combining virtual nodes and recursions: when issuing a recursive call on a virtual graph, that call has to eventually run on the underlying communication network. A naive solution would be to simply remove the virtual nodes from the recursive call via simulation, thereby causing a (small) multiplicative blowup. However, this multiplicative blowup happens on every level of the recursion, preventing the final algorithm from having a polylogarithmic running time. In this paper, we develop several different solutions for this issue (explained later).


\item \emph{Deterministic primitives.} (\Cref{sec:det-ops-and-simulation} and \Cref{sec:tree-primitives}) Important primitives like \emph{heavy-light decompositions}, \emph{ancestor sums}, and \emph{subtree sums} of trees are often ubiquitously-used primitives within the low-congestion shortcut framework. Within the Minor-Aggregation model, consider combining subtree sums with the approximate heavy-hitter sketch (which is a mergeable sketch, hence is a valid aggregation operator): given inputs $x_u$ for each node $u$, each node $v$ can compute the heavy hitters among $\{ x_u : u \text{ is in the subtree of } v\}$ (\Cref{example:heavy-hitters}). However, prior work has typically resorted to a randomized implementation of these primitives~\cite{ghaffari2016mst,dory2019improved,haeupler2020shortcuts}. We address this issue by providing deterministic $\poly(\log n)$-round Minor-Aggregation algorithms for all the aforementioned primitives. This yields fully deterministic $\tl{O}(D)$-round CONGEST algorithms for these primitives in excluded-minor networks, and $\tl{O}(D+\sqrt{n})$-round CONGEST algorithms for general graphs. We achieve this result by replacing the randomized star-merging technique used throughout the low-congestion shortcut framework with a deterministic version by leveraging the deterministic 3-coloring of out-degree-one graphs developed by Cole and Vishkin~\cite{cole1986deterministic}.

\end{itemize}

\noindent\paragraph{Minimum cut via tree packing, and $2$-respecting min-cuts.}\nquad(\Cref{sec:tree-packing}) To solve the minimum cut problem, thanks to the known tree packing results~\cite{karger2000minimum, daga2019distributed, dory2021distributed}, it suffices to find the minimum cut among the cuts that $2$-respects a given tree. We note that the algorithm from prior work for this tree-packing part easily extends to our setting, as we explain in \Cref{thm:treePacking}. Our focus will be on computing the minimum $2$-respecting cut, for a given tree $T$. That is, given a fixed spanning tree $T$ of a graph $G$, compute the minimum cut in $G$ among those that cut at most $2$ edges of $T$. 

To treat the minimum $2$-respecting cut problem for the given tree $T$, we break it into simpler special cases. The general algorithm will be achieved by a clean and modular combination of these cases. \medskip

\noindent\paragraph{Path-to-path 2-respecting min-cut.}\nquad(\Cref{sec:path-to-path-cut})
First, we consider an important sub-case of the \emph{path-to-path 2-respecting min-cut}: the case when the tree $T \subseteq G$ happens to be composed of exactly two paths $P$ and $Q$ along with a common root connecting them (see \Cref{fig:path-to-path}). Our goal is to find the $\min \Cut(e, f)$ over all pairs $e, f \in E(P) \times E(Q)$ (i.e., the edges are on different paths), where $\Cut(e, f)$ is the sum of weights of edges of $G$ which cross the cut determined by $(e, f)$ (i.e., all edges with endpoints $u, v$ such that the unique $T$-path between $u, v$ crosses exactly one of $\{e, f\}$). Several notable ideas go into designing an algorithm for this problem:

\begin{itemize} 
\item \emph{General recursive idea and the Monge property.} (Observed by Mukhopadhyay and Nanongkai~\cite{mukhopadhyay2020weighted}.) The main idea is to import the state-of-the-art centralized techniques into the distributed setting with the help of the Minor-Aggregation model. Specifically, we first fix $e_a$ to be the midpoint edge of $P$ (i.e., $a := \lfloor |P| / 2 \rfloor)$ and let $f_b$ be the \textbf{best response} to $f_a$, meaning the edge $f_b \in E(Q)$ that minimizes $\Cut(e_a, f_b)$. Then, either $(e_a, f_b)$ is the pair that minimizes the 2-respecting cut, or the minimizing pair can be found on either $P_{\text{up}} := \{e_1, \ldots, e_{a-1}\} \times Q_{\text{up}} :=\{f_1, \ldots, f_{b-1}\}$ ($e_1, f_1$ are connected to the root) or $P_{\text{down}} := \{e_{a+1}, \ldots, e_{|P|}\} \times Q_{\text{down}} := \{f_{b+1}, \ldots, f_{|Q|}\}$. This property, i.e., that the minimizing pair is either completely on one side or completely on the other side of $(e_a, f_b)$, is the so-called Monge property. Due to this property, we can issue two simultaneous recursive calls on $P_{\text{up}} \times Q_{\text{up}}$ and $P_{\text{down}} \times Q_{\text{down}}$ and return the best result found. Note that a parallel implementation of this idea has $\tl{O}(1)$ recursion depth and can be implemented in near-linear work.
  
\item \emph{Private cut-equivalent graphs.} One issue afflicting the above idea in the distributed setting is that the recursive call on, say, $P_{\text{up}} \times Q_{\text{up}}$ requires information private to $P_{\text{down}} \times Q_{\text{down}}$: an edge strictly between the latter affects the answer of the former. 
To prevent this, we construct \emph{private} and \emph{cut-equivalent} graphs $G_{\text{up}}$ and $G_{\text{down}}$ that are (1) private, in the sense that the recursions can freely use them as well as guaranteeing that (2) $\Cut(e \in E(P\up), f \in E(Q\up))$ is the same with respect to $G$ and $G\up$. This is achieved by replacing the top-most and bottom-most edges of $P\up$ and $Q\up$ with virtual nodes (as well as the root), which are both private to the recursion and allow us to insert additional edges to achieve cut equivalency. For example, an edge $\{a \in V(P\down), b \in V(Q\up)\}$ is replaced in $G\up$ with an edge between the bottom (virtual) node of $P\down$ and $b$, making it private and making its contribution to all 2-respecting cuts equivalent in the recursive call on $G\up$ as it would have been if considering $G$. Other types of edges and the recursive call on $G\down$ are analogous.

\item \emph{Avoiding simulation cascade (using separability).} Another issue with the above idea of private-but-virtual graphs is that each recursive call is performed on a virtual graph (albeit, with a small number of virtual nodes). This has to be ultimately converted to an algorithm without virtual nodes. For instance, one idea is to naively call the recursive algorithm on, say, (the virtual graph) $G\up$ and then remove the virtual nodes using simulation (which introduces a small multiplicative overhead). However, this would yield a runtime explosion as every level of the recursion would introduce a cascading multiplicative overhead to the computation, making the final runtime polynomial (the desired runtime is polylogarithmic). The solution, however, might seem simple but is essential. Consider, say, the sub-instance $P\up \times Q\up$ on $G\up$. We want to remove the virtual nodes before the recursive call returns so that the returned call only performs work on the underlying graph, removing any need for cascading simulation of virtual nodes. This ``de-virtualization'', however, can only be performed in $G\up$ if $G\up - \mathrm{Virt}$ (minus its virtual nodes) is connected. If this is the case, we can resolve the issue as explained. If it is not connected, however, this forces a trivial structure called \emph{separability} on the sub-instance which can be solved without recursing. Specifically, we show that $\Cut(e, f)$ can be \emph{separated}, i.e., written as $\Cut(e, f) = F_P(e) + F_Q(f)$ for some functions $F_P, F_Q$. In this case, separate minimizations of both sides lead to the correct result.
\end{itemize}

\noindent\paragraph{Star 2-respecting min-cut.}\nquad(\Cref{sec:star-cut}) Next, we use the path-to-path algorithm to build an algorithm in which the tree $T \subseteq G$ is exactly composed of $k$ paths $P_1, \ldots, P_k$ and a common root that connects to the top of each path (see \Cref{fig:star}). The goal is to find the minimum 2-respecting cut $\Cut(e, f)$ where $e \in E(P_i)$ and $f \in E(P_j)$ are two edges on different paths $i \neq j$. Several notable ideas go into designing an algorithm for this problem:

\begin{itemize}
\item \emph{Path interest.} (Introduced by Mukhopadhyay and Nanongkai~\cite{mukhopadhyay2020weighted}.) We say a non-tree edge $\{u, v\} \in E(G)$ \textbf{covers} a tree-edge $e$ if the unique path in $T$ between $u$ and $v$ contains $e$. We say that a path $P_i$ is \emph{ interested} in a path $P_j$ if there exist edges $e \in E(P_i), f \in E(P_j)$ such that at least half of the edges covering $e$ also cover both $e$ and $f$ (counting weight as the multiplicity, see \Cref{lemma:interesting-path-vs-cov}). If the pair of edges that determine the optimum 2-respecting min-cut lie on paths $P_i$ and $P_j$, then $P_i$ and $P_j$ must be mutually interested in each other (\Cref{lemma:best-cut-has-interest}). Therefore, the general idea for the star algorithm will be to examine all mutually-interested pairs of paths using the path-to-path oracle. An important property that enables solving the star instance is that each path is interested in at most $O(\log n)$ other paths.

\item \emph{Interest lists and cross-edges.} We now describe how to efficiently compute for each path $P_i$ a list of paths that $P_i$ is interested in. On a technical level, each path-edge $e \in E(P_i)$ needs to find the set of edges $f \in E(P_j)$ such that the majority of non-tree edges covering $e$ also cover $f$. It is immediate that, for each fixed $e$, all edges $f$ (it any) lie on a single path $P_j$ in which case  $P_i$ is interested in $P_j$. For a fixed edge $e \in E(P_i)$, this corresponds to picking a majority element of a sequence, where each (non-tree) edge $f \in E(G)$ between $P_i$ and $P_j$ contributes $w(f)$ weight to $P_j$. This majority operation, however, can be performed using deterministic \emph{heavy-hitter sketches}, which gracefully fit within the framework of aggregation operations. Therefore, we can use the newly developed deterministic subtree sum operation with the heavy-hitter aggregator to find the majority element for each edge $e$, indicating path interest. Furthermore, as $e \in E(P_i)$ is ``moved across'' the path $P_i$, there can be at most $O(\log n)$ other paths that $P_i$ is interested in (\Cref{lemma:path-weak-interest-cardinality}). Therefore, we find the union of all found (almost) majority elements in each path, as there can be at most $\tl{O}(1)$ of them. 


However, there is an issue plaguing this approach: if one simply considers all edges $f \in E(G)$ covering a path-edge $e \in E(P_i)$ and is looking for the majority element using the subtree sum operation, they would also need to support the ``remove'' operation in the heavy-hitter sketch since some edges considered throughout the subtree of a node $v$ should not be considered at its parent node. However, the heavy-hitter sketch does not support this. We get around this by slightly changing the definition of interest to only consider \emph{cross-edges} (edges going from one path to another), which do not require removals. We show that all the important results hold even if one ignores all other types of edges.
  
\item \emph{Interest graph.} Consider the logical graph where each node represents a different path $P_i$ and there is an edge $\{P_i, P_j\}$ if and only if $P_i$ and $P_j$ are mutually  interested in each other. Moreover, we can simulate an arbitrary Minor-Aggregation algorithm on the interest graph by contracting away all path edges since any two mutually-interested paths must have an edge between them. Moreover, since each path is interested in at most $O(\log n)$ other paths, the maximum degree of the interest graph is $\Delta := O(\log n)$. This implies that we can also simulate arbitrary CONGEST algorithms on the interest graph (i.e., non-aggregation based) with a multiplicative $O(\Delta) = \tl{O}(1)$ blowup (\Cref{lemma:hl-interest-simulation}).

\item \emph{Edge coloring of the interested graph.} We find the smallest 2-respecting cut among all pairs of mutually-interested paths by first computing an edge coloring of the interest graph. To this end, we can simulate the deterministic CONGEST algorithm of Panconesi and Rizzi~\cite{panconesi2001some} on the interest graph that colors the interest graph into $O(\Delta) = \tl{O}(1)$ colors. Then, we iteratively consider each color class in isolation. Within each class, all pairs of matched paths are node disjoint, hence we can use the previously-developed path-to-path 2-respecting min-cut algorithm to find the optimum solution.
\end{itemize}

\noindent\paragraph{Between-subtree 2-respecting min-cut.}\nquad(\Cref{sec:2-respecting-between-subtree}) We now use the star algorithm to build a between-subtree 2-respecting cut algorithm, in which the tree $T \subseteq G$ is exactly composed of $k$ subtrees $T_1, \ldots, T_k$ and a common root that connects to the top of each subtree (see \Cref{fig:between-subtree}). The goal is to find the minimum 2-respecting cut $\Cut(e, f)$ where $e$ and $f$ are two edges in different subtrees. Several notable ideas go into designing an algorithm for this problem:
\begin{itemize}
\item \emph{Pairwise coloring.} Our first idea is to reduce the problem for general $k$ to the case when $k = 2$. Suppose the optimum 2-respecting cut $(e^*, f^*)$ is contained in subtrees $e^* \in E(T_{i^*})$ and $f^* \in E(T_{j^*})$. We will construct a \emph{pairwise coloring} $\{f_1, \ldots, f_\chi \}$, i.e., a small collection of \emph{color assignments} $f_i : [k] \to \{ \mathrm{red}, \mathrm{blue} \}$ such that each pair of subtrees $T_i, T_j$ is assigned a different color in at least one color assignment. It is a folklore result that there exists such an assignment with $\chi = O(\log n)$ (e.g., consider the $O(\log n)$ different bits of the subtree IDs). After constructing such a collection of colorings, we iterate over each color assignment, and for each assignment, merge all the roots of all subtrees colored $\mathrm{red}$ and all subtrees colored $\mathrm{blue}$. This reduces the problem to the $k=2$ case.
  
\item \emph{Heavy-light decomposition.} We now reduce the $k = 2$ problem to the (solved) star case. First, we construct a heavy-light decomposition of both subtrees, in which each edge is assigned a label ``heavy'' or ``light'' such that each root-to-leaf path has at most $O(\log n)$ light edges. We define an HL-depth of an edge $e$ to be the number of light edges on the root-to-$e$ path. Now, suppose the optimum 2-respecting cut $(e^*, f^*)$ has $d_1^* := \HLdepth(e^*)$ and $d_2^* := \HLdepth(f^*)$. Since $d_1, d_2 = O(\log n)$, we guess the correct $d_1^*$ and $d_2^*$ by testing all possible combinations. For each guess, contract all edges $e$ in $T_1$ with $\HLdepth(e) \neq d_1^*$ and all edges $e$ in $T_2$ with $\HLdepth(e) \neq d_2^*$. This reduces the question to exactly the star case (see \Cref{fig:subtree-reduction}). 
\end{itemize}

\noindent\paragraph{Final step: 2-respecting general cut.}\nquad(\Cref{sec:general-2-resp-cut}) Finally, we solve the general 2-respecting min-cut, in which we are given a spanning tree $T$ of a weighted graph $G$ and the goal is to find $\min_{e \in E(T), f \in E(T)} \Cut(e, f)$. Several notable ideas go into designing an algorithm for this problem:
\begin{itemize}
\item \emph{The general recursive idea and the centroid decomposition.} It is a well-known folklore result that each tree $T$ has a centroid node $c \in V(T)$ such that all connected components of $T - v$ have at most $|V(T)| / 2$ nodes. We will solve the general case by first finding the centroid $c$ of our tree $T$, which can be performed using the subtree sum operation. Now, let us denote the pair of edges defining some 2-respecting min-cut by $(e^*, f^*) \in E(T) \times E(T)$, and suppose we denote the maximal connected subtrees of $T - c$ by $T_1, \ldots, T_k$. Then, the pair $e^*, f^*$ can either be (1) in two different subtrees $T_{i^*}, T_{j^*}$, or (2) in the same subtree $T_{i^*}$. For case (1) we simply need to call the 2-respecting between-subtree cut algorithm on $T_1, \ldots, T_k$; for case (2) we will use recursion on each one of the (node disjoint) subtrees $T_i$, allowing us to schedule all recursive calls simultaneously.

\item \emph{Cut-equivalent subtrees via virtual nodes.} One immediate issue breaking a naive implementation of the above recursive idea is that each recursive call on, say, $T_i$ needs to have a private copy of edges $E(G)$ which are used to calculate the values of (2-respecting) cuts. The issue seems essential: cut values completely within $T_i$ can, after a few levels of recursion, depend on edges whose both endpoints are in unrelated recursive calls. While this is not as big of a problem in the parallel or centralized settings, as one can build global data structures shared across recursive calls (e.g., as done in \cite{gawrychowski_et_al:LIPIcs:2020:12464}), this is a fundamental issue in the distributed setting. However, with our extensions to the Minor-Aggregation model, we can tackle this using virtual nodes, which can be arbitrarily connected even to non-virtual nodes. Upon finding the centroid $c$, we attach a (private) ``virtual centroid'' $c_i$ to each subtree $T_i$. Furthermore, if there is an edge $\{u, v \} = e \in E(G)$ crossing between subtrees from, say, $u \in V(T_i)$ to $v \in V(T_j)$, then we will add two virtual edges $\{u, c_i \}$ and $\{ v, c_j \}$, both of the same weight as $w(e)$ (see \Cref{fig:centroid-decomposition}). It is easy to show that such a transformation preserves all 2-respecting cuts within all subtrees $T_i$ (modified with the virtual centroid and corresponding edges), and all of these subgraphs are private to their own recursive call.

\item \emph{Avoiding simulation cascade.} Naively recursing on each $T_i + c_i$ and eliminating the virtual node by simulation leads to a simulation cascade, preventing us from achieving the desired runtime. However, we can mitigate this in the following way. Consider some particular recursive call, which happens to be run on some tree $T$. In parent recursive calls several virtual nodes $\mr{Virt} \subseteq V(T)$ were introduced, one per level of the recursion. However, our algorithm has the immediate property that $T - \mr{Virt}$ (all virtual nodes and their adjacent edges removed) is connected. Furthermore, due to the choice of the centroid as the pivoting node, the depth of the recursion is $O(\log n)$, giving us a bound that $|\mr{Virt}| \le O(\log n)$. Therefore, inside the recursive call before returning, we eliminate all the virtual nodes $\mr{Virt}$ by simulating the algorithm on $T - \mr{Virt}$, which is a connected subgraph of the underlying communication network. Hence, no additional (or cascading) virtual node elimination needs to happen after the recursive call returns.
  
\end{itemize}

\section{Preliminaries}
\label{sec:prelim}
\noindent\textbf{Basic Notations.} We define $[k] := \{1, 2, \ldots, k\}$. By $A \sqcup B$, we mean the disjoint union of $A$ and $B$. 
\medskip

\noindent\textbf{Graphs.} An undirected graph $G$ is composed of a node set $V(G)$ and an edge set $E(G)$. We often work with weighted graphs, in which case each edge $e$ is assigned a weight $w(e)$ that is polynomial in the number of nodes, i.e., $w(e) \in [\poly(n)]$, where we use the usual notation of $n := |V(G)|$ used throughout this paper. We use the $G[P]$ to denote the subgraph induced by vertices in the set $P\subset V(G)$. Given a subset $D \subseteq V(G)$, we denote with $G - D$ the subgraph resulting from removing all nodes $D$ and their incident edges from $G$. To simplify notation, we use $G - v$ instead of $G - \{v\}$ when $v \in V(G)$ is a node.

\medskip

\noindent\textbf{Rooted trees.} Let $T = (V(T), E(T))$ be a tree with a specially designated vertex $r$ called the root. The edges $E(T)$ are called \textbf{tree-edges}. If $\{u, v\}$ is an edge in a rooted tree and $u$ is closer to the root, then $u := \parent(v)$ is the \textbf{parent} of $v$ and $v$ is a \textbf{child} of $u$. Alternatively, given an edge $e = \{ u, \parent(u) \}$, we write $\mytop(e) = \parent(u)$ for the ``top endpoint'' (i.e., closer to the root) and $\bottom(e) = u$ for the ``bottom endpoint''. A node is a \textbf{leaf} if it has no children. A node $u$ is an \textbf{ancestor} or $v$ if the root-to-$v$ path contains $u$. The set of all ancestors of $v$ is denoted by $\anc(v)$ ($v$ included). Similarly, $u$ is a \textbf{descendant} of $v$ if $v$ is an ancestor of $u$, and we write this as $u \in \desc(v)$. Note that $\desc(u) \ni u \in \anc(u)$. The \textbf{depth} of $v$, denoted by $\depth(v)$, is the (hop-)distance between the root and $v$. The \textbf{subtree} at $u$, denoted by $\subtree(u)$, is the induced subgraph $T[\desc(u)]$. The \textbf{lowest common ancestor (LCA)} of nodes $u$ and $v$ is the (unique) node $\LCA(u, v) = w$ with the largest depth such that both $u$ and $v$ are in the subtree of $w$. A path $(v_1, v_2, \ldots, v_k)$ (where $v_i \in V(T)$) is \textbf{descending} if $v_{i+1}$ is a child of $v_i$ for all $i \in \{1, \ldots, k-1\}$ (i.e., it is a subpath of a root-to-leaf path).


\subsection{Heavy-light decomposition}

In this section, we review the well-known \emph{heavy-light decomposition} that decomposes a tree $T$ into ``HL-paths'' such that each root-to-leaf path in $T$ can be composed into at most $O(\log n)$ different HL-paths (e.g., see Lemma 5 of \cite{bhardwaj2019simple}).
\begin{definition}
  Given a rooted tree $T$, a heavy-light decomposition is a labeling of edges of $T$ where each $e \in E(T)$ is assigned a label of either \textbf{heavy} or \textbf{light} in the following way. Let $S(v) = |\desc(u)|$ be the number of descendants of $v$. For each non-leaf node $u$ find its child $v$ that maximizes $S(u)$ and label the edge $\{u, v\}$ ``heavy'' (breaking ties arbitrarily); all other edges are \textbf{light}.
\end{definition}
\begin{fact}\label{lemma:hl-has-logn-light-edges}
  Given any heavy-light decomposition of a rooted tree, every root-to-leaf path has at most $O(\log |V(T)|)$ light edges.
\end{fact}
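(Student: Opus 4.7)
The plan is to track how the subtree size $S(v) = |\desc(v)|$ evolves as we descend along a root-to-leaf path, and argue that each light edge we cross must at least halve this quantity. Specifically, I would first observe that if $\{u, v\}$ is a \emph{light} edge with $v$ a child of $u$, then by definition there exists some heavy child $v^\star$ of $u$ with $S(v^\star) \geq S(v)$. Since the subtrees rooted at distinct children of $u$ are disjoint and $u$ itself also counts toward $S(u)$, we get
\[
S(u) \;\geq\; 1 + S(v) + S(v^\star) \;\geq\; 2 S(v).
\]

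Next, I would note the monotonicity observation that for \emph{any} edge $\{u, v\}$ (heavy or light) with $v$ a child of $u$, we have $S(u) \geq S(v)$ simply because $\desc(v) \subseteq \desc(u)$. Combining these two facts, along any descending path from the root to a leaf, the quantity $S(\cdot)$ is non-increasing as we descend, and strictly drops by a factor of at least $2$ every time we traverse a light edge. Since $S$ starts at $|V(T)|$ at the root and ends at $1$ at the leaf, the number of light edges on the path is at most $\log_2 |V(T)| = O(\log |V(T)|)$, which is the desired bound.

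I do not anticipate any real obstacle here; this is a textbook argument and the only subtlety is being careful that the labeling rule (picking a heavy child maximizing $S$) directly yields the halving claim through the inequality $S(u) \geq S(v) + S(v^\star)$. No use of the CONGEST model, minor-aggregation framework, or any other machinery from the paper is needed — the statement is purely a combinatorial property of the decomposition rule given in the preceding definition.
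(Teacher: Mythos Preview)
Your argument is correct and is the standard textbook proof of this fact. The paper does not give its own proof here; it simply states the claim as a well-known Fact and points to the literature, so there is nothing to compare against beyond noting that what you wrote is exactly the classical halving argument one would expect.
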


\noindent We define some terminology used throughout the paper:
\begin{itemize}\setlength\itemsep{0em}
\item \textbf{HL-depth.} The HL-depth of a node $v$ is the number of light edges on the root-to-$v$ path. The HL-depth of a tree-edge $e$ is the HL-depth of its node farther away from the root, i.e., $\HLdepth(e) = \HLdepth(\bottom(e))$.

\item \textbf{HL-path.} An HL-path is a maximal ancestor-to-descendant path in $T$ where all edges have equal HL-depths. Note: an HL-path is a proper path in the tree and the \emph{edges} (but not the nodes due to the endpoints) of a tree can be (disjointly and completely) partitioned into HL-paths. Specifically, an HL-path includes its top-most light edge.

\item \textbf{HL-info.} The HL-info of a node $v$ consists of (1) the $T$-depth of $v$, and (2) the list $L_v$, where for each light edge $e$ on the root-to-$v$ path we store the $T$-depth and ID for both of $e$'s endpoints.
\end{itemize}

A simple but useful property of heavy-light decompositions is that they can be used as LCA labeling schemes, formalized below.
\begin{fact}\label{lemma:lca-labels}
  There exists a function that takes only the HL-infos of any two nodes $u$ and $v$, and computes the (ID and depth of the) LCA of $u$ and $v$.
\end{fact}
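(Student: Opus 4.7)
The plan is to give an explicit algorithm that, given only the HL-infos of $u$ and $v$, outputs the ID and depth of $\LCA(u,v)$, and then argue correctness by a case analysis tracking where the two root-paths first diverge. The key structural observation driving everything is that the root-to-$v$ path is canonically decomposed into an alternating sequence of HL-path segments and light edges, and this entire decomposition is recoverable from $L_v$: the light edges in $L_v$ (ordered by increasing depth) are exactly the transitions, and the HL-path containing a node $v$ with $|L_v|=j$ is the one entered via the $j$-th light edge (or the root HL-path if $j=0$).

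First I would compute the longest common prefix length $k$ of the lists $L_u$ and $L_v$, comparing entries as tuples of (ID, depth) of endpoints; this uses only HL-info. If $k\ge 1$, the two paths share the first $k$ light edges and therefore arrive together at the node $\bottom(e_k)$ on a common HL-path; if $k=0$, both paths lie initially on the root HL-path. Next I would compute the ``exit depth'' of each path from this shared HL-path: define $d_u^{\mathrm{exit}} := \depth(\mytop(e_{k+1}^u))$ if $|L_u|>k$ (i.e. $u$ leaves the shared HL-path via its next light edge), and $d_u^{\mathrm{exit}} := \depth(u)$ otherwise (i.e. $u$ itself lies on the shared HL-path); define $d_v^{\mathrm{exit}}$ analogously. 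Both values are present in the HL-infos.

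The output is then determined by $d^\star := \min(d_u^{\mathrm{exit}}, d_v^{\mathrm{exit}})$, which I claim equals $\depth(\LCA(u,v))$. The corresponding node is chosen from $\{u, v, \mytop(e_{k+1}^u), \mytop(e_{k+1}^v)\}$ as whichever attains $d^\star$ (ties are consistent because two exits at the same depth on one HL-path must be the same node). Correctness reduces to verifying that, along the shared HL-path, both root-paths descend through exactly the same sequence of heavy edges until the first of them exits (via a light edge or by terminating at $u$ or $v$); the deepest node reached in common is precisely this first exit point. This in turn follows from the fact that an HL-path is a descending path of heavy edges and each internal node has a unique heavy child, so as long as neither path has yet left via a light edge or terminated, they must be at the same node.

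The main (mild) obstacle is handling the boundary cases cleanly: $k=0$ when the lists are disjoint from the start, $L_u$ being a prefix of $L_v$ (or vice versa) when one node is an ancestor of the other, and $u=v$. In each case one checks directly that the formula above collapses to the right answer (LCA equals the root HL-path's point of first divergence in the first case; LCA equals the ancestor in the second; LCA $= u = v$ in the third). Since every quantity used, $\depth(u), \depth(v), e_{k+1}^u, e_{k+1}^v$, and $\mytop(\cdot), \bottom(\cdot)$ of entries of $L_u, L_v$, is directly read off from the HL-infos, the described procedure is indeed a function of the two HL-infos alone, establishing the claim.
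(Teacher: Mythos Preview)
The paper states this as a \textbf{Fact} without proof, treating it as a standard property of heavy-light decompositions used as LCA labeling schemes. Your proposal is correct and supplies precisely the constructive argument the paper omits: compute the longest common prefix $k$ of $L_u$ and $L_v$, then on the shared HL-path entered after the $k$-th common light edge (or the root HL-path if $k=0$) take the shallower of the two exit points. The case analysis is sound; in particular, the observation that two exit depths tying on a single HL-path must correspond to the same node (since an HL-path has one node per depth) cleanly handles the case where both paths leave via distinct light edges at the same parent. One cosmetic point: the HL-info as defined in the paper does not explicitly include the ID of $v$ itself, so when the LCA turns out to be $u$ or $v$ (the ancestor--descendant case) your function should be read as returning ``the LCA is the first/second input node'' rather than a raw ID; this is harmless in every use the paper makes of the fact, since the caller always knows the IDs of $u$ and $v$.
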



\subsection{Min-cut specifics: cut and cover values}\label{sec:mincut-prelims}

Following Dory et al.~\cite{dory2021distributed}, we formalize the notions of cut values, cover values, and 1-/2-respecting cuts that are used throughout this paper. In the following, suppose $T \subseteq G$ is a spanning tree of a weighted graph $G$:
\begin{itemize}\setlength\itemsep{0em}
\item \textbf{Cut values.} For $e, f \in E(T)$ we define the \textbf{cut value} $\Cut_{T, G}(e, f)$ as the the sum of weights of all edges $g \in E(G)$ that cross the unique cut which cuts exactly $\{e, f\}$ among all tree edges $E(T)$. In other words, the sum of weights of all edges $\{u, v\} \in E(G)$ whose unique $T$-path between $u$ and $v$ contains exactly one of $\{e, f\}$. $\Cut_{T, E}(e)$ is defined analogously: sum of weights of all edges $\{u, v\} \in E(G)$ whose unique $T$-path between $u$ and $ v$ contains $e$. 

\item \textbf{Cover values.} For $e, f \in E(T)$ we define the \textbf{cover value} $\Cov_{T, G}(e, f)$ as the sum of weights of all edges $\{u, v\} \in E(G)$ such that the unique $T$-path between $u$ and $v$ covers both $e$ and $f$. We also define $\Cov_{T, G}(e) := \Cov_{T, G}(e, e)$.

\item \textbf{1- and 2-respecting cuts.} The cuts corresponding to $\Cut_{T, G}(e)$ and $\Cut_{T, G}(e, f)$ are called 1-respecting and 2-respecting cuts (with respect to a tree $T$), respectively. The 1-respecting and 2-respecting min-cut values are defined as $\min_{e \in E(T)} \Cut_{T, G}(e)$ and $\min_{e \in E(T), f \in E(T)} \Cut_{T, G}(e, f)$, respectively.
\end{itemize}

We often drop the subscript when $G$ and $T$ are apparent from the context. We point out a few useful observations about these values. The first one is immediate, while the second is an important observation of Mukhopadhya and Nanongkai~\cite{mukhopadhyay2020weighted}.
\begin{fact}\label{lemma:2-cut-vs-cov}  
  Given a spanning tree $T \subseteq G$, for all $e, f \in E(T)$ we have that $\Cut(e) = \Cov(e)$ and $\Cut(e, f) = \Cov(e) + \Cov(f) - 2 \Cov(e, f)$.
\end{fact}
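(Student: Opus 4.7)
The plan is to unpack the definitions and use an inclusion-exclusion argument over which of $\{e, f\}$ each edge of $G$ has on its unique tree-path.

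For the first equality $\Cut(e) = \Cov(e)$, I would just note that it is immediate from definitions: $\Cut_{T,G}(e)$ sums the weights of edges $\{u,v\}\in E(G)$ whose unique $T$-path between $u$ and $v$ contains $e$, while $\Cov_{T,G}(e) := \Cov_{T,G}(e,e)$ sums the weights of edges $\{u,v\}$ whose unique $T$-path covers both $e$ and $e$, which is the same condition.

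For the second equality, fix $e, f \in E(T)$. For every edge $g = \{u,v\} \in E(G)$, let $P_g$ denote the unique path from $u$ to $v$ in $T$, and classify $g$ into four disjoint types according to the pair $(\mathbbm{1}[e \in P_g], \mathbbm{1}[f \in P_g]) \in \{0,1\}^2$. The key observation is then:
\begin{itemize}
\item $\Cut(e,f)$ is exactly the total weight of edges of type $(1,0)$ plus type $(0,1)$ (those whose $T$-path contains exactly one of $\{e,f\}$);
\item $\Cov(e)$ is the total weight of types $(1,0)$ and $(1,1)$;
\item $\Cov(f)$ is the total weight of types $(0,1)$ and $(1,1)$;
\item $\Cov(e,f)$ is the total weight of type $(1,1)$.
\end{itemize}
From here, the identity $\Cut(e,f) = \Cov(e) + \Cov(f) - 2\Cov(e,f)$ is just the arithmetic of inclusion-exclusion: adding $\Cov(e)$ and $\Cov(f)$ counts $(1,0)$ and $(0,1)$ once each but $(1,1)$ twice, so subtracting $2\Cov(e,f)$ removes the type-$(1,1)$ contribution entirely, leaving exactly $\Cut(e,f)$.

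I do not anticipate any real obstacle here; the fact is a definitional lemma and there are no edge cases beyond noting that, because $T$ is a spanning tree of $G$, the unique $u$-$v$ path $P_g$ is well defined for every $g = \{u,v\} \in E(G)$ (including when $g$ itself lies in $E(T)$, where the $T$-path is just the single edge $g$, so the classification above still applies).
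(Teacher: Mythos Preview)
Your proposal is correct; the inclusion-exclusion argument by classifying each edge $g$ according to $(\mathbbm{1}[e\in P_g],\mathbbm{1}[f\in P_g])$ is exactly the right way to unpack the definitions, and there are indeed no edge cases beyond what you noted. The paper itself states this as a \emph{fact} without proof (treating it as immediate from the definitions of $\Cut$ and $\Cov$), so your write-up actually supplies more detail than the paper does.
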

\begin{fact}\label{lemma:interesting-path-vs-cov}
  Given a spanning tree $T \subseteq G$, if $\Cut(e, f)$ is smaller than any 1-respecting cut, then $\Cov(e, f) > \Cov(e) / 2$.
\end{fact}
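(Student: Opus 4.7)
The plan is to prove this by a direct substitution using Fact \ref{lemma:2-cut-vs-cov} together with a well-chosen instance of the hypothesis. The key observation is that ``smaller than any 1-respecting cut'' in particular means smaller than the 1-respecting cut determined by $f$ itself, i.e., $\Cut(e,f) < \Cut(f)$. This is the natural choice because $\Cut(f) = \Cov(f)$ by Fact \ref{lemma:2-cut-vs-cov}, so plugging in the cut/cover identity for $\Cut(e,f)$ will cause $\Cov(f)$ to cancel on both sides, leaving exactly an inequality between $\Cov(e)$ and $\Cov(e,f)$.

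More concretely, I would carry out the following short sequence of steps. First, invoke Fact \ref{lemma:2-cut-vs-cov} to rewrite the two-edge cut as
\[
\Cut(e,f) \;=\; \Cov(e) + \Cov(f) - 2\,\Cov(e,f),
\]
and the single-edge cut through $f$ as $\Cut(f) = \Cov(f)$. Second, apply the hypothesis with the specific 1-respecting cut determined by $f$, yielding $\Cut(e,f) < \Cut(f) = \Cov(f)$. Third, combine the two to obtain
\[
\Cov(e) + \Cov(f) - 2\,\Cov(e,f) \;<\; \Cov(f),
\]
and cancel $\Cov(f)$ from both sides to conclude $\Cov(e) < 2\,\Cov(e,f)$, i.e., $\Cov(e,f) > \Cov(e)/2$, which is the desired inequality.

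There is essentially no obstacle here: the claim is a one-line algebraic consequence of the cut-cover identity, so the entire ``difficulty'' is simply recognizing that the right single-edge cut to compare against is $\Cut(f)$ rather than $\Cut(e)$ (the choice of $\Cut(e)$ would give the symmetric statement $\Cov(e,f) > \Cov(f)/2$, which is equally true but not what is stated). No case analysis, no tree-structural argument, and no properties of weights beyond non-negativity are needed; the proof works verbatim for arbitrary non-negative edge weights. Consequently I would write this up as a two- or three-line proof immediately following the statement.
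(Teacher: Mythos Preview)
Your proposal is correct and matches the paper's proof essentially line for line: both compare $\Cut(e,f)$ against the particular 1-respecting cut $\Cut(f)=\Cov(f)$, substitute the identity from Fact~\ref{lemma:2-cut-vs-cov}, and cancel $\Cov(f)$ to obtain $\Cov(e) < 2\,\Cov(e,f)$.
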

\begin{proof}
  Since $\Cut(e, f)$ is smaller than any 1-respecting cut, we have that $\Cut(e, f) < \Cut(f)$. Using, $\Cut(e, f) = \Cov(e) + \Cov(f) - 2\Cov(e, f)$ (\Cref{lemma:2-cut-vs-cov}), we get $\Cov(e) + \Cov(f) - 2 \Cov(e, f) = \Cut(e, f) < \Cov(f)$. Therefore, we get $\Cut(e) < 2 \Cov(e, f)$.
\end{proof}





\subsection{The distributed Minor-Aggregation model}\label{sec:prelim-minor-aggregation}

In this section, we describe the Minor-Aggregation model. We first define aggregation operators, then give a formal description of the Minor-Aggregation model as defined in \cite{goranci2022universally}; we extend the model in \Cref{sec:extending-minor-aggregation}. Since we are designing distributed graph algorithms, throughout this paper we assume some underlying undirected graph $G$ called the \emph{communication network} or the \emph{network topology} over which we run our distributed algorithms. We will denote with $n := |V(G)|$ the number of nodes of the network. Moreover, we will typically make use of the $\tl{O}$-notation which hides $\poly(\log n)$ factors as the theory of low-congestion shortcuts (and by extension, the Minor-Aggregation model) is generally tight only up to polylogarithmic factors; which is still a significant improvement over the polynomial overhead factors previously present in pre-shortcut algorithms.

\subsubsection{Aggregation operators}

Aggregations are simple functions $\bigoplus$ (e.g., sum or max) that produce an aggregate value $\bigoplus_{i=1}^k x_i$ from a sequence of values; we formalize the notion below.
\begin{definition}[Aggregation operator]
  An aggregation operator $\bigoplus$ takes two $B$-bit messages $m_1, m_2$ for some $B = \tl{O}(1)$ and combines them into a new $B$-bit message $m_1 \bigoplus m_2$. Furthermore, given $k$ messages $m_1, \ldots, m_k$, their $\bigoplus$-aggregate $\bigoplus_{i=1}^k m_i$ is the message resulting from an arbitrary sequence of operations that takes any two messages $m', m''$, deletes them from the sequence, and replaces them with a single $m' \bigoplus m''$, repeating until a single message remains.
\end{definition}
Most commonly, aggregations will be commutative and associative (e.g., sum or max), which makes the value $\bigoplus_{i=1}^n m_i$ unique. For example, the sum-aggregation of $x_1, x_2, \ldots, x_k$ is simply $x_1 + x_2 + \ldots + x_k$. However, it is often very convenient to gain additional flexibility by allowing more general aggregation operators where the output might depend on the execution sequence. This allows us to use any \emph{mergeable $B$-bit sketch}~\cite{agarwal2013mergeable} as an aggregation operator, giving us a way of computing statistics such as approximate heavy hitters, or approximate quantiles, even in the deterministic setting. The following example illustrates this point on the well-known ``heavy hitters'' sketching algorithm described by Misra and Gries~\cite{misra1982finding}.

\begin{example}[Deterministic Approximate Heavy Hitters]\label{example:heavy-hitters} 
  Given a list of $k \le \exp(\poly(\log n))$ objects $e_1, \ldots, e_k$ from a $\exp(\poly(\log n))$-sized universe with multiplicities $w(e_i)$, we define the \emph{frequency} $f(x)$ of an object $x$ as the sum of weights $f(x) := \sum_{i} \1{e_i = x} \cdot w(e_i)$ across all appearances of $x$ in the list. Let $W := \sum_{i=1}^k w(e_i)$ be the total weight. For any integer $h > 0$, there exists a $\tl{O}(h)$-bit deterministic aggregation operator $\bigoplus$, where $\bigoplus_{i=1}^k e_i$ returns a list of $h$ elements (and their estimated frequencies), such that (1) each object $x$ with $f(x) > \frac{2}{h} W$ is included in the list, and (2) no object $e$ with $f(e) \le \frac{1}{h} W$ is included in the list.
\end{example}


\subsubsection{Minor-Aggregation model: an interface for distributed algorithms}

In this section, we formally define the Minor-Aggregation model introduced in \cite{goranci2022universally}, a powerful interface that facilitates simple design of ultra-fast distributed algorithms. Algorithms in the Minor-Aggregation model can be compiled-down to the standard CONGEST message-passing settings~\cite{goranci2022universally} (see \Cref{sec:det-ops-and-simulation}) or to parallel settings (e.g., as in \cite{rozhon2022undirected}).


\begin{definition}[Distributed Minor-Aggregation Model]\label{def:aggregation-congest}
  We are given a connected undirected graph $G = (V, E)$. Both nodes and edges are individual computational units (i.e., have their own processor and private memory). Communication occurs in synchronous rounds. Initially, nodes only know their unique $\tl{O}(1)$-bit ID and edges know the IDs of their endpoints. Each round consists of the following three steps (in that order).
  \begin{itemize}    
  \item \textbf{Contraction step.} Each edge $e$ chooses a value $c_e = \{\bot, \top\}$. This defines a new \emph{minor network} $G' = (V', E')$ constructed as $G' = G / \{ e : c_e = \top \}$, i.e., by contracting all edges with $c_e = \top$ and self-loops removed. Vertices $V'$ of $G'$ are called supernodes, and we identify supernodes with the subset of nodes $V$ it consists of, i.e., if $s \in V'$ then $s \subseteq V$.

  \item \textbf{Consensus step.} Each node $v \in V$ chooses a $\tl{O}(1)$-bit value $x_v$. For each supernode $s \in V'$, we define $y_s := \bigoplus_{v \in s} x_v$, where $\bigoplus$ is any pre-defined aggregation operator. All nodes $v \in s$ learn $y_s$.

  \item \textbf{Aggregation step.} Each edge $e \in E'$, connecting supernodes $a \in V'$ and $b \in V'$, learns $y_a$ and $y_b$ and chooses two $\tl{O}(1)$-bit values $z_{e, a}, z_{e, b}$ (i.e., one value for each endpoint). For each supernode $s \in V'$, we define an aggregate of its incident edges in $E'$, namely $\bigotimes_{e \in \text{incidentEdges(s)}} z_{e, s}$ where $\bigotimes$ is some pre-defined aggregation operator. All nodes $v \in s$ learn the aggregate value (they learn the same aggregate value, a non-trivial assertion if there are many valid aggregates).
  \end{itemize}
\end{definition}

\noindent\textbf{Operating on minors.} A particularly appealing feature of the Minor-Aggregation model is that the framework immediately allows any black-box algorithm to run on a minor of a graph rather than on the original graph (due to contractions). One notable difference from CONGEST that makes this possible is that nodes in the Minor-Aggregation model do not have a list of their neighbors\footnote{Moreover, it is not hard to see that a model which allows contractions and gives nodes a list of their neighbors cannot be simulated in CONGEST with $o(n)$ round blowup, even for graphs of small diameter.}. The following corollary is immediate from the definition.
\begin{corollary}\label{corollary:computation-on-minors}
  Any $\tau$-round Minor-Aggregation algorithm on a minor $G' = G / F$ of $G = (V, E)$ can be simulated via a $\tau$-round Minor-Aggregation algorithm on $G$. Initially, each edge $e \in E$ needs to know whether $e \in F$ or not. Upon termination, each node $v$ in $G$ learns all the information that the $G'$-supernode $v$ was contained in learned.
\end{corollary}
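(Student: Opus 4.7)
The plan is to exploit the algebraic identity $G'/F' = G/(F \cup F')$, valid for any $F' \subseteq E(G') \cong E \setminus F$: contracting $F'$ inside $G'$ yields the same minor as contracting $F \cup F'$ directly in $G$. This lets us map each round of the $G'$-algorithm to exactly one round of a $G$-algorithm by augmenting the per-round contraction set with $F$. Concretely, if in round $i$ the $G'$-algorithm chooses $c_e = \top$ for $e$ in some $F'_i \subseteq E(G')$, the simulating $G$-algorithm chooses $c_e = \top$ for all $e \in F \cup F'_i$ (using the initial $F$-membership bit at every edge). The resulting minors coincide, and each supernode $s \in V(G'/F'_i)$ corresponds bijectively to the $V$-supernode $\bigcup_{n \in s} n \subseteq V$ of the $G$-simulation.

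What remains is to verify that the consensus and aggregation steps produce matching values. For consensus, I would maintain the invariant that every $V$-node $v$ holds a local copy of the state of its enclosing $G'$-node $n \ni v$; in particular, $v$ knows the value $x_n$ that $n$ would contribute in the current round. To aggregate correctly across a $G'$-supernode $s$, the $V$-node designated as the \emph{leader} of its $G'$-node sets $x_v := x_n$, and non-leaders output an identity/null element, so that $\bigoplus_{v \in s} x_v = \bigoplus_{n \in s} x_n = y_s$. Leaders can be fixed once at the start via a single preliminary consensus that contracts only $F$ and applies a min-ID aggregator; each $V$-node then flags itself as leader iff its own ID matches the minimum ID in its $G'$-node. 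For the aggregation step, every edge in $E \setminus F$ is in one-to-one correspondence with an edge of $G'$, so it can emit the same $z_{e,a}, z_{e,b}$ values that the $G'$-algorithm would, and the incident-edge aggregation at every supernode matches automatically.

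The only real obstacle is the mismatch between ``one value per $G'$-node'' in the algorithm being simulated and ``one value per $V$-node'' in the host model. The leader-indicator trick above dispatches this with $O(1)$-bit overhead per aggregator, by appending a leader bit and a null symbol and requiring the aggregator to combine only non-null operands; this preserves the round count (with the initial leader election absorbed into the setup). Upon termination, since every $V$-node has been maintaining the full state of its enclosing $G'$-node throughout, each $v \in V$ knows whatever its $G'$-supernode learned, which is exactly the guarantee asserted by the corollary.
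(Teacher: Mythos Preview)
Your argument is correct and is precisely what the paper has in mind; the paper simply declares the corollary ``immediate from the definition'' without spelling out a proof, and your identity $G'/F' = G/(F\cup F')$ together with the leader/null trick for the consensus step is the natural way to make it rigorous. One minor remark: the preliminary leader-election round technically makes the simulation $\tau + O(1)$ rather than exactly $\tau$ rounds, which is immaterial in the paper's $\tl{O}(1)$ regime but worth being aware of if you care about the exact constant.
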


\noindent \textbf{Node-disjoint scheduling.} We can run simultaneous algorithms on connected node-disjoint subgraphs.

\begin{corollary}\label{lemma:node-disjoint-scheduling}
  Let $G$ be an undirected graph. Given any $\tau$-round Minor-Aggregation algorithms $A_1, \ldots, A_k$ running on node-disjoint and connected subgraphs (of $G$) $H_1, H_2, \ldots, H_k$, we can run $A_1, \ldots, A_k$ simultaneously within a $\tau$-round Minor-Aggregation algorithm $A$ that runs on $G$.
\end{corollary}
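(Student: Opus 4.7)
The plan is to construct a single $\tau$-round Minor-Aggregation algorithm $A$ on $G$ whose $t$-th round jointly performs the $t$-th round of every $A_i$. The key structural fact is that, since $H_1,\ldots,H_k$ are node-disjoint and connected, any contractions performed inside one $H_i$ do not interact with contractions inside any other $H_j$. Concretely, in round $t$ of $A$ every edge $e \in E(H_i)$ follows $A_i$'s choice of $c_e$, while every remaining edge of $G$ (cross-edges between distinct $H_i$'s and edges incident to nodes outside $\bigcup_i V(H_i)$) sets $c_e = \bot$. The resulting minor $G'$ then satisfies the following invariant: every supernode of $G'$ either lies entirely inside a single $V(H_i)$ or is an inactive singleton, and restricted to $V(H_i)$ the supernode partition of $G'$ coincides exactly with the supernode partition of the minor $H_i'$ that $A_i$ would produce in its own $t$-th round. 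This is the only structural fact I need; it justifies simulating all $A_i$'s simultaneously with no collisions in the contraction step.

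For the consensus and aggregation steps the potential obstacle is that different $A_i$'s may use different aggregation operators in the same round of $A$. I would handle this with a tagging trick. In the consensus step, each node $v \in V(H_i)$ proposes the pair $(i, x_v^{(i,t)})$, where $x_v^{(i,t)}$ is the value $A_i$ prescribes in round $t$, and every node outside $\bigcup_i V(H_i)$ proposes a null symbol. The round's aggregator $\bigoplus$ is then defined on tagged values so that it (i) discards null symbols and (ii) combines two tag-$i$ values using $A_i$'s round-$t$ aggregator $\bigoplus_i^t$; combining values with different non-null tags never arises within a supernode because, by the invariant above, each supernode carries values of at most one tag. The aggregation step is handled identically: each edge $e \in E(H_i)$ proposes the tagged pair dictated by $A_i$, every other edge proposes null, and the same combiner is used. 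Since $k \le \poly(n)$ the tag costs only $\tl{O}(1)$ bits, so the message-size budget of \Cref{def:aggregation-congest} is preserved.

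The main (small) point to verify is that the combined aggregator satisfies the ``all nodes in a supernode learn the same aggregate'' requirement of \Cref{def:aggregation-congest}; this is automatic, because within a supernode the non-null tag is unique, so the combined aggregator collapses to $\bigoplus_i^t$, which already has this property by assumption on $A_i$. Iterating these three steps for $\tau$ rounds yields the desired simulation: at termination, every node $v \in V(H_i)$ has learned exactly what it would have learned had $A_i$ been executed in isolation on $H_i$, which is what the corollary asserts.
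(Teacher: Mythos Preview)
Your proof is correct. The paper states this corollary without proof, treating it as immediate from \Cref{def:aggregation-congest}; what you have written is exactly the natural way to make that immediacy precise (set non-$H_i$ edges to $c_e=\bot$, tag values by subgraph index, and let the round's aggregator dispatch to $\bigoplus_i^t$ on matching tags while absorbing nulls). One cosmetic point: your combined aggregator should technically also be defined on inputs with differing non-null tags so that it is a total function, but since that case provably never arises within a supernode you can set it arbitrarily there without affecting anything.
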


\noindent\textbf{Distributed storage.} Distributed algorithms often require the computation of various global structures like spanning trees. Storing such structures on any single node is prohibitively expensive (would require a linear number of rounds on some graphs). Therefore, such global structures are stored locally---each node remembers its own part. We specify how different structures are stored below (a notable missing entry is the storage of virtual graphs, which is specified in \Cref{sec:virtual-nodes}).
\begin{itemize}  
\item We distributedly store a \textbf{node vector} $x \in \R^V$ by storing the value $x_v$ in the node $v \in V$. Similarly, given an \textbf{edge vector} $x \in \R^E$ we store the value $x_e$ in the edge $e$ (we remind the readers that edges are computational units in the Minor-Aggregation model).

\item We distributedly store a \textbf{subgraph} $H \subseteq G$ of the communication network $G$ (where $G$ is the communication network) by storing distributedly storing the indicator node and indicator edge vectors $x_v := \1{v \in V(H)} \in \{0, 1\}^V$ and $y_e = \1{e \in E(H)} \in \{0, 1\}^E$.

\item A \textbf{rooted tree} $T = (V, E_T)$ with a root $r \in V$ is distributedly stored if (1) the unoriented version of $T$ is stored as a subgraph, and (2) each edge $e \in E_T$ knows which endpoint is closer to the root $r$ in $T$.

\end{itemize}

\noindent\textbf{Input/Output.} When stating a result about an algorithm that requires input and produces output, we implicitly mean that all inputs are assumed to be distributedly stored before the algorithm is being run, and the output will be distributedly stored upon termination. For example, for the 2-respecting cut problem, we expect the weights of $E(G)$ to be stored as an edge vector and the spanning tree $T$ to be stored as a subgraph.

\medskip

\noindent\textbf{Simulation in CONGEST.} An algorithm in the Minor-Aggregation model can be efficiently simulated in the standard CONGEST model if one can efficiently construct shortcuts~\cite{goranci2022universally}. We will formally state the deterministic simulation result in \Cref{sec:det-ops-and-simulation}.

\subsection{Tree Packing}\label{sec:tree-packing}
\begin{theorem} [Implicit in prior work~\cite{daga2019distributed, ghaffari2016mst, thorup2007fully}]
  \label{thm:treePacking}
  Let $G$ be any weighted $n$-node graph $G$ where the minimum cut has value $\lambda$ and let $\kappa > 0$ be a fixed constant. There is a randomized $\poly(\log n)$-round Minor-Aggregation algorithm that computes and distributedly stores a collection of spanning trees $T_1, T_2, \ldots, T_{\Theta(\log n)}$, such that with probability at least $1 - 1 / n^{\kappa}$ we have: for each cut $C$ of $G$ that has value at most $1.05\lambda$, the cut $C$ 2-respects at least one tree $T_i$ from the collection. That is, there exists $i$ such that at most $2$ edges of $T_i$ are in the cut $C$.
\end{theorem}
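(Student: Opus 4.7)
The plan is to follow the standard Karger--Thorup paradigm from the sequential and prior distributed literature, and then argue that each component admits a $\poly(\log n)$-round Minor-Aggregation implementation. Concretely, I would (i) compute a $(1+\eps)$-approximation $\tl\lambda$ of the min-cut to use as a scaling parameter, (ii) build a Thorup-style greedy tree packing consisting of $k = \poly(\log n)$ spanning trees $\mcT_1, \ldots, \mcT_k$, and (iii) subsample $\Theta(\log n)$ trees from this packing uniformly at random.

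For step (ii), the construction proceeds in $k$ iterations. Each edge $e$ maintains a load $\ell_e$ (initially $0$); in iteration $i$, we set effective weights $w_e^{(i)} := w_e \cdot \exp(\beta \ell_e / \tl\lambda)$, compute an MST $\mcT_i$ under these weights, and increment $\ell_e$ by $1$ for every $e \in E(\mcT_i)$. Each MST is computed by a Boruvka-style Minor-Aggregation algorithm: in each of $O(\log n)$ phases, every supernode uses the aggregation step (with a min-aggregator over outgoing edges tagged by endpoint) to learn its cheapest outgoing edge, and then the contraction step contracts all such edges. This gives an $O(\log n)$-round MST subroutine, hence $O(k \log n) = \poly(\log n)$ rounds total for the packing. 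The trees are distributedly stored by letting each edge remember, for every $i \in [k]$, whether $e \in E(\mcT_i)$ and (if so) which endpoint is the parent, with the rooting propagated in $O(\log n)$ rounds via a standard tree-rooting primitive.

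For step (iii), the sampling is local: we pick $\Theta(\log n)$ indices $i_1, \ldots, i_{\Theta(\log n)}\in [k]$ using a shared random string (broadcast in $\poly(\log n)$ rounds via aggregation along a BFS-like tree, or derived from $\poly(\log n)$ random bits consensus-aggregated from one node) and each edge retains only its membership bits for those indices. The correctness follows from Thorup's packing guarantee combined with Karger's cut-counting argument: for any fixed cut $C$ of weight at most $1.05\lambda$, at least a constant fraction of the trees in the packing have at most $2$ edges in $C$ (since the packing's total ``tree-load'' on $C$ is close to $|C| / \tl\lambda \le 2.1$, and Markov plus averaging forces most trees to $2$-respect $C$). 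Hence a single random $\mcT_{i_j}$ fails to $2$-respect $C$ with constant probability $\le 1 - c$, and $\Theta(\log n)$ independent samples fail jointly with probability $\le n^{-(\kappa + O(1))}$. Karger's bound of $n^{O(1)}$ near-minimum cuts of value $\le 1.05 \lambda$ then lets us union-bound to get error $\le n^{-\kappa}$.

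The main obstacles, given that the packing/cut-counting analysis is essentially cited from prior work, are entirely on the distributed implementation side: (a) verifying that the weight updates $w_e^{(i)}$ can be maintained purely locally at each edge, which holds because edges are first-class compute units in the Minor-Aggregation model and only need their own load; (b) ensuring the approximate min-cut $\tl\lambda$ used for scaling can itself be obtained in $\poly(\log n)$ Minor-Aggregation rounds, for which the standard $(1+\eps)$-approximation of Ghaffari--Haeupler transports into this model; and (c) handling the shared randomness without blowing up rounds, which follows since only $\poly(\log n)$ bits are needed and can be aggregated from a fixed root through consensus steps. None of these require new ideas beyond repackaging the known primitives within the Minor-Aggregation interface.
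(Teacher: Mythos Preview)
Your plan has a genuine gap in step (ii). You assert that $k = \poly(\log n)$ iterations of the greedy/MWU tree packing suffice to make the average tree-load on every near-minimum cut at most $2.1$, but this fails when the min-cut value $\lambda$ is large. Thorup's greedy-packing analysis needs $\Theta(\lambda \log m)$ iterations (treating the weighted graph as a multigraph), and your exponential-weight variant does not sidestep this: with only $k \ll \lambda$ integral trees there is no reason the packing approximates the optimal fractional one, so the load bound you invoke (``close to $|C|/\tl\lambda \le 2.1$'') is unjustified. Concretely, if $\lambda = n^{\Theta(1)}$ your step (ii) runs for polynomially many Minor-Aggregation rounds, not $\poly(\log n)$.

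The paper's proof addresses exactly this by inserting Karger's random sparsification before the packing: when $\bar\lambda = \Omega(\log n)$, sample each (multi-)edge with probability $p = \Theta(\log n/\bar\lambda)$ to obtain a subgraph $H$ whose min-cut is $\Theta(\log n)$ and in which every $1.05$-near-min cut of $G$ remains a $1.1$-near-min cut of $H$; only then is Thorup's greedy packing run, now legitimately for $O(\log^2 n)$ iterations. Your step (i) computes $\tl\lambda$ but uses it merely as a scale inside the exponent; you need to use it to \emph{sparsify}. The remainder of your outline---Bor\r{u}vka-style MST per iteration, local load bookkeeping at edges, broadcasting shared randomness---is fine, and your step (iii) (subsampling $\Theta(\log n)$ trees and union-bounding over the $n^{O(1)}$ near-min cuts) is a valid alternative to citing Thorup's ``at least one tree'' guarantee directly, but it cannot compensate for the missing sparsification.
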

\begin{proof}[Proof Sketch]
  We note that this result is implicit in prior work, e.g., in the work of Daga et al.~\cite{daga2019distributed} by replacing the $\tl{O}(D+\sqrt{n})$-round $(1+\eps)$-minimum-cut approximation and minimum spanning tree algorithms with the immediate  $\tl{O}(1)$-round Minor-Aggregation algorithms explained in Ghaffari and Haeupler~\cite{ghaffari2016mst}. For completeness, we provide a brief proof sketch, without delving into the smaller details, which are explained in \cite{daga2019distributed}. 

  We treat any weighted graph as an unweighted graph with multiplicities, by replacing each edge $e$ with integer weight $w(e)$ with $w(e)$ parallel edges. We first compute a $1.01$ approximation $\bar{\lambda}$ of the min-cut value $\lambda$ using the randomized algorithm in $\poly(\log n)$ Minor-Aggregation rounds. We have two cases: 

  \begin{itemize}
  \item[(A)] If $\bar{\lambda} = O(\log n)$, then we perform an $I$-iteration greedy minimum spanning tree packing, where $I=2\bar{\lambda} \log m = O(\log^2 n)$.  That is, for each iteration $i\in [1, I]$, we set the cost $c(e)$ of each edge $e$ equal to the number of trees $T_1$, \dots, $T_{i-1}$ that contain edge $e$, and we compute a minimum-cost spanning tree using the MST algorithm $\poly(\log n)$ Minor-Aggregation rounds. This tree is recorded as $T_i$ in our collection and we proceed to the next iteration. Thorup~\cite{thorup2007fully} shows that this tree packing satisfies the desired property (in fact for an even wider range of cuts): namely, for every cut $C$ of $G$ that has value at most $1.1\lambda$, where $\lambda$ denotes the minimum cut value in $G$, at least one tree $T_i$ in the collection $2$-respects cut $C$.
  \item[(B)] Suppose that $\bar{\lambda} = \Omega(\log n)$. In this case, applying the greedy tree packing approach of (A) directly would require $O(\lambda \log n)$ iterations and would thus increase the round complexity to $\tl{O}(\lambda)$ rounds. To circumvent this, we can use a standard random sampling idea of Karger. Let $p = C\log n/\bar{\lambda}$, where $C$ is sufficiently large constant. Sample each edge with probability $p$, and let $H$ be the spanning subgraph that includes only sampled edges. By Karger's result~\cite{karger1994random}, it is known that, with high probability, $H$ has the min-cut value of at least $(1\pm 0.01)\lambda p$, and moreover, every cut $C$ of $G$ with value at most $1.05$, the value of the same cut in $H$ is at most $(1.05+0.01) \lambda p$. Hence, any $1.05$-minimum cut of $G$ remains a $1.1$-minimum cut of $H$. Furthermore, $H$ now has minimum cut value $O(\lambda p) = O(\log n)$ which makes it amenable to the greedy tree packing approach of (A), while keeping the $\poly(\log n)$ round complexity. We perform the $I$-iteration greedy tree packing of (A) on $H$, in $\poly(\log n)$ Minor-Aggregation rounds. Every cut $C$ of $G$ that is $1.05$-minimum cut of $G$ remains a $1.1$-minimum cut of $H$. Hence, by the result of Thorup~\cite{thorup2007fully}, at least one tree $T_i$ in the collection $2$-respects cut $C$. \qedhere
\end{itemize}
\end{proof}

\section{Extending the Minor-Aggregation Model}\label{sec:extending-minor-aggregation}

In this section, we extend the Minor-Aggregation model in two ways. First, we show how to do computations when virtual nodes and edges are added to the topology. Second, we show how to derandomize ubiquitous primitives like heavy-light decompositions and subtree sums.

\subsection{Virtual nodes}\label{sec:virtual-nodes}

Virtual graphs allow us to create a logical network $G\virt$ which can be implemented by only performing communication/computations on some underlying graph $G \subseteq G\virt$. For example, we might want to add a virtual node $v\virt$ (and its neighboring edges) to a graph $G$ and simulate computation in $G + v\virt$ without actually having $v\virt$ in the underlying network graph itself. Naturally, this simulation will inherently have some overhead; in this section, we show that the (multiplicative) overhead of adding $\beta$ virtual nodes and arbitrarily connecting them (to the rest of the graph or among themselves) is only $O(\beta + 1)$.


\begin{definition}\label{def:virtual-graph-extension}
  A virtual graph $G\virt$ extending $G = (V, E)$ is a graph whose node set can be partitioned into $V$ and a set of so-called \emph{virtual nodes} $V\virt$, i.e., $V(G\virt) = V \sqcup V\virt$. We say $G\virt$ has at most $\beta$ virtual nodes (as an extension of $G$) if $|V\virt| \le \beta$. Furthermore, each edge of $E(\virt)$ adjacent to at least one virtual node is called a \emph{virtual edge}.
\end{definition}

\noindent\textbf{Distributed storage of virtual graphs.} A virtual graph $G\virt$ (extending $G$) is distributedly stored in $G$ in the following way. All nodes are required to know the list of all (IDs of) virtual nodes. A virtual edge connecting a non-virtual $u$ and a virtual $v$ is only stored in $u$ (other nodes do not need to know about its existence). A virtual edge between two virtual nodes is required to be known by all nodes. 

\medskip
\noindent\textbf{Simulations on virtual graphs.} In a nutshell, we can add $\beta$ many virtual (arbitrarily interconnected) nodes to any graph $G$ and still simulate any Minor-Aggregation algorithm on the virtual graph with a $O(\beta + 1)$ blowup in the number of rounds.

\begin{theorem}\label{thm:simulating-virtual-nodes}
  Suppose $A\virt$ is a (deterministic) $\tau$-round Minor-Aggregation algorithm on a virtual graph $G\virt$ extending $G$, where $G$ is connected and the extension has at most $\beta$ virtual nodes. Any such $A\virt$ in $G\virt$ can be simulated with a (deterministic) $\tau \cdot O(\beta + 1)$-round Minor-Aggregation algorithm in $G$. Upon termination, each non-virtual node $v \in V(G)$ learns all information learned by $v$ and all virtual nodes.
\end{theorem}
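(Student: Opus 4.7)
The plan is to simulate $A\virt$ round-by-round, where a single round of $A\virt$ is simulated by $O(\beta + 1)$ consecutive rounds of Minor-Aggregation on $G$. The invariant I would maintain throughout is that every real node $u \in V(G)$ holds an identical copy of the complete state of all $\beta$ virtual nodes and of all virtual-virtual edges. This holds initially by the distributed-storage conventions of \Cref{sec:virtual-nodes} and is preserved because $A\virt$ is deterministic: every real node simulates each virtual node's local computation in lockstep and arrives at the same conclusions as long as it sees the same inputs.

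First I would handle the contraction step of a given round of $A\virt$. Real-real edges choose $c_e$ just as in the underlying $G$; each real-virtual edge $\{u, v_{\text{virt}}\}$ is stored at the real endpoint $u$, which decides $c_e$ locally using its own state together with the shared state of $v_{\text{virt}}$; virtual-virtual contractions are decided identically by every real node from the shared virtual state. The resulting contractions define the minor $G'\virt$, whose supernodes can be viewed as connected components of $G' := G / F_{\text{real-real}}$ (the minor of $G$ under the real-real contractions alone) glued together through ``virtual hubs,'' i.e., virtual nodes whose contracted incident edges reach several components of $G'$.

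The main work is then to compute, for each supernode $s$ of $G'\virt$, the consensus value $y_s$ and the edge-aggregate at $s$. Writing $s = s_{\text{real}} \sqcup s_{\text{virt}}$, the virtual contribution $\bigoplus_{v \in s_{\text{virt}}} x_v$ is computable by every real node locally from the shared virtual state; the real contribution is an aggregate over a set of real nodes that may be disconnected in $G$, which is where the $O(\beta + 1)$ overhead enters. I would process the $\beta$ virtual hubs one at a time: for each $v_i \in V\virt$, use one consensus round on $G'$ with a Boolean-OR operator to let every $G'$-supernode learn whether it is attached to $v_i$ via a contracted edge, and then, after contracting a pre-computed spanning tree of $G$ to enable global aggregation, use a single consensus round with contributions filtered by ``is my $G'$-supernode attached to $v_i$?'' to collect the real-part aggregate at the merged supernode containing $v_i$. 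A standard union-find-style combination of these per-hub partial aggregates yields $y_s$ for every supernode of $G'\virt$; the aggregation step over incident edges is handled symmetrically, routing edges through the same hubs.

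The main obstacle I anticipate is precisely this non-trivial supernode structure: a single supernode of $G'\virt$ can span multiple connected components of $G$ that are linked only through virtual hubs, so aggregation inside such a supernode cannot be implemented within $G$ without explicitly routing through each hub. This is exactly why the overhead scales as $\beta + 1$, contributing one sub-round per hub. Upon termination, the maintained invariant guarantees that every real node $v \in V(G)$ has learned everything that $v$ and all virtual nodes learn during $A\virt$, yielding the claimed output condition; determinism of the simulation is preserved whenever $A\virt$ is itself deterministic, as every step above is a deterministic Minor-Aggregation primitive.
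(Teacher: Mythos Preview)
Your plan matches the paper's almost step for step: simulate $A\virt$ round by round, split the contraction into its real-real part $F\real$ and the virtual part, iterate once per virtual node to let every $(G/F\real)$-supernode learn via an OR which virtual nodes it is directly attached to, and then use a full contraction of $G$ to compute the consensus and edge aggregates. One step, however, would not work as you have written it. When you collect, for each hub $v_i$, the aggregate over all real nodes whose $(G/F\real)$-supernode is attached to $v_i$, a single $(G/F\real)$-supernode that is attached to several hubs of the same $(G\virt/F\virt)$-supernode is included in several of these per-hub partial aggregates. Your ``union-find-style combination'' can figure out which hubs lie in the same supernode, but it cannot undo the resulting double counting: aggregation operators need not admit inverses, so from overlapping partial aggregates $A_1$ and $A_2$ one cannot in general recover the aggregate over their union.

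The paper avoids this by reversing the bookkeeping. After the $\beta$ OR-rounds, every real node knows which virtual nodes its $(G/F\real)$-supernode is directly attached to and also knows all virtual--virtual contracted edges, so it can locally compute a single canonical identifier for its $(G\virt/F\virt)$-supernode (the minimum ID among the relevant virtual nodes, or its own $(G/F\real)$-supernode ID if it touches none). Then, iterating over the virtual nodes $v\virt$, one contracts all of $G$ and lets only those real nodes whose canonical identifier equals $v\virt$ contribute their $x_v$; every real node contributes in exactly one of the $\beta$ passes, so the pieces partition cleanly and combining them is a single $\bigoplus$. With this one change your argument is complete and coincides with the paper's; the edge-aggregation step is then handled by the same device, exactly as you indicate.
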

\begin{proof}
  We show how to simulate a single Minor-Aggregation round on $G\virt$ via $O(\beta + 1)$ rounds of Minor-Aggregation on $G$. Suppose that $F\virt \subseteq E(G\virt)$ are the edges that are set to be contracted in the current round.

  First, we contract non-virtual edges $F\real := F\virt \cap E(G)$. Each supernode $s$ of $G / F\real$ learns in $\beta$ rounds the set of virtual nodes it is directly connected to via contracted edge $F\virt$. In slightly more detail, suppose we fix a virtual node $v\virt$. We contract $F\real$ and use a consensus OR-operator step where a node outputs $1$ if it is connected to $v\virt$ and $0$ otherwise. After the consensus step, each ($G$-node in each) $(G/F\real)$-supernode learns whether it is directly connected to $v\virt$ via $F\virt$ or not. This is repeated for all $\beta$ virtual nodes.

  Since each (node in each) supernode $s$ of $G / F\real$ knows its supernode ID, which virtual nodes $s$ is connected to, and the set of edges interconnecting the virtual nodes, it can compute its supernode ID in $G\virt / F\virt$ for instance, as the minimum ID of a connected virtual node (or its own ID if not connected to any virtual nodes). 

  We can now perform the consensus step in $G\virt$ within $\beta + 1$ rounds. In the first round, all supernodes that do not contain any virtual node perform their consensus step. This can be done in a single round in $G / F\real$. Next, we iterate over all virtual nodes $v\virt$. We contract the entire graph into a single node, and all nodes $v$ whose $(G\virt / F\virt)$-ID is $v\virt$ output their output $x_v$ as stipulated by $A\virt$ (other nodes output an identity element $\bot$). Clearly, after this step, the $(G\virt / F\virt)$-supernode containing $v\virt$ computes its consensus-step output $y$. Note that all nodes of $G$ learn everything that $v\virt$ learns. We repeat this for all $\beta$ and complete the consensus step.

  We now similarly perform the aggregation step in $G\virt / F\virt$. First, we specify in more detail who exactly simulates each edge. Consider an edge $e \in E(G\virt)$ with endpoints $a, b$ (which are nodes in $G\virt$). If both endpoints are non-virtual, the same edge exists in $G$ and simulates itself---it can learn its inputs $y_a, y_b$ directly. On the other hand, if $a$ is non-virtual and $b$ is virtual, then $a$ simulates the edge and knows both $y_a$ and $y_b$ (since all nodes know virtual nodes' $y$-values). Finally, if both endpoints are virtual, all nodes simulate those edges (which is valid since they all know $y_a, y_b$). This allows all edges $e$ to compute its outputs $z_{e, a}, z_{e, b}$ (the nodes/edges simulating them can compute them). Finally, we need to compute the aggregates of $z$-values in a similar way to the consensus step. In the first round, we compute the $z$-aggregates of $(G\virt / F\virt)$-supernodes that do not contain any virtual node. Then, in the next $\beta$ rounds, we process each virtual node $v\virt$ one by one, contract the entire graph, and compute the $z$-aggregate of the supernode $(G\virt / F\virt)$ with ID equal to $v\virt$.
\end{proof}

We now show a useful lemma stating that we can always replace a node with its virtual substitute, which can be useful since we can arbitrarily interconnect them to other (even non-virtual) nodes in $G$.
\begin{lemma}\label{lemma:virtual-node-replacement}
  Let $v \in V(G)$ be a node in $G$. In $O(1)$ deterministic Minor-Aggregation rounds, we can distributedly store a graph $G\virt$ where the node $v$ is replaced with a virtual node $v\virt$ such that $G\virt$ is a virtual graph extending $G$ with a single virtual node. Specifically, $v$ in $G$ and $v\virt$ in $G\virt$ have the same set of neighbors. If multiple edges connected $v$ with some neighbor, $G\virt$ will contain a single edge with a weight equal to the sum of such edges in $G$. 
\end{lemma}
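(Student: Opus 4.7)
The plan is to realize $G\virt$ in $O(1)$ Minor-Aggregation rounds using two ingredients: (i) a global broadcast that lets every node of $G$ learn $v$'s ID (and hence the ID of $v\virt$), and (ii) a local sum-aggregation that lets each neighbor $u$ of $v$ learn the weight of its consolidated virtual edge $\{u, v\virt\}$ in $G\virt$. I fix the ID of $v\virt$ to be a deterministic function of $\mathrm{ID}(v)$---say, $\mathrm{ID}(v)$ with one dedicated ``virtual'' marker bit appended---so that the IDs of virtual and non-virtual nodes stay disjoint and so that knowing $\mathrm{ID}(v)$ is equivalent to knowing $\mathrm{ID}(v\virt)$.

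For ingredient~(i), I run one Minor-Aggregation round whose contraction step sets $c_e = \top$ on every edge. Since $G$ is connected, the minor collapses to a single supernode $s = V(G)$. In the consensus step, $v$ outputs $x_v := \mathrm{ID}(v)$ while every other node outputs the identity element $\bot$ under a $\bot$-absorbing max-aggregator, and the resulting $y_s = \mathrm{ID}(v)$ is delivered to every node, which locally derives $\mathrm{ID}(v\virt)$; the aggregation step of this round is vacuous. For ingredient~(ii), I run a second round with $c_e = \bot$ everywhere, so each supernode of the minor is just a single node. In the aggregation step, each edge $e = \{a, b\}$ uses the endpoint IDs it has known since initialization to set $z_{e, a} := w(e) \cdot \1{b = v}$ and $z_{e, b} := w(e) \cdot \1{a = v}$; under a sum-aggregator, each node $u$ then learns $\sum_{e:\, e = \{u, v\}} w(e)$, which is precisely the weight of the single consolidated virtual edge $\{u, v\virt\}$, while nodes not adjacent to $v$ learn $0$ and attach no virtual edge. (If zero-weight $G$-edges are a concern, a pair-aggregator that additionally carries a multiplicity count disambiguates ``weight zero'' from ``no edge.'')

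Finally, I assemble the distributed storage of $G\virt$ as prescribed by \Cref{def:virtual-graph-extension}: every node records $V\virt = \{v\virt\}$ (known from round one); each neighbor $u$ of $v$ privately stores one virtual edge to $v\virt$ with the weight learned in round two; and since $v\virt$ is the sole virtual node, there are no virtual-to-virtual edges that must be globally published. The $G$-edges incident to $v$ are marked absent from $E(G\virt)$---each such edge can see this locally from the endpoint IDs it already stores---while all other $G$-edges are retained. The only mildly non-trivial ingredient is the broadcast in step~(i): disseminating a single ID to the whole network in $O(1)$ rounds is infeasible in plain CONGEST for arbitrary diameter, but the Minor-Aggregation model enables the ``contract-everything-then-consensus'' shortcut used above, and everything else reduces to a direct sum-aggregation over incident edges.
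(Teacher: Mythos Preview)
Your proof is correct and follows essentially the same approach as the paper: one round that contracts everything to broadcast $\mathrm{ID}(v)$ globally, followed by one uncontracted round in which each neighbor of $v$ sums the weights of its $v$-incident edges. The paper is terser (it simply re-uses $\mathrm{ID}(v)$ for $v\virt$ since $v$ becomes inactive, and omits your zero-weight caveat), and one small wording point: for an edge to evaluate $\1{b=v}$ it must know $\mathrm{ID}(v)$, which it obtains not ``since initialization'' but via the consensus step of round~(ii) (e.g., each node outputs the bit $\1{\text{I am }v}$, which the edge then reads as $y_a,y_b$)---this is implicit in both your write-up and the paper's.
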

\begin{proof}
  In a single round, by contracting all edges, we broadcast the ID of $v$ to all nodes. We can re-use this ID as the ID of the new virtual node (since $v$ and its incident deactivates after the replacement). In another round, without any contractions, each edge that is incident to $v$ reports this fact to its other endpoint (say) $w \neq v$ along with the edge weight. The node $w$ sums up the edge weights in its aggregation step. Using this, each node incident to $v$ knows it is also incident to the new virtual node $v\virt$, making the new graph distributedly stored (remember that $v$ or $v\virt$ does not need to know its incident edges, but its neighbors must know they are incident to the new virtual node).
\end{proof}

\subsection{Deterministic primitives and simulation}\label{sec:det-ops-and-simulation}

In this section, we develop several useful deterministic primitives. The proofs are fairly involved as they need to argue about low-level model-specific details and are deferred to \Cref{sec:tree-primitives}.

\begin{restatable}[Deterministic primitives]{lemma}{lemmaDetOps}\label{lemma:Det-Ops}
  Let $T$ be a tree and let $r \in V(T)$. Suppose each node $v$ has an $\tl{O}(1)$-bit private input $x_v$. There is a deterministic $\tl{O}(1)$-round Minor-Aggregation algorithm computing the following for each node $v$:
  \begin{itemize}\setlength\itemsep{0em}
  \item Heavy-light decomposition: $v$ learns its HL-info of the heavy-light decomposition rooted at $r$.
  \item Ancestor sum: $v$ learns $p_v := \bigoplus_{w \in \anc(v)} x_w$ where $\anc(v)$ is the set of ancestors of $v$ w.r.t. root $r$.
  \item Subtree sum: $v$ learns $s_v := \bigoplus_{w \in \desc(v)} x_w$ where $\desc(v)$ is the set of descendants of $v$ w.r.t. root $r$.
  \end{itemize}
\end{restatable}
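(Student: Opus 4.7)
The plan is to reduce all three primitives to a single deterministic tree-contraction procedure in the Minor-Aggregation model, and to implement that procedure by derandomizing the standard star-merging technique via the Cole–Vishkin $3$-coloring of out-degree-one graphs.

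For the reductions, subtree sum is the core primitive, and ancestor sum is its downward analogue, handled by the same machinery with the roles of parent and child swapped; I would therefore focus on subtree sum. For the heavy-light decomposition, I would first compute subtree sizes via a subtree sum with $x_v = 1$. A single edge-aggregation step with the $\max$ operator then lets every parent mark its heaviest child-edge as heavy. The HL-depth of $v$ and its list $L_v$ of light ancestor edges are both ancestor-sum quantities: the HL-depth is the ancestor sum of the ``edge is light'' indicator, and $L_v$ is the ancestor sum of light-edge descriptors under a depth-sorted concatenation aggregator, which fits in $\tl{O}(1)$ bits by \Cref{lemma:hl-has-logn-light-edges} since each such list has length $O(\log n)$.

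To implement subtree sum deterministically, I would run $O(\log n)$ phases of star-contraction. The parent pointer induces an out-degree-one functional graph on the non-root vertices, so the Cole–Vishkin procedure applies to produce a proper $3$-coloring in $O(\log^* n)$ rounds. A single Cole–Vishkin round is implementable in $O(1)$ Minor-Aggregation rounds because each child only needs to learn its parent's current color, which is delivered by an aggregation step in which the parent-child edge forwards the parent's color using a ``pick the parent-edge value'' aggregator (unambiguous since every non-root has a unique parent edge). Given the $3$-coloring, one phase contracts all edges from nodes of some fixed color $c$ into their parents; because nodes of the same color are independent in $T$, these contractions do not conflict, and at least one color class contains a constant fraction of the non-root vertices, so each phase shrinks the tree by a constant factor. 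Each contracted edge carries the running $\bigoplus$-aggregate of the subtree below it, and after a symmetric backward pass that replays the phases in reverse and releases the stored partials, every node $v$ learns $s_v = \bigoplus_{w \in \desc(v)} x_w$.

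The main obstacle is the bookkeeping required to run this multi-phase scheme inside the Minor-Aggregation model, where each round starts fresh from $G$ rather than from the previously-contracted graph. The key observation is that edges are first-class computational units that persist across rounds, so they can store both the per-phase partial aggregates and the bit marking ``this edge was contracted in phase $i$''; in round $i$, the supernode structure of phase $i$ is regenerated by contracting exactly the edges marked for phases $\le i$. Checking that all messages fit in $\tl{O}(1)$ bits, and that Cole–Vishkin, star merging, and partial-aggregate propagation can be expressed purely through the contraction, consensus, and aggregation primitives of \Cref{def:aggregation-congest}, is the bulk of the remaining work, but each piece is local and elementary. Once this is settled, the total round complexity is $O(\log n) \cdot O(\log^* n) = \tl{O}(1)$, and the algorithm is fully deterministic, as required.
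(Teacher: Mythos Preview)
Your proposal shares the paper's key derandomization idea (Cole--Vishkin $3$-coloring of the out-degree-one pointer graph to replace randomized star-merging), but the overall architecture is inverted. The paper first builds the heavy-light decomposition incrementally via star-merging (maintaining a valid HL-decomposition on each growing part), and only then computes subtree and ancestor sums by processing HL-paths level-by-level using a numbered-path prefix/suffix primitive. You instead make subtree sum the base primitive via direct tree contraction, and derive HL-info afterwards using subtree sizes plus an ancestor sum. Both orderings are viable and yield $\tl{O}(1)$ rounds; yours is arguably more direct, while the paper's has the advantage that the HL structure is available as a standalone tool throughout.

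There is one genuine gap. Your contraction argument presupposes that every non-root node already knows its parent (``the parent pointer induces an out-degree-one functional graph''), but the lemma's input is an \emph{unoriented} tree together with a designated vertex $r$; edge orientations toward $r$ are not given and are precisely one of the outputs. Orienting the tree is not a triviality you can absorb into the same loop: if in each phase a non-root part picks an \emph{arbitrary} adjacent edge as its out-edge and you then merge joiners into receivers, the resulting joiner-to-receiver directions need not point toward $r$, so neither your subtree-sum partials nor any ``swap parent and child'' argument for ancestor sum is well-defined. The paper handles this with a separate, fairly delicate bootstrapping (\Cref{lemma:orienting-and-hl-construction}): it maintains a local root and a local HL-decomposition on each part, and when a joiner merges into a receiver it uses that local HL-decomposition to identify and reverse the path from the joiner's old root to the merge point. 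Your plan needs either this step or an equivalent mechanism for orienting before your subtree-sum contraction can run; without it, the very first phase of your algorithm cannot be executed.
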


We now state how to simulate (deterministic) Minor-Aggregation algorithms in (deterministic) CONGEST. The proof is deferred to \Cref{sec:tree-primitives}.
\begin{restatable}{theorem}{thmCongestSimulation}\label{thm:congest-simulation}
  Suppose $A$ is any deterministic $\tau$-round Minor-Aggregation algorithm and suppose $G$ is an $n$-node graph with diameter-$D$. We can simulate $A$ with a CONGEST algorithm on $G$ with the following guarantees:
  \begin{itemize}\setlength\itemsep{0em}
  \item Unconditionally, the simulation requires $\tau \cdot \tl{O}(D + \sqrt{n})$ rounds and is deterministic.~\cite{ghaffari2016distributed}
  \item Unconditionally, the simulation requires randomized $\tau \cdot \poly(\SQ(G)) \cdot n^{o(1)}$ rounds.~\citep{haeupler2022oblRouting}
  \item When $G$ is an excluded-minor graph (e.g., planar graph), the simulation requires $\tau \cdot \tl{O}(D)$ rounds and is deterministic.~\cite{haeupler2016low,ghaffari2021excluded}
  \item When the topology $G$ is known, the simulation requires randomized $\tau \cdot \tl{O}(\SQ(G))$ rounds.~\cite{haeupler2020shortcuts}
  \end{itemize}
\end{restatable}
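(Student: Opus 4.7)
The plan is to reduce each round of the Minor-Aggregation model to a constant number of standard aggregation primitives over the partition of $V(G)$ into (connected) supernodes, and then invoke the existing low-congestion shortcut machinery to execute those primitives in CONGEST. Concretely, low-congestion shortcuts for a partition $V(G) = S_1 \sqcup \cdots \sqcup S_N$ (into connected parts) are precisely designed so that within $\tl{O}(Q)$ CONGEST rounds, where $Q$ is the shortcut quality, every supernode $S_i$ can (i) elect a leader and propagate the leader's $\tl{O}(1)$-bit ID back to all its members, (ii) compute any commutative-associative (or, by standard pipelining, any mergeable-sketch) aggregate of $\tl{O}(1)$-bit inputs held by its nodes, and (iii) broadcast a common $\tl{O}(1)$-bit value back to all of its nodes.

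Given one round of the Minor-Aggregation algorithm $A$, I would execute it in three stages using these primitives. First, for the contraction step, each edge holds its own $c_e \in \{\bot, \top\}$; interpret $F = \{e : c_e = \top\}$ as inducing the partition into supernodes of $G / F$. Using primitive (i), every supernode computes a common ID (say, the minimum node ID inside it) in $\tl{O}(Q)$ rounds. Second, for the consensus step, each node $v$ produces its $\tl{O}(1)$-bit input $x_v$ and each supernode $s$ computes the aggregate $y_s = \bigoplus_{v \in s} x_v$ using primitive (ii), then broadcasts $y_s$ back to all of its member nodes using (iii); this uses $\tl{O}(Q)$ rounds. Third, for the aggregation step, each non-contracted edge $e$ (i.e., $e \in E(G) \setminus F$ between two different supernodes $a, b$) learns $y_a$ and $y_b$ directly in one round from its endpoints (which hold these values after the consensus step), produces its $z_{e,a}, z_{e,b}$, and then for each endpoint supernode $s$ we again apply primitive (ii) to aggregate the $z$-values of incident edges and (iii) to broadcast the result. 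Self-loops created by contraction are simply discarded. Each of the three stages costs $\tl{O}(Q)$ rounds, so a single Minor-Aggregation round is simulated in $\tl{O}(Q)$ CONGEST rounds.

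To get the four bullets, I would then plug in the appropriate shortcut-construction results as black boxes. The unconditional deterministic $\tl{O}(D + \sqrt n)$ bound follows from the deterministic shortcut construction of Ghaffari and Haeupler~\cite{ghaffari2016distributed}. The $\poly(\SQ(G)) \cdot n^{o(1)}$ randomized bound in unknown topology follows from the oblivious-routing-based shortcut construction of Haeupler \textcircled{r} Raecke \textcircled{r} Ghaffari~\cite{haeupler2022oblRouting}. For excluded-minor graphs, the deterministic $\tl{O}(D)$-quality, $\tl{O}(D)$-time construction of Ghaffari and Haeupler~\cite{haeupler2016low} together with Ghaffari and Haeupler~\cite{ghaffari2021excluded} yields the $\tl{O}(D)$ bound. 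In the supported-CONGEST (known-topology) setting, optimal shortcuts of quality $\tl{O}(\SQ(G))$ can be computed randomly in $\tl{O}(\SQ(G))$ rounds using the results summarized by Haeupler, Wajc, and Zuzic~\cite{haeupler2020shortcuts}. Multiplying through by $\tau$ gives the claimed round complexities.

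The only genuinely subtle point I foresee is in the aggregation step: the Minor-Aggregation definition allows edge-endpoint-dependent outputs $z_{e,a} \neq z_{e,b}$ and in general only guarantees that some valid aggregate is common knowledge inside a supernode, not a canonically-unique one. The easiest way I would handle this is to observe that our consensus primitive (ii) can simulate any mergeable aggregate by pipelining an $\tl{O}(1)$-bit sketch across the shortcut in a leader-rooted fashion, which produces a single canonical aggregate that the leader then broadcasts via (iii); this ensures all member nodes learn the same value in $\tl{O}(Q)$ rounds and sidesteps any ambiguity in the aggregation order. Everything else is a routine gluing of known shortcut results, so I do not expect further surprises.
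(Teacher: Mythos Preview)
Your high-level reduction---simulate each Minor-Aggregation round via a constant number of part-wise aggregations over the supernode partition, then plug in the cited shortcut constructions---is exactly what the paper (following \cite{goranci2022universally}) does, and your treatment of the consensus and aggregation steps is fine. The randomized bullets go through as you describe.

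The deterministic bullets, however, hide the one nontrivial step that the paper singles out. You treat primitive~(i) (leader election, i.e., computing a common supernode ID) as immediately available from deterministic shortcuts, but the deterministic part-wise aggregation oracles you invoke take as input a partition in which every node already knows its part ID; at the start of a round each node only knows the local bits $c_e$ of its incident edges, not a global supernode identifier. The paper's proof points out that in the simulation of \cite{goranci2022universally} the \emph{only} randomized ingredient is precisely this leader-election step, carried out by $O(\log n)$ Boruvka-style merging iterations using randomized star-merging. Their fix is to substitute the deterministic star-merging of \Cref{lemma:det-star-merging} (a Cole--Vishkin $3$-coloring variant), after which the entire simulation becomes deterministic. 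Without that ingredient your step ``compute the minimum node ID inside each supernode via primitive~(i)'' is circular: you need a supernode ID to run PA, and you are using PA to get the supernode ID. You should either invoke the deterministic star-merging lemma explicitly, or give a separate argument that the cited deterministic shortcut constructions can be run directly on the edge-label representation of the partition.
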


\section{Warm-up: 1-Respecting Min-Cut}\label{sec:one-respecting-cut}

In this section, we show how to compute all 1-respecting cuts when given a spanning tree $T$ of a graph $G$ with a $\poly(\log n)$-round Minor-Aggregation algorithm on $G$.

\begin{theorem}\label{thm:1-respecting-cut-algorithm}
  Let $T = (V, E_T)$ be a rooted spanning tree of a weighted graph $G = (V, E_G)$. There exists a deterministic $\tl{O}(1)$-round Minor-Aggregation algorithm that computes $\Cut_{T, G}(e)$ for all $e \in E_G$ (each edge learns its cut value).
\end{theorem}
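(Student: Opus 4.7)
The plan is to reduce $\Cut_{T,G}(e)$ to two subtree-sum computations via the standard LCA accounting identity. For a tree edge $e=(v,\parent(v))$, every edge $f = \{p,q\} \in E_G$ satisfies
\[
  \mathbb{1}[f \text{ crosses } e] \;=\; \mathbb{1}[p\in\subtree(v)] + \mathbb{1}[q\in\subtree(v)] - 2\cdot\mathbb{1}[\LCA_T(p,q)\in\subtree(v)],
\]
as a direct case check ($\LCA_T(p,q)\in\subtree(v)$ iff both endpoints are in $\subtree(v)$). Summing weighted by $w(f)$ yields $\Cut_{T,G}(e) = S(v) - 2D(v)$, where $S(v) := \sum_{u\in\subtree(v)} a_u$ with $a_u := \sum_{f\ni u} w(f)$, and $D(v) := \sum_{u\in\subtree(v)} z_u$ with $z_u := \sum_{f \in E_G\,:\,\LCA_T(f)=u} w(f)$. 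Once every node knows $a_u$ and $z_u$, two applications of the subtree-sum primitive of \Cref{lemma:Det-Ops} compute $S$ and $D$ in $\tl{O}(1)$ deterministic Minor-Aggregation rounds, and each tree edge reads $S(v) - 2D(v)$ from its lower endpoint in one more aggregation step.

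The value $a_u$ is computed by a single local aggregation at $u$. The substantive task is computing $z_u$. I would start by invoking \Cref{lemma:Det-Ops} to compute a heavy-light decomposition so every node knows its HL-info. Then, in a single aggregation round per edge, each edge $f=\{p,q\}$ exchanges HL-infos between its endpoints, and both endpoints locally compute $a := \LCA_T(p,q)$ via \Cref{lemma:lca-labels}. Declare the smaller-ID endpoint of $f$ to be responsible for crediting $w(f)$ towards $z_a$. The easy subcases $a \in \{p,q\}$---which include every tree edge, whose $T$-LCA is always the parent-endpoint---resolve immediately, either locally or with one aggregation across $f$.

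The main obstacle is the remaining case, where $a$ is a strict $T$-ancestor of both endpoints, because the Minor-Aggregation model has no native primitive for sending a message to an arbitrary ancestor. I would resolve this by climbing along heavy-light paths. Every responsible endpoint lies in $\subtree(a)$, and the ascending $T$-path from it up to $a$ crosses at most $O(\log n)$ distinct HL-paths by \Cref{lemma:hl-has-logn-light-edges}, so the delivery decomposes into $O(\log n)$ phases indexed by HL-depth from deepest to shallowest. In each phase, for every HL-path $P$ of the current depth, contract the tree-edges strictly inside the still-pending region so that $P$ together with its hanging subtrees collapses onto a line. A single restricted ancestor-sum from \Cref{lemma:Det-Ops} along each such line then aggregates every pending credit into its target node on $P$ by matching on the target's ID and depth; credits whose target lies on a shallower HL-path are forwarded intact to the topmost node of $P$ and enter the next phase through the unique light parent edge. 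Since HL-paths of equal depth are node-disjoint in $T$, the per-path aggregations within one phase run simultaneously via \Cref{lemma:node-disjoint-scheduling}, and the whole subroutine costs $\tl{O}(1)$ deterministic rounds. Combined with the two final subtree sums for $S$ and $D$, this produces the claimed deterministic $\tl{O}(1)$-round Minor-Aggregation algorithm; if the statement really requires a value at non-tree $e \in E_G\setminus E_T$, set it to $0$, since a $1$-respecting cut value is only defined on tree edges.
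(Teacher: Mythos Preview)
Your proposal is correct and follows the same high-level strategy as the paper: reduce $\Cut_{T,G}(e)$ to a subtree sum of the vector $A(u)=a_u-2z_u$ (equivalently, your $S(v)-2D(v)$), compute $a_u$ locally, and use the heavy--light decomposition to route each LCA credit $w(f)$ to its target $l=\LCA_T(f)$, exploiting that every such $l$ is the top endpoint of a light edge on the root-to-(responsible endpoint) path, so only $O(\log n)$ distinct targets are relevant at any node.

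The one genuine difference is in how the LCA credits are delivered. The paper does this with a \emph{single} subtree-sum whose aggregation operator is an associative array keyed by target ID; the key observation is that at every partial aggregate (over nodes with some common ancestor $w$), one may discard all entries whose target is not a light-edge top on the root-to-$w$ path, keeping the array at $O(\log n)$ entries throughout. Your proposal instead unrolls this into $O(\log n)$ explicit phases indexed by HL-depth, each doing a suffix-type aggregation along the HL-paths of that depth and forwarding the leftover $O(\log n)$-entry array through the light parent edge. Both rely on exactly the same structural fact and both cost $\tl{O}(1)$ rounds; the paper's version is more compact while yours is more operational.

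One terminological slip: what you call a ``restricted ancestor-sum'' along each contracted line should be a \emph{subtree-sum} (suffix sum along the path), since credits must flow upward from descendants into their ancestor targets; the ancestor-sum primitive of \Cref{lemma:Det-Ops} computes prefix sums in the opposite direction. This does not affect correctness once the right primitive is used.
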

\begin{proof}
  Compute a heavy-light decomposition of $T$ (\Cref{lemma:Det-Ops}). We note that the cut value $\Cut_{T, G}(e)$ can be computed as the sum of contributions from all (graph) edges $e \in E_G$, where the contribution of each edge $e = \{u, v\} \in E_G$ is as follows. Due to $e$, we need to increase the cut values by $+w(e)$ for all edges on the unique $u$-to-$v$ path in $T$. Equivalently, the contribution of $e$ to a tree-edge $f := \{\parent(x), x\}$ can be calculated as the subtree sum (with the $+$-aggregator) of $A_e \in \R^V$ defined as follows. We define a node vector $A_e \in \R^V$ with all values $0$ except $A_e(u) := w(e), A_e(v) := w(e), A_e(\LCA(u, v)) := - 2 w(e)$ if the edge is not ancestor-descendant. Otherwise, if $e$ is ancestor-descendant with $u$ farther from the root, then $A_e(u) := w(e)$ and $A_e(v) := - w(e)$. It is immediate that, for each non-root node $x \in V$, the contribution of $e$ to $\Cut_{T, G}(\{parent(x), x\})$ (i.e., to $x$'s parent edge) corresponds to the sum of values in $A_e$ over all $x$'s descendants. Therefore, by defining $A := \sum_{e \in E_G} A_e \in \R^V$ and assuming it can be computed, we conclude that $\Cut_{T, E}(\cdot)$ can be computed as the subtree sum over the vector $A$. Since the subtree sum can be deterministically computed in $\tl{O}(1)$ Minor-Aggregation rounds (\Cref{lemma:Det-Ops}), we reduced our problem to computing and distributedly storing $A \in \R^V$.

  We now discuss how to compute $A$. Initially, we set all $A(x) \gets 0$ for all $x \in V$. First, for each graph edge $e = \{u, v\} \in E_G$ we increase $A(u)$ and $A(v)$ by $+w(e)$. In other words, so far, $A(x)$ is equal to the sum of weights of all graph edges incident to $x$. This can be achieved with a single Minor-Aggregation round (without any contractions).

  Second, for each graph edge $e = \{u, v\} \in E_G$ let $l = \LCA(u, v)$ be the lowest common ancestor of the endpoints. Our goal is to decrease $A(l)$ by $-w(e)$, and perform this operation over all $e \in E_G$. This is achieved as follows. Fix some edge $\{u, v\} \in E_G$. If $e$ is an ancestor-descendant edge (i.e., $\LCA(u,v) \in \{u, v\}$), then we handle this case locally. Otherwise, we have that $l := \LCA(u, v)$ is in the HL-info list of (at least) one of $\{u, v\}$ (\Cref{lemma:lca-labels}); suppose without loss of generality this is $u$. Then, we say that ``$u$ is \textbf{responsible} for updating the \textbf{target} $l$ by a \textbf{delta} of $-w(e)$''. We use a subtree-sum operation (\Cref{lemma:Det-Ops}) to find, for each node $u$, an associative array that maps each ancestor node $v \in \anc(u)$ to the total sum of deltas where the responsible node is a descendant of $u$ and the target is $v$. Initially, each node $v$ initializes its private input (for the subtree-sum operation) with the list of targets it is responsible for updating. Then, the result of the subtree sum for a node $v$ is the sum of the private inputs over all $\desc(v)$, with the resulting associative array restricted to the domain of ancestors $\anc(v)$. Immediately, the resulting associative array is supported on the endpoints of light edges of the root-to-$u$ path (i.e., all other values are $0$). Therefore, it is supported on a $\tl{O}(1)$-sized set (\Cref{lemma:hl-has-logn-light-edges}). Moreover, the same holds for any partial result that pops up during the computation: if we are aggregating the arrays of two nodes $u$ and $v$, we can ignore all entries that are not on light edges of the root-to-$\LCA(u,v)$ path (i.e., the intersection of light edges on root-to-$u$ and root-to-$v$ path). This means that the operation always fits within $\tl{O}(1)$-bits, meaning it can be implemented via an aggregation operation. This concludes the computation of the vector $A$, which can be achieved in $\tl{O}(1)$ Minor-Aggregation rounds.

\end{proof}

\section{Path-to-path 2-Respecting Min-Cut}\label{sec:path-to-path-cut}

This section shows how to compute the minimum 2-respecting cut between two paths $P$ and $Q$ (adjoined with a root for orientation purposes, see \Cref{fig:path-to-path}). We formalize this notion in the following result, which is the main result of this section.
\begin{figure}
  \centering
  \includegraphics[width=0.6\textwidth]{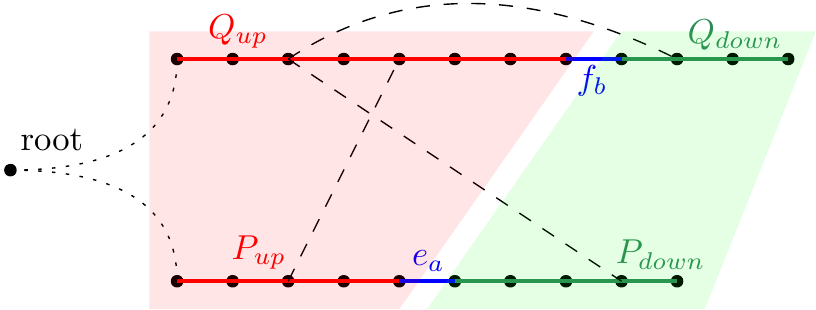}
  \caption{A path-to-path instance. The dashed edges between $P$ and $Q$ are cross-path, while the rest of the edges are same-path. The midpoint of $P$ is $e_a$ and its best reponse is $f_b$. The recursive calls are issued on $P_{\text{up}} \times Q_{\text{up}}$ (red area) and $P_{\text{down}} \times Q_{\text{down}}$ (green area).}
  \label{fig:path-to-path}
\end{figure}

\begin{theorem}\label{thm:path-to-path-algorithm}
  Suppose $G$ is a weighted graph and $T \subseteq G$ is $G$'s (rooted) spanning tree. Moreover, $T$ is composed of a root $r$, and two descending paths $P, Q$. There exists a deterministic $\tl{O}(1)$-round Minor-Aggregation algorithm on $G$ that computes the minimum of 1-respecting $\min_{e \in E(P) \cup E(Q)} \Cut_{T, G}(e)$ and 2-respecting cuts $\min_{e \in E(P), f \in E(Q)} \Cut_{T, G}(e, f)$. 
\end{theorem}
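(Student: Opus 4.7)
The plan is to handle the 1-respecting and 2-respecting parts separately. For the 1-respecting part, I would invoke \Cref{thm:1-respecting-cut-algorithm} to compute $\Cut_{T,G}(e)$ at each edge and aggregate the minimum over $E(P) \cup E(Q)$ in one additional round. For the cross-path 2-respecting part, I would follow the Monge-property-based divide-and-conquer of Mukhopadhyay and Nanongkai. Take $e_a$ to be the midpoint edge of $P$, compute $\Cut(e_a, f)$ for every $f \in E(Q)$, and let $f_b$ be a best response, i.e., an argmin. The Monge property then guarantees that the optimal pair is either $(e_a, f_b)$ itself, or lies entirely in $P_{\up} \times Q_{\up} := \{e_1,\ldots,e_{a-1}\} \times \{f_1,\ldots,f_{b-1}\}$, or entirely in $P_{\down} \times Q_{\down} := \{e_{a+1},\ldots,e_{|P|}\} \times \{f_{b+1},\ldots,f_{|Q|}\}$. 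I would then recurse on these two node-disjoint sub-instances in parallel via \Cref{lemma:node-disjoint-scheduling}, keeping the candidate $\Cut(e_a, f_b)$. Since $|P|$ halves in each child, the recursion depth is $O(\log n)$, and to obtain $\tl{O}(1)$ total rounds I need each level to do $\tl{O}(1)$ work.

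To realize a level in $\tl{O}(1)$ rounds, I would first use \Cref{lemma:2-cut-vs-cov} to rewrite $\Cut(e_a, f) = \Cov(e_a) + \Cov(f) - 2\Cov(e_a, f)$; the single-edge cover values are 1-respecting cut values and the cross-term $\Cov(e_a, f)$ equals the sum of weights of $G$-edges whose $T$-path uses both $e_a$ and $f$. Fixing $e_a$, these are precisely the edges between $\subtree(\bottom(e_a)) \subseteq V(P)$ and $\subtree(\bottom(f)) \subseteq V(Q)$, so the map $f \mapsto \Cov(e_a, f)$ is an ancestor-sum along $Q$ of the per-node weighted contributions, computable in $\tl{O}(1)$ rounds by \Cref{lemma:Det-Ops}. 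Next, to set up a \emph{private cut-equivalent} graph for the $\up$ recursion, I would apply \Cref{lemma:virtual-node-replacement} to insert $O(1)$ virtual nodes playing the roles of the root and of the ``removed'' tails of $P$ and $Q$; every $G$-edge is re-routed to one of these virtual anchors depending on which side of the split its endpoints lie in (e.g., an edge from $V(P_{\down})$ to $v \in V(Q_{\up})$ becomes an edge between the ``down-$P$'' virtual node and $v$, with weight contributions aggregated by subtree sums). A direct case analysis verifies that for every $(e, f) \in E(P_{\up}) \times E(Q_{\up})$, $\Cut_{T_{\up}, G_{\up}}(e, f) = \Cut_{T, G}(e, f)$, and symmetrically for $G_{\down}$.

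The main obstacle I anticipate is \emph{simulation cascade}: naively, $O(\log n)$ recursion levels would stack up $O(\log n)$ virtual nodes and, by \Cref{thm:simulating-virtual-nodes}, turn each round into $\Omega(\log n)$-cost, defeating the $\tl{O}(1)$ goal. I would resolve this by de-virtualizing inside each recursive call before returning: the call on $G_{\up}$ performs its own recursion, collects the answer, and then simulates the final $\tl{O}(1)$-round summary on $G_{\up} - \mathrm{Virt}$, which is a subgraph of the real communication network; this way, the parent sees no inherited virtual nodes. The one case in which this fails is when $G_{\up} - \mathrm{Virt}$ is disconnected; I would then argue that the sub-instance is separable, meaning that every $G$-edge contributing to $\Cov(e, f)$ for $(e,f) \in E(P_{\up}) \times E(Q_{\up})$ passes through a virtual node, so $\Cov(e, f)$ factors and hence $\Cut(e, f) = F_P(e) + F_Q(f)$ for functions $F_P, F_Q$ computable in $\tl{O}(1)$ rounds by independent ancestor sums along $P_{\up}$ and $Q_{\up}$. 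In this degenerate case, $\min_{(e,f)} \Cut(e,f) = \min_e F_P(e) + \min_f F_Q(f)$ is solved directly without recursion, yielding the base case. Putting it together, each level does $\tl{O}(1)$ work with only $O(1)$ virtual nodes, the recursion halves $|P|$ in depth $O(\log n)$, and node-disjoint parallel scheduling gives the final $\tl{O}(1)$-round bound.
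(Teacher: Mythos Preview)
Your proposal is correct and follows essentially the same approach as the paper: the Monge-based midpoint/best-response recursion, private cut-equivalent graphs $G_{\up}, G_{\down}$ built with $O(1)$ virtual nodes, de-virtualization inside each call on $G - \mathrm{Virt}$ to avoid simulation cascade, and the separable base case where $\Cut(e,f)$ splits as $F_P(e)+F_Q(f)$. One small slip: the map $f \mapsto \Cov(e_a,f)$ is a \emph{subtree} (suffix) sum along $Q$, not an ancestor sum, but this is the same primitive from \Cref{lemma:Det-Ops} and does not affect the argument.
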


We number the edges of $P$ as $e_1, e_2, \ldots, e_{|P|}$ in order of increasing depth (with $|P|$ denoting the length of the path). Similarly, we let $f_1, \ldots, f_{|Q|}$ be the edges of $Q$.

The main idea is to import the state-of-the-art centralized techniques into the distributed setting with the help of the Minor-Aggregation model. Specifically, we first fix $e_a$ to be the midpoint edge of $P$ (i.e., $a := \lfloor |P| / 2 \rfloor)$ and let $f_b$ be the \textbf{best response} to $f_a$, meaning the edge $f_b \in E(Q)$ that minimizes $\Cut(e_a, f_b)$. Then, either $(e_a, f_b)$ is the pair that minimizes the 2-respecting cut, or the minimizing pair can be found on either $\{e_1, \ldots, e_{a-1}\} \times \{f_1, \ldots, f_{b-1}\}$ or $\{e_{a+1}, \ldots, e_{|P|}\} \times \{f_{b+1}, \ldots, f_{|Q|}\}$ (i.e., the minimizing pair is either entirely on one side or on the other side of $(e_a, f_b)$). This last property easily follows from the so-called \emph{Monge property} and was recently observed by \cite{mukhopadhyay2020weighted}.
\begin{fact}[Claim 7.1. in the full version of \cite{dory2021distributed}; \cite{gawrychowski2021note}]\label{fact:monge-property}
  For all $i \le i', j \le j'$ we have:
  \begin{align*}\Cut_{T, G}(e_i, f_j) + \Cut_{T, G}(e_{i'}, f_{j'}) \le \Cut_{T, G}(e_{i'}, f_{j}) + \Cut_{T, G}(e_i, f_{j'}) .\end{align*}
\end{fact}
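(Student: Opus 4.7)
The plan is to reduce the inequality to a corresponding Monge-style statement about cover values, and then verify that latter statement edge-by-edge over $E(G)$. Using \Cref{lemma:2-cut-vs-cov}, we can rewrite every $\Cut$ term on both sides as $\Cov(e) + \Cov(f) - 2\Cov(e,f)$. The singleton terms $\Cov(e_i),\Cov(e_{i'}),\Cov(f_j),\Cov(f_{j'})$ appear with the same multiplicity on both sides of the desired inequality and therefore cancel. After dividing by $2$, it suffices to prove the supermodular inequality
\begin{equation*}
  \Cov(e_i, f_j) + \Cov(e_{i'}, f_{j'}) \;\ge\; \Cov(e_{i'}, f_j) + \Cov(e_i, f_{j'}).
\end{equation*}

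Next, I would prove this inequality by expanding each $\Cov$ term as a sum of contributions from individual edges $g=\{u,v\}\in E(G)$ and showing the inequality holds contribution-by-contribution with weight $w(g)$. Because $T$ consists of the root $r$ together with the two descending paths $P$ and $Q$, the unique $T$-path between the endpoints of $g$ contains both an edge of $P$ and an edge of $Q$ \emph{only if} $g$ is a ``cross-path'' edge, i.e., one endpoint lies on $P$ and the other on $Q$ (edges with an endpoint at $r$, or entirely inside one path, contribute $0$ to every term $\Cov(e_p, f_q)$ with $e_p\in E(P)$ and $f_q\in E(Q)$). So I may restrict attention to cross-path edges and contribute $0$ on both sides for the remainder.

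For a cross-path edge $g$ with $P$-endpoint at depth $a$ on $P$ and $Q$-endpoint at depth $b$ on $Q$, the $T$-path covers $e_p\in E(P)$ iff $a\ge p$ and covers $f_q\in E(Q)$ iff $b\ge q$; consequently $g$ contributes $w(g)\cdot \mathbb{1}[a\ge p, b\ge q]$ to $\Cov(e_p,f_q)$. The per-edge inequality therefore becomes
\begin{equation*}
  \mathbb{1}[a\ge i,\, b\ge j] + \mathbb{1}[a\ge i',\, b\ge j']
  \;\ge\;
  \mathbb{1}[a\ge i',\, b\ge j] + \mathbb{1}[a\ge i,\, b\ge j'].
\end{equation*}
Since $i\le i'$ and $j\le j'$, a short case analysis on whether $a\in[0,i),[i,i'),[i',\infty)$ and $b\in[0,j),[j,j'),[j',\infty)$ shows that both sides are equal in every case except the ``middle'' case $a\in[i,i')$, $b\in[j,j')$, where the LHS is $1$ and the RHS is $0$. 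Summing over all cross-path edges weighted by $w(g)$ gives the cover-Monge inequality, which in turn (after undoing the cancellations) yields the cut-Monge inequality of the fact.

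The main obstacle here is simply the bookkeeping: being careful about edges incident to the root $r$ and about edges lying entirely within a single path (which must be shown not to contribute to any of the four $\Cov(e_p,f_q)$ terms), as well as handling the boundary cases of the case analysis. Both issues are straightforward once one observes that only cross-path edges matter, so I would not expect any genuine difficulty.
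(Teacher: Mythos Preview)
Your argument is correct. The paper itself does not prove this fact at all: it states it as a known result and cites prior work (\cite{dory2021distributed,gawrychowski2021note}). Your proposal supplies a complete self-contained proof via the standard route---reducing to the supermodularity of $\Cov$ using \Cref{lemma:2-cut-vs-cov}, and then verifying supermodularity edge-by-edge using the monotone (``staircase'') structure of the indicator $\mathbb{1}[a\ge p,\ b\ge q]$. This is exactly the elementary argument underlying the cited results, so there is no real divergence in method; you simply spell out what the paper leaves to the references.
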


\noindent\textbf{Notation.} We introduce some (section-specific) notations. An edge $\{u, v\} \in E_G$ is a \textbf{cross-path} if it has one endpoint on both $V(P)$ and $V(Q)$; otherwise, it is a \textbf{same-path} edge. Given a graph $G$ and a set of nodes $D \subseteq V(G)$, we denote by $G - D$ the subgraph with all nodes in $D$ removed (their incident edges are also removed). Furthermore, given a path-to-path instance $T \subseteq G$, we say the instance is \textbf{separable} if $G - \{ r, \mytop(P), \bottom(P), \mytop(Q), \bottom(Q) \}$ has no cross-path edges, where $\mytop(\cdot), \bottom(\cdot)$ of a rooted path $X$ represent the closest- and furthest-away nodes on $X$ from the root. It is important to observe that the instance is not separable if and only if $G - \{ r, \mytop(P), \bottom(P), \mytop(Q), \bottom(Q) \}$ is connected (or the paths have less than 3 nodes).

To calculate the best response of an edge (i.e., given $e \in E(P)$, calculate $\min_{f \in E(Q)} \Cut(e, f)$), we use the following result.
\begin{lemma}\label{lemma:one-sided-cov-algorithm}
  Assume the setting of \Cref{thm:path-to-path-algorithm} and let $e_{\text{fix}} \in E(P)$ be a fixed edge. There is a deterministic algorithm where each edge $f \in E(Q)$ learns $\Cov(e_{\text{fix}}, f)$ that runs in $\tl{O}(1)$-round Minor-Aggregation algorithm on $G$.
\end{lemma}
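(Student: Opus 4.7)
The plan is to reduce the computation to an ancestor-sum and a subtree-sum via the deterministic primitives of \Cref{lemma:Det-Ops}. The key structural observation is that since $P$ and $Q$ are disjoint descending paths from $r$, the set $S_e := \desc(\bottom(e_{\mathrm{fix}}))$ is contained in $V(P)$ and, for every $f \in E(Q)$, the set $S_f := \desc(\bottom(f))$ is contained in $V(Q)$; in particular $S_e \cap S_f = \emptyset$. Since an edge $\{u,v\} \in E(G)$ covers a tree-edge $g$ iff exactly one of $u, v$ lies in $\desc(\bottom(g))$, combining the two constraints for $g \in \{e_{\mathrm{fix}}, f\}$ with the disjointness forces $\{u,v\}$ to be a cross-path edge with one endpoint in $S_e$ and the other in $S_f$. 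Hence
\[
\Cov(e_{\mathrm{fix}}, f) \;=\; \sum_{\substack{\{u,v\} \in E(G) \\ u \in S_e,\ v \in S_f}} w(\{u,v\}).
\]

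First I would compute $S_e$-membership at every node. Running the deterministic ancestor-sum of \Cref{lemma:Det-Ops} (on $T$ rooted at $r$) with input $x_v = \1{v = \bottom(e_{\mathrm{fix}})}$ lets each $v$ learn a value that is $1$ iff $\bottom(e_{\mathrm{fix}}) \in \anc(v)$, i.e., iff $v \in S_e$. Next, using a single Minor-Aggregation round I would have every $v \in V(Q)$ collect
\[
y_v \;:=\; \sum_{\{u,v\} \in E(G),\ u \in S_e} w(\{u,v\}).
\]
Concretely, the preceding consensus step publishes $S_e$-membership, so each edge $\{u,v\}$ knows the status of both endpoints and contributes $w(\{u,v\})$ to the sum-aggregate at $v$ precisely when $u \in S_e$ and $v \in V(Q)$; the aggregation step at $v$ then produces $y_v$.

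Finally, I would invoke the deterministic subtree-sum of \Cref{lemma:Det-Ops} with inputs $y_v$ (and $0$ for $v \notin V(Q)$). For each non-root $v' \in V(Q)$ the output is $s_{v'} = \sum_{w \in \desc(v')} y_w$, which is exactly $\Cov(e_{\mathrm{fix}}, f)$ for the edge $f = \{\parent(v'), v'\}$, since $\desc(v') = S_f$ consists precisely of the $Q$-endpoints whose cross-edges into $S_e$ contribute to $\Cov(e_{\mathrm{fix}}, f)$. One final consensus step lets each $f \in E(Q)$ read the value from its bottom endpoint. Every step uses only deterministic $\tl{O}(1)$-round primitives, giving the claimed bound.

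The only substantive step is the structural reduction in the first paragraph; once it is in place, the algorithm is a clean composition of the ancestor-sum and subtree-sum primitives with a one-round ``gather cross-edge contributions'' step wedged between them, and no new model-level machinery is required.
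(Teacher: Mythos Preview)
Your proposal is correct and follows essentially the same approach as the paper: both arguments (i) identify which $P$-endpoints lie below $\bottom(e_{\mathrm{fix}})$, (ii) have each $Q$-node accumulate the total weight of incident cross-edges whose $P$-endpoint satisfies this, and (iii) finish with a subtree-sum on $Q$. The only cosmetic difference is that the paper determines $S_e$-membership by computing depths (via a subtree-sum with all-ones inputs) and comparing to $\depth(\bottom(e_{\mathrm{fix}}))$, which works because $P$ is a single descending path, whereas you use an ancestor-sum with an indicator input; both are one-shot applications of \Cref{lemma:Det-Ops}.
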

\begin{proof}
  First, we compute depths for each node on $P, Q$ using a single subtree-sum operation (initialize all private values to $1$ and use \Cref{lemma:Det-Ops}'s subtree sum with the $+$-aggregation on $T$). Then, for each cross-path edge $\{u, v\} = e \in E(G)$ with $u \in V(P), v \in V(Q)$, we perform the following. If $u$ is below $\bottom(e_{\text{fix}})$ (specifically, $\depth(u) \ge \depth(\bottom(e_u))$), we add $+w(e)$ to the label of $v$. Note that this operation can be performed in a single Minor-Aggregation round. Finally, the subtree sum of labels at a node $v$ represents the $\Cov_{T, G}(e_{\text{fix}}, \{v, \parent(v)\} )$. Therefore, we compute the subtree sum (\Cref{lemma:Det-Ops}) and obtain the required result in $\tl{O}(1)$ Minor-Aggregation rounds.
\end{proof}

Next, we develop an algorithm that solves \emph{separable} instances without any recursive calls.
\begin{lemma}\label{lemma:hollow-2-resp-cut}
  Assume the setting of \Cref{thm:path-to-path-algorithm} and suppose that $G - \{ r, \mytop(P), \bottom(P), \mytop(Q), \bottom(Q) \}$ has no cross-path edges. There exists a deterministic $\tl{O}(1)$-round Minor-Aggregation algorithm on $G$ that computes the minimum 2-respecting cut $\min_{e \in E(P), f \in E(Q)} \Cut_{T, G}(e, f)$.
\end{lemma}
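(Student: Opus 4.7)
The plan is to exploit separability to reduce the two-dimensional minimization to essentially one-dimensional pieces, each solvable with $\tl{O}(1)$ Minor-Aggregation rounds using \Cref{lemma:one-sided-cov-algorithm}.

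The starting point is the identity $\Cut(e, f) = \Cov(e) + \Cov(f) - 2\Cov(e, f)$ from \Cref{lemma:2-cut-vs-cov}, so once $\Cov(e_i, f_j)$ is ``separated'' we get a separable form for $\Cut$ as well. I first compute $\Cov(e)$ for every $e \in E(P) \cup E(Q)$ via \Cref{thm:1-respecting-cut-algorithm}. Then I split the minimization $\min_{e \in E(P), f \in E(Q)} \Cut(e,f)$ into three candidates and return the smaller: the ``boundary'' cases $i=1$ (any $j$), $j=1$ (any $i$), and the ``interior'' case $i \ne 1, j \ne 1$.

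For the two boundary candidates I apply \Cref{lemma:one-sided-cov-algorithm} twice. Invoking it with $e_{\text{fix}} = e_1$ makes every $f_j \in E(Q)$ learn $\Cov(e_1, f_j)$; after broadcasting the single scalar $\Cov(e_1)$, each $f_j$ locally forms $\Cut(e_1, f_j)$ and we take an aggregation-min. The analogous argument (with the roles of $P$ and $Q$ swapped) handles $j=1$.

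For the interior case I pick any $e^* \in E(P) \setminus \{e_1\}$ and $f^* \in E(Q) \setminus \{f_1\}$ (if either $|P| = 1$ or $|Q| = 1$, the interior case is vacuous), call \Cref{lemma:one-sided-cov-algorithm} twice (fixing $f^*$, and with the swapped version fixing $e^*$), and broadcast the single scalar $\Cov(e^*, f^*)$. The key identity I will prove is
\[
  \Cov(e_i, f_j) \;=\; \Cov(e_i, f^*) + \Cov(e^*, f_j) - \Cov(e^*, f^*)
  \qquad \text{for all } i \ne 1,\, j \ne 1,
\]
which immediately yields
\[
  \Cut(e_i, f_j) \;=\; \hat g(i) + \hat h(j) + 2\Cov(e^*, f^*),
  \quad \hat g(i) := \Cov(e_i) - 2\Cov(e_i, f^*),
  \quad \hat h(j) := \Cov(f_j) - 2\Cov(e^*, f_j),
\]
so that $\min_{i \ne 1, j \ne 1} \Cut(e_i, f_j)$ is obtained from two independent one-dimensional min aggregations plus a broadcast constant.

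The only nontrivial step, and what I expect to be the main obstacle, is proving the separability identity. The proof proceeds by a direct case analysis of cross-path edges. By the separability hypothesis, every cross-path edge $\{u, v\}$ with $u \in V(P), v \in V(Q)$ must have $u \in \{\mytop(P), \bottom(P)\}$ or $v \in \{\mytop(Q), \bottom(Q)\}$ (the option $r$ is excluded since $r \notin V(P) \cup V(Q)$). Indexing $e_i, f_j$ so that $\depth(\bottom(e_i)) = i$ and $\depth(\bottom(f_j)) = j$, each cross-path edge $\{u, v\}$ of weight $w$ contributes $w \cdot \mathbb{1}[i \le \depth(u)] \cdot \mathbb{1}[j \le \depth(v)]$ to $\Cov(e_i, f_j)$. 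Restricted to $i \ne 1, j \ne 1$, edges with $u = \mytop(P)$ (respectively $v = \mytop(Q)$) contribute zero, and the remaining contributors have $\depth(u) = |P|$ or $\depth(v) = |Q|$, making the indicator $\mathbb{1}[i \le \depth(u)]$ or $\mathbb{1}[j \le \depth(v)]$ identically $1$ on the interior. Hence every nonzero contribution is either a pure function of $j$ (when $\depth(u) = |P|$) or a pure function of $i$ (when $\depth(v) = |Q|$) or a constant (when both), giving a decomposition $\Cov(e_i, f_j) = A_j + B_i + K$ for $i, j \ne 1$. Substituting this form at $(e_i, f^*), (e^*, f_j), (e^*, f^*)$ cancels the unknowns $A_{j^*}, B_{i^*}, K$ and proves the identity. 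Degenerate cases ($|P|$ or $|Q|$ equal to $0$ or $1$) are absorbed into the boundary candidates, and the total round complexity is $\tl{O}(1)$ since each step consists of an $\tl{O}(1)$ invocation of \Cref{thm:1-respecting-cut-algorithm}, \Cref{lemma:one-sided-cov-algorithm}, or a constant broadcast/aggregation.
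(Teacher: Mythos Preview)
Your proof is correct and rests on the same core idea as the paper's: separability of $\Cov(e_i,f_j)$ (hence of $\Cut$) via a case analysis on which of $\{\mytop(P),\bottom(P),\mytop(Q),\bottom(Q)\}$ a cross-path edge touches. Two differences are worth recording.

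First, in the paper's setup the edge $\{r,\mytop(P)\}$ is not in $E(P)$ (and likewise for $Q$); consequently a cross-path edge incident to $\mytop(P)$ never covers any $e_i\in E(P)$, so its contribution to $\Cov(e_i,f_j)$ is zero for \emph{all} $i$, not just $i\neq 1$. The paper therefore gets separability on the whole grid and skips your boundary cases $i=1$, $j=1$. Your split into boundary and interior is thus unnecessary here, though it does no harm and would be needed under the alternative convention where the root edge lies in $E(P)$, so it is a defensible hedge.

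Second, and more interestingly, the computations differ. The paper explicitly assembles the two one-variable functions $F_P,F_Q$ directly from the structural case analysis (summing weights of cross-edges incident to $\bottom(P)$ and to $\bottom(Q)$). You instead never compute $F_P,F_Q$ explicitly: you pick reference edges $e^*,f^*$, invoke \Cref{lemma:one-sided-cov-algorithm} twice, and use the identity $\Cov(e_i,f_j)=\Cov(e_i,f^*)+\Cov(e^*,f_j)-\Cov(e^*,f^*)$, which follows formally from any decomposition $\Cov(e_i,f_j)=B_i+A_j+K$. This is a clean alternative that treats separability as a black box and avoids re-implementing the edge-by-edge accounting; the paper's version is a bit more direct but requires spelling out how to aggregate the $\bottom(P)$/$\bottom(Q)$ contributions.
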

\begin{proof}
  We show that, since the instance is separable, $\Cut_{T, G}(e, f)$ is \emph{separable} in the following sense: there exist two functions $F_P : E(P) \to \R$ and $F_Q : E(Q) \to \R$ such that $\Cut(e, f) = F_P(e) + F_Q(f)$ for all $e, f$. Due to $\Cut(e, f) = \Cov(e) + \Cov(f) - 2 \Cov(e, f)$ (\Cref{lemma:2-cut-vs-cov}) and $\Cov(e), \Cov(f)$ being trivially separable, it is sufficient to prove that $\Cov(e, f)$ is separable. We argue this by showing the contribution to $\Cov(\cdot, \cdot)$ from each type of allowable edges is separable.

  First, we note that any edge originating from $\mytop(P)$, $\mytop(Q)$ or $r$ does not contribute to $\Cov(e, f)$, making the contribution of such edges trivially separable. Second, the same-path edges do not contribute to $\Cov(e, f)$, making them trivially separable. Finally, there might exist edges that are incident to $\bottom(P)$ or $\bottom(Q)$. However, this is also separable: consider an edge $c := \{ \bottom(P), x \in V(Q) \}$; the contribution of $c$ to $\Cov(e, f)$ is $w(c)$ if $f$ is deeper than $x$ and $0$ otherwise; making it separable. The $\{ \bottom(Q), x \in V(P) \}$ case is symmetric. This covers all allowable types of edges.

  Finally, the functions $F_P, F_Q$ are easily computable in $\tl{O}(1)$ Minor-Aggregation rounds. First, we compute the $\Cov(e)$ and $\Cov(f)$ using the 1-respecting min-cut algorithm (\Cref{thm:1-respecting-cut-algorithm}). Following the case analysis from above, it is easy to calculate the contributions of cross-edges adjacent to $\bottom(P)$ or $\bottom(Q)$. Therefore, after we computed (and distributedly stored as edge vectors) $F_P$ and $F_Q$, we minimize each side separately and broadcast the result to $G$. This is the minimizing 2-respecting cut since $\min_{e \in E(P), f \in E(Q)} \Cut_{T, G}(e, f) = \min_{e \in E(P), f \in E(Q)} F_P(e) + F_Q(f) = \min_{e \in E(P)} F_P(e) + \min_{f \in E(Q)} F_Q(f)$.
\end{proof}

\begin{lemma}\label{lemma:path-to-path-algorithm}
  Assume the setting of \Cref{thm:path-to-path-algorithm} and suppose the instance is not separable. There exists a deterministic $\tl{O}(1)$-round Minor-Aggregation algorithm on $G - \{ r, \mytop(P), \bottom(P), \mytop(Q), \bottom(Q) \}$ that computes the minimum of 1-respecting and 2-respecting cuts $\min_{e \in E(P), f \in E(Q)} \Cut_{T, G}(e, f)$.
\end{lemma}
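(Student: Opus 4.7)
The plan is to follow the Monge-property-based divide-and-conquer strategy outlined in the introduction. I would first set $a := \lfloor |P|/2 \rfloor$ and pick $e_a$ as the midpoint edge of $P$; the position indices along $P$ and $Q$ can be computed with a single subtree-sum invocation from \Cref{lemma:Det-Ops} applied to the unit node vector restricted to $V(P) \cup V(Q)$. I would then invoke \Cref{lemma:one-sided-cov-algorithm} with $e_{\text{fix}} := e_a$ to deliver $\Cov(e_a, f)$ to every $f \in E(Q)$, combine it with the $\Cov(\cdot) = \Cut(\cdot)$ values obtained from \Cref{thm:1-respecting-cut-algorithm} via $\Cut(e_a, f) = \Cov(e_a) + \Cov(f) - 2\,\Cov(e_a, f)$ (\Cref{lemma:2-cut-vs-cov}), and then $\min$-aggregate along $Q$ to identify a best response $f_b$ and the candidate cut value $\Cut(e_a, f_b)$. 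By \Cref{fact:monge-property}, the overall minimum is either attained by $(e_a, f_b)$ or else lies entirely on $E(P\up) \times E(Q\up)$ with $P\up := \{e_1, \dots, e_{a-1}\}$ and $Q\up := \{f_1, \dots, f_{b-1}\}$, or entirely on $E(P\down) \times E(Q\down)$ (defined symmetrically).

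Next I would recurse on the two sub-problems. To decouple each sub-instance from the other side of the split, I would build private cut-equivalent virtual graphs $G\up$ and $G\down$ via \Cref{lemma:virtual-node-replacement}. In $G\up$ I collapse the pieces outside $P\up \cup Q\up$---namely $r$, $P\down$, and $Q\down$---into at most five virtual nodes (a virtual root together with a virtual top and virtual bottom for each remaining path). Every non-tree edge with an endpoint in the collapsed region is rerouted with its original weight to the appropriate virtual tip, while edges with both endpoints in the collapsed region contribute equally to every $(e,f)$-cut in $E(P\up) \times E(Q\up)$ and can be accounted for as a separate additive constant. A case analysis by edge type verifies that $\Cut_{T,G}(e,f) = \Cut_{T\up, G\up}(e,f) + \text{const}$ for all $(e,f) \in E(P\up) \times E(Q\up)$, and symmetrically for $G\down$. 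Since the two recursive sub-instances live on node-disjoint portions of $G$, \Cref{lemma:node-disjoint-scheduling} allows them to be scheduled in parallel.

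The main obstacle is avoiding a cascading simulation blow-up from the virtual nodes: each recursive level introduces a constant number of fresh virtual nodes, and naively compiling them away by \Cref{thm:simulating-virtual-nodes} at every level would multiply overheads at every recursion depth, destroying the polylogarithmic bound. My resolution is to maintain the invariant that each recursive call is invoked on a virtual graph whose underlying non-virtual portion is \emph{connected} and contains at most five virtual nodes; the non-separability hypothesis in the statement is precisely what guarantees this at the current level, and the cut-equivalent construction above propagates the invariant to each child. Immediately before returning from a level I would apply \Cref{thm:simulating-virtual-nodes} once to de-virtualize at that level only---an $O(1)$-factor overhead since $\beta = O(1)$---so the caller sees only work on the underlying connected network and no cascading occurs. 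If at any level the child sub-instance turns out to be separable, I would short-circuit and invoke \Cref{lemma:hollow-2-resp-cut} directly rather than recurse.

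Finally, I would return the minimum of the candidate $\Cut(e_a, f_b)$ and the two recursive answers via an $O(1)$-round consensus/aggregation step and broadcast it through the underlying connected network. Because $|P|$ strictly halves at each level, the recursion has depth $O(\log n)$; each level performs $\tl{O}(1)$ Minor-Aggregation work for index computation, the one-sided $\Cov$ evaluation, the virtual-graph construction, and the per-level de-virtualization, yielding the claimed $\tl{O}(1)$ overall bound.
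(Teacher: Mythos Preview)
Your overall strategy matches the paper's proof closely: midpoint plus best response, cut-equivalent virtual sub-instances, recursion with a separability check to avoid simulation cascade, and node-disjoint scheduling. However, there is one genuine correctness gap.

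Your appeal to \Cref{fact:monge-property} to conclude that ``the overall minimum is either attained by $(e_a,f_b)$ or else lies entirely on $E(P\up)\times E(Q\up)$ or on $E(P\down)\times E(Q\down)$'' is not valid as stated. The Monge property only guarantees that some optimum $(e',f')$ exists with both coordinates weakly on one side of $(e_a,f_b)$; it does not rule out $e'=e_a$ or $f'=f_b$. Since your splits exclude both $e_a$ and $f_b$, the case $f'=f_b$ with $e'\neq e_a$ is covered by neither $(e_a,f_b)$ nor either recursion. A concrete Monge matrix exhibiting this failure is
\[
M=\begin{pmatrix}3&1&2\\5&3&4\\5&3&4\end{pmatrix},
\]
where $a=2$, the best response in row~$2$ is $b=2$, and the global minimum $M_{12}=1$ lies in column~$b$ but not in row~$a$; your algorithm would report $\min(M_{22},M_{11},M_{33})=3$. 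The paper closes this gap by additionally computing $e_{a'}:=\arg\min_i \Cut(e_i,f_b)$, the best response to $f_b$ along $P$ (via a second invocation of \Cref{lemma:one-sided-cov-algorithm} with the roles of $P,Q$ swapped), and including $\Cut(e_{a'},f_b)$ among the candidates. With that one extra step your argument goes through; without it the algorithm can miss the optimum.

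The remaining differences from the paper are cosmetic: the paper makes only the two bottom endpoints of $P\up,Q\up$ virtual and retains the real root in $G\up$, so cut equivalence is exact rather than up to an additive constant; and the paper adds an explicit base case $|P|\le 10$ or $|Q|\le 10$ handled by brute force via \Cref{lemma:one-sided-cov-algorithm}, which you would also need once paths become short enough that removing five boundary nodes may disconnect them.
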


\begin{proof}
  \textbf{Algorithm.} We now present the algorithm facilitating this result and then prove its runtime and correctness.

  \begin{enumerate}  
  \item Note that computing 1-respecting cuts, i.e., values $\Cut_{T, G}(e)$ for each $e \in E(P) \cup E(Q)$ can be computed in $\tl{O}(1)$ Minor-Aggregation rounds via \Cref{thm:1-respecting-cut-algorithm}.

  \item Then, we note that if $|P| \le 10$ or $|Q| \le 10$, we can solve the problem in $\tl{O}(1)$ rounds: iterate over each edge of the smaller path and using the 2-respective fixed-edge algorithm (\Cref{lemma:one-sided-cov-algorithm}) to find all possible cover values. A final min-aggregation is required to compute the result.
  
  
  \item \textbf{Midpoint $e_a$ and its best response $f_b$.} Let $a := \lfloor \frac{|P|}{2} \rfloor$, making $e_a$ the midpoint edge of $P$. Then, for each $f \in E(Q)$, we compute $\Cov(e_a, f_j)$ using \Cref{lemma:one-sided-cov-algorithm}. Finally, we let $f_{b}$ be the edge that minimizes $b := \arg\min_j Cut(e_a, f_j) = Cov(e_a) + Cov(f_j) - 2 Cov(e_a, f_j)$, and compute $\Cut(e_a, f_b)$. Using $O(1)$ Minor-Aggregation rounds, all nodes and edges in $P$ and $Q$ learn $f_b$ and $\Cut(e_a, e_b)$. Furthermore, let $e_{a'}$ be the best response to $f_b$, i.e., the edge that minimizes $a' := \arg\min_i \Cut(e_i, f_b) = Cov(e_i) + Cov(f_b) - 2 Cov(e_i, f_b)$. All nodes and edges on $V(P) \cup V(Q)$ learn $\Cut(e_{a'}, f_b)$. 

  \item Split $P$ into two paths $P_{\text{up}} = (e_1, \ldots, e_{a-1})$ and $P_{\text{down}} = (e_{a+1}, \ldots, e_{|P|})$ (excluding $e_a$). Similarly, split $Q$ into $Q_{\text{up}} = (f_1, \dots, f_{b-1})$ and $Q_{\text{down}} = (f_{b+1}, \ldots, b_{|Q|})$. Let $p_{-1}, q_{-1}$ be the nodes of $P_{\text{up}}, Q_{\text{up}}$ farthest away from the root, resp. We replace $p_{-1}$ and $q_{-1}$ with a virtual node (\Cref{lemma:virtual-node-replacement}). This allows us to add arbitrary edges incident to them.

  \item \textbf{Constructing cut-equivalent $G_{\text{up}}$.} We now construct (and distributedly store) a graph $G_{\text{up}}$ which preserves 1- and 2-respecting cover values (and therefore, also cut values) for edge pairs $E(P_{\text{up}}) \times E(Q_{\text{up}})$. First, let $W$ be the total weight of all edges between (any node of) $P_{\text{down}}$ and (any node of) $Q_{\text{down}}$. We insert an edge between $p_{-1}$ and $q_{-1}$ of weight $W$. Second, for each node $v \in V(Q_{\text{up}})$ let $W_v$ be the total weight of edges between $v$ and (any node of) $P_{\text{down}}$. We insert an edge between $v$ and $p_{-1}$ of weight $W_v$. Finally, for each node $u \in V(P_{\text{up}})$ let $W_u$ be the total weight of edges between $u$ and (any node of $Q_{\text{down}})$. We insert an edge between $u$ and $q_{-1}$ of weight $W_u$. Let $T_{\text{up}} := T[V(P_{\text{up}}) \cup V(Q_{\text{up}}) \cup \{root\}]$ be the restriction of $T$ to $G_{\text{up}}$ (i.e., with all inclusive descendants of $e_a, f_b$ contracted). The following is immediate by construction. %
  \begin{fact}\label{fact:up-cut-equivalent}
    For all pairs of edges $e \in E(P_{\text{up}})$ and $f \in E(Q_{\text{up}})$ we have that $\Cov_{T_{\text{up}}, G_{\text{up}}}(e, f) = \Cov_{T, G}(e, f)$ and $\Cut_{T_{\text{up}}, G_{\text{up}}}(e, f) = \Cut_{T, G}(e, f)$.
  \end{fact}

  \item \textbf{Constructing cut-equivalent $G_{\text{down}}$.} Similarly, we construct (and distributedly store) $G_{\text{down}}$, which preserves 1- and 2-respecting cover values (and therefore, also cut values) for edge pairs $E(P_{\text{down}}) \times E(Q_{\text{down}})$. We define $G_{\text{down}}$ as the induced graph $G[ V(P_{\text{down}}) \cup V(Q_{\text{down}}) ]$ (i.e., $G$ restricted to edges going between $P_{\text{down}}$ and $Q_{\text{down}}$). We also add a virtual root node $r_{\text{down}}$ and connect it with arbitrary weight (since it's not considered) to the top nodes of $P_{\text{down}}$ and $Q_{\text{down}}$. It is easy to see (easier than for $G_{\text{up}}$) that $G_{\text{down}}$ preserves 1- and 2-respecting cover values and cuts. Defining $T_{\text{down}} := T[V(P_{\text{down}}) \cup V(Q_{\text{down}}) \cup \{r_{\text{down}}\}]$, this is formalized as follows. %
  \begin{fact}\label{fact:down-cut-equivalent}
    For all pairs of edges $e \in E(P_{\text{down}})$ and $f \in E(Q_{\text{down}})$ we have that $\Cov_{T_{\text{down}}, G_{\text{down}}}(e, f) = \Cov_{T, G}(e, f)$ and $\Cut_{T_{\text{down}}, G_{\text{down}}}(e, f) = \Cut_{T, G}(e, f)$.
  \end{fact}
  
  \item \textbf{Recursion on $G_{\text{up}}$.} We now recursively compute the minimum 2-respecting cut $\min_{e \in E(P_{\text{up}}), f \in E(Q_{\text{up}})}\allowbreak \Cut_{T, G}(e, f)$. First, if $T_{\text{up}} \subseteq G_{\text{up}}$ is a separable instance (which can be checked in a single Minor-Aggregation round), we solve the problem without recursion via (\Cref{lemma:hollow-2-resp-cut}). Otherwise, we recursively call (the same \Cref{lemma:path-to-path-algorithm}) on $T_{\text{up}} \subseteq G_{\text{up}}$ to recover the result with an algorithm that operates on $G_{\text{up}} - \{ r, \mytop(P_{\text{up}}), \bottom(P_{\text{up}}), \mytop(Q_{\text{up}}), \bottom(Q_{\text{up}}) \}$. Furthermore, all virtual nodes (introduced in the current recursive call) are contained within the deleted nodes, hence the same algorithm also runs on $G$ without the need to eliminate any virtual nodes. This prevents simulation cascade.


  \item \textbf{Recursion on $G_{\text{down}}$.} We, analogously, compute the minimum 2-respecting cut on $T_{\text{down}} \subseteq G_{\text{down}}$. If $G_{\text{down}} - \{ r_{\text{down}}, \mytop(P_{\text{down}}), \bottom(P_{\text{down}}), \mytop(Q_{\text{down}}), \bottom(Q_{\text{down}}) \}$ is not connected (i.e., the instance is separable) we avoid recursion and use \Cref{lemma:hollow-2-resp-cut} to solve the instance. Otherwise, we use a recursive call that can be immediately run on $G$ without any translation.

  \item \textbf{Eliminating virtual nodes.} Finally, we remind that the final algorithm (by assumption) is required to avoid (not use) the nodes $D := \{ r, \mytop(P), \bottom(P), \mytop(Q), \bottom(Q) \}$ as they are potentially virtual. However, this is easy: the recursive calls already do not use any node in $D$ by assumption. We only need to eliminate the usage of $D$ in the non-recursive parts of the algorithm. But this is immediate from \Cref{thm:simulating-virtual-nodes} with a multiplicative blowup of $O(1)$ since we introduced only $\beta \le O(1)$ many virtual nodes (in the current recursive call) and $G - D$ is connected because $|P| > 10, |Q| > 10$ and the instance is not separable. This concludes the description of the algorithm.
  \end{enumerate}
  
  \noindent\textbf{Runtime analysis.} We point out that the algorithms computing the 2-respecting min-cuts on $G_{\text{up}}$ and $G_{\text{down}}$ are node-disjoint. Therefore, we can schedule them simultaneously. Suppose that, excluding the recursive calls, the maximum number of Minor-Aggregation rounds the other operations within the current call take is $\tl{O}(1) \le C \log_2^C n$, for some sufficiently large constant $C > 0$. Furthermore, since the length of the path $P$ at least halves in each subsequent recursion level, the depth of the recursion is $O(\log n)$. We show, by induction, that the algorithm terminates in at most $(\log_2 |P|) (C \log_2^C n)$ rounds (for some universal sufficiently-large constant $C > 0$). Clearly, the assumption is true on the leaves of the recursion as every step is $\tl{O}(1)$ rounds. Furthermore, for some paths $P, Q$, the two recursive calls (together) take at most $(\log_2 |P|/2) (C \log_2^C n)$ which, with the extra processing, gives us a bound of $(\log_2 |P|/2) (C \log_2^C n) + C \log_2^C n = (\log_2 |P|) (C \log_2^C n)$ rounds, as required.

  \smallskip

  \noindent\textbf{Correctness Analysis.} First, we note that the algorithm only checks some number of existing 2-respecting cuts, hence it can never report an answer that is smaller than the optimum solution. We only need to show that it successfully managed to find the optimum. Suppose that $(e^*, f^*) \in E(P) \times E(Q)$ is the pair that minimizes the 2-respecting cut. If $e^* = e_a$ (i.e., the optimal edge is the midpoint edge of $P$), then (by definition) the best response $f_b$ gives the optimal solution. Similarly, if $f^* = f_b$, then the best response $e_{a'}$ gives the optimal solution. Now, we show that an optimum solution $(e', f')$ must exist where both $e'$ and $f'$ must be either both closer (or equal) to the root of both farther away (or equal) from the root than $e_a, f_b$. This follows from the Monge property. If both $e^*$ and $f^*$ are closer (or both are farther away), then we are done. Now, assume $e^*$ is closer and $f^*$ is farther away, then by \Cref{fact:monge-property} we have $\Cut(e^*, f_b) + \Cut(e_a, f^*) \le \Cut(e^*, f^*) + \Cut(e_a, f_b)$, which implies $\Cut(e^*, f_b) \le \Cut(e^*, f^*) + [\Cut(e_a, f_b) - \Cut(e_a, f^*)]$. Note that the term inside $[\cdot]$ is non-positive since $f_b$ is the best response to $e_a$, hence $\Cut(e^*, f_b) \le \Cut(e^*, f^*)$, implying that $(e', f') = (e^*, f_b)$ is an optimum solution and satisfies the requirements. The case where $e^*$ is farther and $f^*$ is closer is analogous. Therefore, the optimum solution can be found in either $E(P_{\text{up}}) \times E(Q_{\text{up}})$ or $E(P_{\text{down}}) \times E(Q_{\text{down}})$. However, we assumed we solved the problems (recursively or via separable instances) on $G_{\text{up}}$ and $G_{\text{down}}$. Since they are cut-equivalent to the original graph (\Cref{fact:down-cut-equivalent} and \Cref{fact:up-cut-equivalent}, we conclude we found the optimum.
\end{proof}

Finally, with all the ingredients in place, we can directly argue \Cref{thm:path-to-path-algorithm}.
\begin{proof}[Proof of \Cref{thm:path-to-path-algorithm}]
  If the instance is separable, or $|P| \le 10$, or $|Q| \le 10$, we can trivially solve the problem using \Cref{lemma:one-sided-cov-algorithm} and \Cref{lemma:hollow-2-resp-cut}. Otherwise, we use \Cref{lemma:path-to-path-algorithm} and conclude.
\end{proof}

\section{Star 2-Respecting Min-Cut}\label{sec:star-cut}

In this section, we show how to compute the minimum 2-respecting cut between $k$ paths $P_1, P_2, \ldots, P_k$ (adjoined with a root for orientation purposes, see \Cref{fig:star}). We call such an input a ``star instance'' and formalize it in the following definition.

\begin{definition}
  A \textbf{star instance} $\{ P_i \}_{i=1}^k \subseteq T \subseteq G$ is composed of the following. Suppose $G$ is a weighted graph and $T \subseteq G$ is $G$'s (rooted) spanning tree. Moreover, $T$ is composed of exactly a root $r$, and $k$ (disjoint) descending paths $P_1, P_2, \ldots, P_k$.
\end{definition}

The following result formalizes the goal; it is the main result that will be proved later in the section once sufficient tooling is developed.
\begin{restatable}{theorem}{thmStarAlgorithm}\label{thm:star-algorithm}
  Given a star instance $\{ P_i \}_{i=1}^k \subseteq T \subseteq G$, there exists a deterministic $\tl{O}(1)$-round Minor-Aggregation algorithm on $G$ that computes the minimum of 1-respecting cuts and 2-respecting cuts $\min_{i < j} \allowbreak \min_{e \in E(P_i), f \in E(P_j)} \allowbreak \Cut_{T, G}(e, f)$.
\end{restatable}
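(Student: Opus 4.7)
The plan is to follow the strategy outlined in the overview, which isolates a small number of ``relevant'' pairs of paths via the \emph{path interest} concept of Mukhopadhyay and Nanongkai. First, I would compute all 1-respecting cut values using \Cref{thm:1-respecting-cut-algorithm}, so that the remaining task is the 2-respecting case. I would then declare a path $P_i$ to be \emph{interested} in $P_j$ if there exists an edge $e \in E(P_i)$ such that a weighted majority of the \emph{cross-edges} covering $e$ in $G$ have their other endpoint in $V(P_j)$. A direct application of \Cref{lemma:interesting-path-vs-cov} (in the cross-edge variant indicated in the overview, which is what sidesteps the lack of a removal operation in the heavy-hitter sketch) shows that if the minimum 2-respecting cut is strictly smaller than any 1-respecting cut and its defining edges lie on distinct paths $P_{i^\star}$ and $P_{j^\star}$, then $P_{i^\star}$ and $P_{j^\star}$ must be \emph{mutually} interested. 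As recorded in the overview, each $P_i$ can be interested in at most $O(\log n)$ other paths.

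Next I would compute, for every path $P_i$, its list of at most $\tl{O}(1)$ interests. For a fixed edge $e \in E(P_i)$, the approximate majority path-index in the multiset of cross-edges covering $e$ (weighted by edge weight) can be read off a Misra--Gries style sketch, which is a valid aggregation operator by \Cref{example:heavy-hitters}. I would realise the cross-edge contribution as a node vector: for each cross-edge $\{u,v\}$ with $u \in V(P_i),\ v \in V(P_j)$ I place at $u$ the singleton sketch containing path-index $j$ with weight $w(\{u,v\})$, and symmetrically at $v$. A single call to the deterministic subtree-sum primitive (\Cref{lemma:Det-Ops}) with the heavy-hitters aggregator then lets every tree-node, and hence every edge $e = \{\parent(x),x\}$, learn the $\tl{O}(1)$ path indices that are (approximate) majority covers of $e$. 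The union of these outputs over $e \in E(P_i)$ is $P_i$'s interest list, of size $\tl{O}(1)$.

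With the interest lists in hand, I would form the \emph{interest graph} $H$ whose nodes are the paths and whose edges are the mutually-interested pairs; every such edge is certified by a concrete cross-edge of $G$, so any Minor-Aggregation algorithm on $H$ can be simulated on $G$ by contracting each $P_i$ to a single supernode. Since the maximum degree of $H$ is $\Delta = O(\log n)$, I can moreover simulate arbitrary CONGEST algorithms on $H$ with $\tl{O}(1)$ overhead, and in particular run the deterministic edge-colouring algorithm of Panconesi and Rizzi to obtain a proper $O(\Delta) = \tl{O}(1)$-colouring of $E(H)$. I would then process each colour class in turn: within one class the matched pairs of paths are pairwise node-disjoint, so for each pair $(P_i,P_j)$ I build a path-to-path instance by replacing its pair of top nodes with a fresh virtual root (via \Cref{lemma:virtual-node-replacement} and \Cref{thm:simulating-virtual-nodes}) and invoke \Cref{thm:path-to-path-algorithm} on all pairs of the colour class simultaneously using the node-disjoint scheduling guarantee (\Cref{lemma:node-disjoint-scheduling}). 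A final min-aggregation over all colour classes, together with the 1-respecting values, yields the answer.

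The main obstacle, as flagged in the overview, is the interaction between the aggregator's one-sided nature and the definition of ``interest''. A naive formulation that counts all covering non-tree edges would need to delete contributions at the lowest common ancestor of the two endpoints of such an edge, which the heavy-hitter sketch cannot support. The fix is to restrict to cross-edges so that contributions propagate only upward through the subtree and never need to be withdrawn; the work lies in verifying (via \Cref{lemma:interesting-path-vs-cov}) that this restriction still captures every minimising pair and still produces an interest list of size $\tl{O}(1)$. Once these structural facts are in place, the remaining pieces (subtree sums, interest-graph simulation, edge colouring, and the parallel invocations of the path-to-path algorithm) each run in $\tl{O}(1)$ Minor-Aggregation rounds, giving the claimed $\tl{O}(1)$-round bound.
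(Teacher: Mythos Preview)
Your proposal is correct and follows essentially the same approach as the paper: compute 1-respecting cuts, build interest lists via subtree sums with the heavy-hitter aggregator over cross-edges (\Cref{lemma:interest-list-computation}), form the bounded-degree interest graph, simulate Panconesi--Rizzi edge colouring on it (\Cref{lemma:hl-interest-simulation,lemma:edge-coloring-congest}), and then invoke \Cref{thm:path-to-path-algorithm} in parallel on each colour class with a fresh virtual root per matched pair. The only cosmetic deviations are that the paper isolates the cross-edge version of \Cref{lemma:interesting-path-vs-cov} as a separate statement (\Cref{lemma:best-cut-has-interest}) and that the virtual root is \emph{added} and connected to the existing top nodes of $P_i,P_j$ rather than replacing them; neither affects the argument.
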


\begin{figure}
  \centering
  \includegraphics[width=0.18\textwidth]{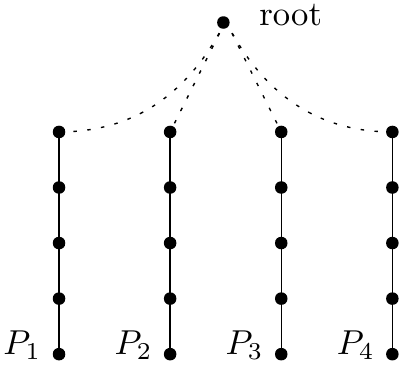}
  \caption{A star instance with $k = 4$ paths.}
  \label{fig:star}
\end{figure}

\subsection{A structural result: path interest}\label{sec:path-interest}

We now derive a significant structural result observed by Mukhopadhyay and Nanongkai~\cite{mukhopadhyay2020weighted}, which allows us to solve star instances efficiently: if the 2-respecting cut determined by the pair of edges $e, f$ has a smaller value than any 1-respecting cut, than more than half of the edges covering $e$ also cover $f$. The analogous claim also holds if we only consider only the cross edges (which will allow us to avoid certain technical issues later).

\medskip

\noindent\textbf{Notation.} Given a star instance $\{ P_i \}_{i=1}^k \subseteq T \subseteq G$, an edge $\{u, v\} = e \in E(G)$ is a \textbf{cross-edge} if its endpoints $u, v$ are in different paths $u \in V(P_i)$ and $v \in V(P_j)$ for $i \neq j$. For $e, f \in E(T)$ on different paths, we define $\CrossCov(e, f)$ as the sum of weights of all cross-edges $\{u, v\} \in E(G)$ such that the unique $u$-to-$v$ path in $T$ covers both $e$ and $f$. We also define $\CrossCov(e) := \CrossCov(e, e)$.

\begin{lemma}\label{lemma:best-cut-has-interest}
  Let $\{ P_i \}_{i=1}^k \subseteq T \subseteq G$ be a star instance. Given two path edges $e \in E(P_i), f \in E(P_j), i \neq j$, if $\Cut(e, f)$ is smaller than any 1-respecting cut, then $\CrossCov(e, f) > \CrossCov(e) / 2$.
\end{lemma}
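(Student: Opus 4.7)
The plan is to mimic the short proof of \Cref{lemma:interesting-path-vs-cov}, refining the bookkeeping so that only cross-edges appear on both sides of the resulting inequality. First, I would invoke \Cref{lemma:2-cut-vs-cov} to write $\Cut(e,f) = \Cov(e) + \Cov(f) - 2\,\Cov(e,f)$. Combined with the hypothesis that $\Cut(e,f)$ is smaller than every 1-respecting cut (in particular, $\Cut(e,f) < \Cut(f) = \Cov(f)$), this yields the ``raw'' inequality $\Cov(e) < 2\,\Cov(e,f)$, exactly in the style of \Cref{lemma:interesting-path-vs-cov}.

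The main step is then to upgrade the raw inequality into the cross-cover statement using two structural facts about star instances. The key identity is $\Cov(e,f) = \CrossCov(e,f)$. Any edge $g = \{u,v\} \in E(G)$ whose $T$-path covers both $e \in E(P_i)$ and $f \in E(P_j)$ must cross both $P_i$ and $P_j$; since every $T$-path in a star either stays inside a single $P_\ell$ or travels up to the root $r$ and back down into another path, this forces one endpoint into $V(P_i)$ (strictly below $e$) and the other into $V(P_j)$ (strictly below $f$), which is exactly the condition defining $\CrossCov(e,f)$. On the other hand, $\CrossCov(e) \le \Cov(e)$ holds trivially since $\CrossCov(e)$ restricts the sum in $\Cov(e)$ to cross-edges. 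Chaining, $\CrossCov(e) \le \Cov(e) < 2\,\Cov(e,f) = 2\,\CrossCov(e,f)$, which is the desired conclusion.

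The only (minor) obstacle is the case analysis establishing $\Cov(e,f) = \CrossCov(e,f)$: one must check that (i) no same-path edge in any $P_\ell$ can cover both $e$ and $f$ (each such edge's $T$-path lies entirely in a single $P_\ell$, while $e, f$ lie in different paths), (ii) no cross-edge between $P_a$ and $P_b$ with $\{a,b\} \neq \{i,j\}$ can cover both (covering $e$ already pins one endpoint into $V(P_i)$ below $e$, and then covering $f$ forces the other endpoint into $V(P_j)$ below $f$), and (iii) edges incident to the root cover at most one of $\{e,f\}$ since their $T$-path lies within a single $P_\ell$. Modulo this bookkeeping, the lemma is a one-line consequence of the cut-cover identity.
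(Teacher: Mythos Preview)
Your proposal is correct and follows essentially the same approach as the paper's proof: both obtain $\Cov(e) < 2\,\Cov(e,f)$ from the hypothesis (the paper by citing \Cref{lemma:interesting-path-vs-cov} directly, you by re-deriving it), and then convert this to the cross-cover statement via the two structural facts $\Cov(e,f) = \CrossCov(e,f)$ and $\CrossCov(e) \le \Cov(e)$. Your case analysis for $\Cov(e,f) = \CrossCov(e,f)$ is more detailed than the paper's one-line justification, but the logic is identical.
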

\begin{proof}
  Since $\Cut_{T, G}(e, f)$ is smaller than any 1-respecting cut, we have that $\Cov(e, f) > Cov(e) / 2$. (\Cref{lemma:interesting-path-vs-cov}). Since $e$ and $f$ are on different paths, we have that $\Cov(e, f) = \CrossCov(e, f)$. Furthermore, since the set of cross-edges is a subset of $E(T)$, we have $\Cov(e) \ge CrossCov(e)$. Combining, we get $\CrossCov(e, f) = \Cov(e, f) > \Cov(e) / 2 \ge \CrossCov(e) / 2$.
\end{proof}

Previously, we only talked about pairs of edges. We generalize this notion to pairs of paths (called \emph{interested paths}) with the following definition.
\begin{definition}
  Let $\{ P_i \}_{i=1}^k \subseteq T \subseteq G$ be a star instance. Given two edges tree edges $e, f$, we say $e$ is \textbf{$\alpha$-interested} (for some $0 < \alpha < 1$) if $\CrossCov(e, f) > \alpha \cdot \CrossCov(e)$. Similarly, $P_i$ is $\alpha$-interested in $P_j$ if some edge $e \in E(P_i)$ is $\alpha$-interested in $P_j$. Furthermore, we call pairs of $1/2$-interested edges (or paths) \textbf{strongly interested} and $1/5$-interested edges (or paths) \textbf{weakly interested}.
\end{definition}


The salient reason why path interest helps in solving star instances is the fact that a path is only interested in a few other paths. This greatly reduces the number of pairs of paths we need to consider when searching for optimal 2-respecting cuts.
\begin{lemma}\label{lemma:path-weak-interest-cardinality}
  Let $\{ P_i \}_{i=1}^k \subseteq T \subseteq G$ be a star instance. Each path $P_i$ is weakly interested in at most $O(\log n)$ paths $\{ P_j \}$.
\end{lemma}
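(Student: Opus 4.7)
The plan is to fix $P_i$, enumerate its edges top-to-bottom as $e_1, e_2, \ldots$, and recast the interest structure as a heavy-hitter problem. For each $k$ and each $j \neq i$, let $W_{k,j}$ denote the total weight of cross-edges with one endpoint in $\subtree(\bottom(e_k)) \cap V(P_i)$ and the other endpoint on $V(P_j)$, and set $W_k := \sum_{j \neq i} W_{k,j}$. The first step is to verify three facts: $W_k = \CrossCov(e_k)$; both $W_{k,j}$ and $W_k$ are non-increasing in $k$ (as $k$ grows the relevant $P_i$-subtree can only lose nodes); and $\max_{f \in E(P_j)} \CrossCov(e_k, f) = W_{k,j}$, attained when $f$ is the topmost edge of $P_j$. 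Together these yield that $P_i$ is weakly interested in $P_j$ if and only if some $k$ satisfies $W_{k,j} > W_k/5$, reducing the lemma to bounding the number of such $j$.

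The heart of the argument is a geometric banding of the level index. Starting from $k_0 = 1$, recursively define $k_{s+1}$ to be the smallest $k > k_s$ with $W_k \le W_{k_s}/2$; because edge weights are $\poly(n)$-bounded integers one has $W_1 \le \poly(n)$, so the process terminates after $B = O(\log n)$ steps. For every $k$ in a band $[k_s, k_{s+1})$ one has $W_k \in (W_{k_s}/2, W_{k_s}]$ by the choice of $k_{s+1}$ and monotonicity in $k$.

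Next I would bound the number of weakly interested paths whose witness $k$ lies in a fixed band $[k_s, k_{s+1})$. If $P_j$ has witness $k$ in this band, then $W_{k,j} > W_k/5 > W_{k_s}/10$, and monotonicity of $W_{\cdot,j}$ gives $W_{k_s, j} \ge W_{k,j} > W_{k_s}/10$. Summing these lower bounds over the interested $j$'s of band $s$ and comparing with the identity $\sum_{j \neq i} W_{k_s, j} = W_{k_s}$ shows that the count per band is strictly less than $10$. Multiplying the constant per-band count by the $O(\log n)$ total bands yields the desired $O(\log n)$ bound.

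The main obstacle is carefully establishing the reformulation ``$P_i$ weakly interested in $P_j$ iff $\exists k: W_{k,j} > W_k/5$''. In particular, the forward direction requires arguing that for any $e_k, f$ with $\CrossCov(e_k,f) > \CrossCov(e_k)/5$, sliding $f$ up to the topmost edge of $P_j$ cannot decrease $\CrossCov(e_k, f)$, which is a small calculation on the $T$-path structure of a star instance. Once the reformulation and the two monotonicity properties are in hand, the geometric-layering step is standard.
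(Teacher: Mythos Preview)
Your argument is correct and rests on the same two pillars as the paper's proof: a heavy-hitter count (at any fixed level $k$, at most a constant number of paths $P_j$ can satisfy $W_{k,j} > W_k/10$) together with a geometric scale argument on $\CrossCov(e_k)=W_k$. The execution, however, is organized differently. The paper walks $P_i$ from the bottom up, maintains a set of ``marked'' paths, and proves a dedicated local lemma (its Subclaim~1): whenever an edge $e_1$ higher on $P_i$ is weakly interested in a path that is not yet marked, then $\CrossCov(e_1)\ge 1.1\cdot \CrossCov(e_2)$ for the previous marking edge $e_2$; this bounds the number of marking events by $O(\log n)$, and each event marks $O(1)$ paths by the heavy-hitter Subclaim~2. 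Your dyadic banding is more direct: you exploit the monotonicity of \emph{both} $W_k$ and $W_{k,j}$ in $k$ (which holds in a star because every cross-edge covering $e_k$ also covers all $P_i$-edges above it) to push any witness inside a band $[k_s,k_{s+1})$ back to the band's top $k_s$, so a single heavy-hitter count at $k_s$ suffices. This sidesteps the paper's Subclaim~1 entirely and gives a slightly cleaner constant, at the modest cost of first establishing the reformulation $\max_{f\in E(P_j)}\CrossCov(e_k,f)=W_{k,j}$, which is exactly the small ``slide $f$ to the top'' calculation you flagged.
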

\begin{proof}
  We first show two subclaims and then proceed the prove the result.

  \paragraph{Subclaim 1.} Suppose that $e_1, e_2 \in E(P_i)$ where $e_1$ is closer to the root. Let $f \in E(P_j), i \neq j$ be an edge in a different path. If $e_2$ is \emph{not} $1/10$-interested in $f$, but $e_1$ is weakly interested in $f$, then $\CrossCov(e_1) \ge 1.1 \cdot \CrossCov(e_2)$.

  \paragraph{Proof of Subclaim 1.}  Let $p$ be the unique $T$-path between $\bottom(e_1)$ and $\mytop(e_2)$. Note that for $i \in \{1, 2\}$, we have $\Cov(e_i, f) = \CrossCov(e_i, f)$ since $e_i$ and $f$ are on different paths. We have that $\Cov(e_1, f) - \Cov(e_2, f) \le \CrossCov(e_1) - \CrossCov(e_2)$ since the LHS counts the number of cross-edges having one endpoint on $p$ and the other in the (maximal) subpath rooted at $\bottom(f)$, while the RHS counts the total number of cross-edges with an endpoint on $p$.

  Furthermore, we have that $\CrossCov(e_1) \ge \CrossCov(e_2)$ since every cross-edge covering $e_2$ must also cover $e_1$ (otherwise its endpoint would be on $p$ and it would not be a cross-edge).
  
  Finally, since $\Cov(e_1, f) > 1/5 \cdot \Cov(e_1)$ (weak interest), and $\Cov(e_2, f) \le 1/10 \cdot \CrossCov(e_2)$ (no $1/10$-interest), we have $\Cov(e_1, f) - \Cov(e_2, f) > 1/5 \cdot \CrossCov(e_1) - 1/10 \cdot \CrossCov(e_2) \ge 1/10 \cdot \CrossCov(e_2)$. Combining, we have:
  \begin{align*}
    \CrossCov(e_1) & = \CrossCov(e_2) + ( \CrossCov(e_1) - \CrossCov(e_2) ) \\
                   & \ge \CrossCov(e_2) + ( \Cov(e_1, f) - \Cov(e_2, f) ) \\
                   & \ge \CrossCov(e_2) + 1/10 \cdot \CrossCov(e_2) = (1 + 1/10) \cdot \CrossCov(e_2) .
  \end{align*}
  This proves the subclaim.

  \paragraph{Subclaim 2.} Each edge $e_{\text{fix}} \in E(P_i)$ is $1/10$-interested in at most $10$ paths.
  \paragraph{Proof of Subclaim 2.} We consider all cross-edges with one endpoint in $\subtree(e_{\text{fix}})$. If $e_{\text{fix}}$ is $1/10$-interested in $P_j$, then $1/10$-fraction of those edges must have their other endpoint in $P_j$. But since the other endpoint is  unique, there can be at most $10$ such different $P_j$s that $P_i$ is $1/10$-interested in. This proves the subclaim.

  \paragraph{Completing the proof using the subclaims.} Consider a path $P_i$ and let $e_1, \ldots, e_\ell$ be the edges of $P_i$ ordered from bottom-most (farthest away from the root) to top-most (closest to the root). We \emph{mark} all paths $P_j$ that $e_1$ is $1/10$-interested in. We iteratively consider $e_2, e_3, \ldots, e_\ell$ until we find an unmarked path $e_i$ is weakly interested in. At that point, we mark all $O(1)$ paths $e_i$ is $1/10$-interested in (Subclaim 2). Note that we can find such an edge $e_i$ with an unmarked weak interest at most $O(\log n)$ times since, due to Subclaim 1, each time we encounter such an edge $\CrossCov(e_i)$ increases by a multiplicative $1.1$-factor (and $\CrossCov(e_i)$ is at most the sum of weights over all edges, hence polynomially bounded). Therefore, since the markings can happen $O(\log n)$ times, and each time we mark $O(1)$ paths, at most $O(\log n)$ new paths can be marked. Finally, it is clear from construction that all weakly interested paths are marked, hence proving the claim.
\end{proof}

\subsection{Interest graph}\label{sec:interest-graph}
We now start making the structural path interest result algorithmic. We first compute the list of paths each path $P_i$ is interested in. The following structure formalizes the properties we require.
\begin{definition}\label{def:interest-list}
  Given a star instance $\{ P_i \}_{i=1}^k \subseteq T \subseteq G$, an \textbf{interest list} of a path $P_i$ is a list of path IDs $P_j$ such that (1) the list contains (the IDs of) all paths that $P_i$ is strongly interested in, (2) for each path $P_j$ in the list of $P_i$, we have that $P_i$ is (at least) weakly interested in $P_j$.
\end{definition}
In other words, an interest list of $P$ contains all paths that $P$ is strongly interested in, but may contain some additional paths that $P$ is only weakly interested in. Note that the size of any valid interest list is $\tl{O}(1)$ due to \Cref{lemma:path-weak-interest-cardinality}.

\begin{lemma}\label{lemma:interest-list-computation}
  Let $\{ P_i \}_{i=1}^k \subseteq T \subseteq G$ be a star instance. There is a deterministic $\tl{O}(1)$-round Minor-Aggregation algorithm after which (all nodes and edges on) $P_i$ learns its interest list.
\end{lemma}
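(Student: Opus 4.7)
The plan is to use the deterministic Misra--Gries heavy-hitter aggregator from \Cref{example:heavy-hitters} together with the subtree-sum primitive of \Cref{lemma:Det-Ops}. The key observation is that, for any $e \in E(P_i)$ and any other path $P_j$, if $f_{\mathrm{top}} \in E(P_j)$ denotes the topmost edge of $P_j$, then $\CrossCov(e, f_{\mathrm{top}})$ equals the total weight of cross-edges from $\desc(\bottom(e))$ to $V(P_j)$; moreover, if $e$ is $\alpha$-interested in some $f \in E(P_j)$, then $\CrossCov(e, f_{\mathrm{top}}) \ge \CrossCov(e, f) > \alpha \CrossCov(e)$. Hence testing whether $P_i$ is $\alpha$-interested in $P_j$ via some edge $e$ reduces to identifying path-level heavy hitters among the cross-edges covering $e$.

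First, I would propagate path IDs by marking each child-of-root of $T$ with its own ID and running an ancestor-sum via \Cref{lemma:Det-Ops} so that every $u \in V(P_i)$ learns $\mathrm{id}(P_i)$; subsequently every edge can tell whether it is a cross-edge and to which path it leads. For each node $u$ I then collapse its incident cross-edges into a single $\tl{O}(1)$-bit Misra--Gries sketch $H_u$ (with parameter $h = 5$) over the universe of path IDs, using a single aggregation step. Next, I invoke the subtree-sum primitive of \Cref{lemma:Det-Ops} with the same heavy-hitter aggregator on the $H_u$'s. For each non-root $v \in V(P_i)$ with parent-edge $e := \{\parent(v), v\}$, the resulting sketch $S_v$ summarizes all cross-edges with an endpoint in $\desc(v)$, i.e., exactly the cross-edges covering $e$; reading off its heavy hitters yields a set $L_v$ which, by the Misra--Gries guarantees, contains every $P_j$ in which $e$ is strongly interested (since $1/2 > 2/5$) and only paths in which $e$ is at least weakly interested (since anything with frequency $\le \CrossCov(e)/5$ is excluded).

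Third, I aggregate the per-edge sets along $P_i$. By \Cref{lemma:path-weak-interest-cardinality}, the union $\bigcup_{v \in V(P_i)} L_v$ has size $O(\log n)$, and the same bound holds for any subpath: any path appearing in an $L_v$ at some $v$ is already a witness to $P_i$ being at least weakly interested in it. This makes the ``bounded-size set union'' a valid $\tl{O}(1)$-bit mergeable aggregator. One more subtree-sum pass with this aggregator deposits the complete union at the topmost node of $P_i$, and a symmetric ancestor-sum pass broadcasts it down so that every node and edge of $P_i$ learns the resulting interest list.

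The main obstacle is the bookkeeping in the first two steps: convincing oneself that path-level heavy hitters with the specific Misra--Gries thresholds (via $h = 5$) simultaneously capture all strongly interested paths and never include anything weaker than weakly interested, and that every intermediate sketch---both the Misra--Gries sketches propagated through $T$ and the bounded-size unions along $P_i$---fits in $\tl{O}(1)$ bits, which crucially rests on \Cref{lemma:path-weak-interest-cardinality}. Once those are in place, the algorithm is a constant number of \Cref{lemma:Det-Ops} invocations plus one aggregation step, yielding the claimed deterministic $\tl{O}(1)$-round Minor-Aggregation algorithm.
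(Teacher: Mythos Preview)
Your approach is essentially the same as the paper's: both feed the Misra--Gries heavy-hitter sketch into a subtree-sum so that each tree edge identifies the paths holding a large fraction of its covering cross-edges, then union these candidates along each path (bounded via \Cref{lemma:path-weak-interest-cardinality}). The differences are cosmetic---the paper uses $h=4$ instead of $h=5$, and performs the final per-path union by contracting all path edges and taking a single consensus step rather than another subtree/ancestor-sum pass, which cleanly avoids the concern that your bounded-union aggregator could overflow $\tl{O}(1)$ bits at the root of $T$.
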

\begin{proof}
  All nodes and edges can each $P_i$ can learn they are a part of $P_i$ using a single Minor-Aggregation round where all path edges are contracted and they can agree on an arbitrary ID of $P_i$. Note that each edge $e$ knows whether it is a cross-edge or not since its endpoints know in which path they are. Next, we assign special labels for all cross-edges. A cross edge $e := \{w_1 \in V(P_i), w_2 \in V(P_j)\}$ gets assigned (1) a label of (ID of) $j$ with a weight of $w(e)$, and (2) a label of (ID of) $i$ with a weight of $w(e)$. Nodes on $P_i$ will ignore label (2), i.e., only seeing ID $j$ and vice versa for $P_j$ which ignore label (1). Other edges do not get assigned a label.

  Next, each node $v$ of each path computes $O(1)$ (approximately) the most frequent labels (with respect to the weight as the multiplicity) among the edges in $\subtree(v)$. This is performed using the subtree sum task \Cref{lemma:Det-Ops} on each $P_i$ where the aggregation operator is the heavy-hitter operator (from \Cref{example:heavy-hitters}) with the parameter $h := 4$. 
  By the guarantees of \Cref{example:heavy-hitters}, the list of heavy hitters for a path-node $v$ is guaranteed to contain all (IDs of) paths that $\{ \parent(v), v \}$ is strongly interested in, and each element of the list is an ID of a path that $\{ \parent(v), v \}$ is weakly interested in.

  Finally, we compute the union of the interest list of each path $P_i$: contract all path edges and simply use the union on the interest list as the aggregation operation. Since the size of (any union) of these lists is at most $\tl{O}(1)$ (\Cref{lemma:path-weak-interest-cardinality}), we can compute the union in $\tl{O}(1)$ rounds.
\end{proof}

After having access to the interest list, we construct a logical \emph{interest graph} between the paths where edges are created between mutually-interested pairs of paths.
\begin{definition}\label{def:hl-interest-graph}
  Let $\{ P_i \}_{i=1}^k \subseteq T \subseteq G$ be a star instance and suppose each $P_i$ knows its interest list. An \textbf{interest graph} $I$ is an undirected graph with $k$ nodes identified with the paths, i.e., $V(I) = \{P_1, \ldots, P_k\}$. There is an edge between $\{ P_i, P_j \} \in E(I)$ iff $P_i$ and $P_j$ are \textbf{mutually interested}, meaning that both $P_1$ is in the interest list of $P_2$ and vice versa.
\end{definition}

Next, we show one can efficiently simulate CONGEST algorithms on the (logical) interest graph. This mainly follows from the small maximum degree of interest graph.
\begin{lemma}\label{lemma:hl-interest-simulation}
  Let $\{ P_i \}_{i=1}^k \subseteq T \subseteq G$ be a star instance and suppose each $P_i$ knows its interest list. Any (deterministic) $\tau$-round CONGEST algorithm on the interest graph $I$ can be simulated in $O(\tau)$ (deterministic) rounds of Minor-Aggregation on $G$.
\end{lemma}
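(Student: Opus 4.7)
The plan is to contract each path $P_i$ to a single supernode in a minor of $G$, and then simulate one CONGEST round of $I$ by a constant number of Minor-Aggregation rounds on $G$. Since each $P_i$ is a connected subgraph, its tree-edges can be contracted in the Minor-Aggregation contraction step, producing the desired supernodes. Every edge $\{P_i, P_j\} \in E(I)$ is realized by at least one cross-edge in $G$: mutual interest requires $\CrossCov(e, f) > 0$ for some $e \in E(P_i), f \in E(P_j)$, which forces a cross-edge between the two paths. We permanently designate a leader of each supernode, say $\mytop(P_i)$, to hold the CONGEST state of the corresponding node of $I$; this leader can be identified once via a max-aggregation along the path.

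Before the simulation, we run an $O(1)$-round preprocessing so that each supernode learns which entries of its interest list correspond to actual edges of $I$ incident to it. In a consensus step the leader of $P_i$ broadcasts $L_i$, which has size $\tl{O}(1)$ by \Cref{lemma:path-weak-interest-cardinality}; using a ``pick-non-identity'' aggregator so that only the leader's message contributes, every node of $P_i$ (and hence every cross-edge incident to $P_i$) learns $L_i$. Each cross-edge $\{u \in V(P_i), v \in V(P_j)\}$ then knows both $L_i$ and $L_j$, tests whether $j \in L_i$ and $i \in L_j$, and if so emits the tag $j$ on the $P_i$-side (and $i$ on the $P_j$-side). Taking a ``union over neighbors'' as the edge aggregator, each supernode $P_i$ assembles its set of mutually interested partners in $\tl{O}(1)$ bits.

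For the simulation itself, one round of the CONGEST algorithm on $I$ is realized by a single Minor-Aggregation round as follows. The leader of $P_i$ computes its tagged outgoing message list $M_i := \{(j, m_{ij}) : \{P_i, P_j\} \in E(I)\}$, whose total size is $\tl{O}(1)$ since $|L_i| = \tl{O}(1)$, and emits $M_i$ in the consensus step; every node of $P_i$ and every cross-edge incident to $P_i$ thereby learns $M_i$. In the aggregation step, each mutually interested cross-edge $\{u \in V(P_i), v \in V(P_j)\}$ looks up $m_{ij} \in M_i$ and emits $(i, m_{ij})$ on the $v$-side (and analogously $(j, m_{ji})$ on the $u$-side); all other edges emit identity. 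The supernode aggregator is ``union keyed by sender'': parallel cross-edges between $P_i$ and $P_j$ all carry the identical pair $(i, m_{ij})$, which the union deduplicates, and distinct senders occupy distinct keys, so $P_j$ receives exactly $\{(i, m_{ij}) : \{P_i, P_j\} \in E(I)\}$ in $\tl{O}(1)$ bits; its leader then performs the local CONGEST transition.

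The main obstacle is avoiding collisions when multiple senders target the same receiver through many parallel cross-edges, which is resolved cleanly by the sender-tag and the union-by-sender aggregator, leveraging the degree bound $\Delta = \tl{O}(1)$ to keep every emitted, broadcast, and aggregated object within the $\tl{O}(1)$-bit budget. All ingredients---path contraction, leader-driven consensus broadcast, and union-style aggregation---are deterministic, so the simulation is deterministic and uses $O(1)$ Minor-Aggregation rounds per simulated CONGEST round, yielding an $O(\tau)$-round total.
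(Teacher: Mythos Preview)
Your proposal is correct and follows essentially the same approach as the paper: contract each path to a supernode, broadcast the interest list together with the tagged outgoing messages in the consensus step, have each cross-edge check mutual interest and forward the appropriate message, and aggregate via a union keyed by sender (bounded by the $\tl{O}(1)$ degree of $I$). The paper's proof is slightly leaner in that it skips the separate leader designation and preprocessing phase (it simply re-broadcasts the interest list alongside the messages each round, letting the cross-edges test mutual interest on the fly), but the underlying mechanism is identical.
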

\begin{proof}
  In order to simulate a CONGEST algorithm on the interest graph $I$, it is sufficient to simulate a single round of $\tl{O}(1)$-bit communication between each pair of mutually-interested paths (each node and edge on the path learns all messages sent to that path via the algorithm on $I$). First, we contract all paths to a single node. Next, each path broadcasts its interest list and the messages intended for each entry on the interest list. Since there are $\tl{O}(1)$ entries on the list and each entry corresponds to $\tl{O}(1)$ bits, this can fit within a single Minor-Aggregation message. Next, each cross-edge $\{u \in P_i, v \in P_j\}$ can check whether its endpoints are mutually interested in each other (since it received their interest lists); if yes, the edge exchanges their messages with each other. Finally, each path (i.e., supernode) takes the union of all the messages received. Since each path can only receive messages from paths in its interest list, the size of this union is $\tl{O}(1)$, hence can be performed in a single round. This successfully simulates a CONGEST algorithm.
\end{proof}

\subsection{Solving the star instance via interest graph coloring}

In order to solve the star instance, the general idea will be to first edge-color the interest graph, process the color classes in series (giving us a matching between the paths), and then call the 2-respecting min-cut on each pair of matched paths. As a first step towards this goal, we need a classic edge-coloring result of Panconesi and Rizzi.
\begin{lemma}[\cite{panconesi2001some}]\label{lemma:edge-coloring-congest}
  Given a graph $G$ with maximum degree $\Delta = \tl{O}(1)$, there is a deterministic $\tl{O}(1)$-round CONGEST algorithm on $G$ that colors the edges into $\tl{O}(1)$ colors (where edges of each fixed color form a matching).
\end{lemma}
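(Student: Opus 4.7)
The plan is to invoke the cited deterministic edge-coloring result of Panconesi and Rizzi directly and verify that the parameters fit the setting of this lemma. Their algorithm produces a proper $(2\Delta - 1)$-edge coloring in $O(\Delta + \log^* n)$ deterministic CONGEST rounds on an arbitrary graph. Substituting our assumption $\Delta = \tl{O}(1) = \poly(\log n)$ gives a palette of $2\Delta - 1 = \tl{O}(1)$ colors and a round complexity of $O(\poly(\log n) + \log^* n) = \tl{O}(1)$; the matching property of each color class is immediate from properness of the coloring. Thus this step is just a bookkeeping exercise: check that the hypothesis $\Delta = \tl{O}(1)$ keeps us in the regime where both the palette size and the round complexity are hidden by $\tl{O}(\cdot)$.

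If one prefers a self-contained argument in lieu of the black-box citation, the plan is to simulate Linial's $O(\log^* n)$-round $O(\Delta_L^2)$-vertex coloring algorithm on the line graph $L(G)$. The line graph has maximum degree $\Delta_L = 2(\Delta - 1) = \tl{O}(1)$, and an $L(G)$-vertex (i.e., a $G$-edge) is naturally identified by a pair of $\tl{O}(1)$-bit node IDs, so Linial's routine operates with $\tl{O}(1)$-bit identifiers and produces $O(\Delta^2) = \tl{O}(1)$ colors. One round of $L(G)$ can be simulated in $O(\Delta) = \tl{O}(1)$ rounds of CONGEST on $G$: each shared endpoint $v$ acts as a relay between the at most $\Delta$ edges incident to it, forwarding the $\tl{O}(1)$-bit Linial messages between each pair of $L(G)$-neighbors that meet at $v$ over $O(\Delta)$ consecutive rounds. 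Composing, the total round complexity is $O(\log^* n \cdot \Delta) = \tl{O}(1)$, matching the statement.

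There is no genuine obstacle here: both routes are well-understood classical constructions. The only point warranting care is the $\tl{O}(\cdot)$ accounting, which must absorb both the $\poly(\log n)$ factor coming from $\Delta$ and the $\log^* n$ factor from the coloring iterations; this is automatic since $\log^* n = o(\log n)$ and the notation already hides $\poly(\log n)$ factors. With either approach, each color class is a matching of the interest graph, which is exactly what the subsequent star 2-respecting min-cut step will schedule in series via \Cref{lemma:hl-interest-simulation} and the path-to-path algorithm of \Cref{thm:path-to-path-algorithm}.
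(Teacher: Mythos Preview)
Your proposal is correct and matches the paper's treatment: the paper states this lemma purely as a citation of Panconesi and Rizzi~\cite{panconesi2001some} with no accompanying proof, so your first paragraph (invoking their $(2\Delta-1)$-edge-coloring in $O(\Delta+\log^* n)$ rounds and checking that $\Delta=\tl{O}(1)$ keeps both palette and rounds $\tl{O}(1)$) is exactly the intended reading. The Linial line-graph alternative you sketch is also fine but goes beyond what the paper does.
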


We have all the pieces in place to prove the main result of this section.
\thmStarAlgorithm*
\begin{proof}
  \textbf{Algorithm.} We first compute the 1-respecting cuts and remember the best result (\Cref{thm:1-respecting-cut-algorithm}). Next, we compute the interest list for each path (\Cref{lemma:interest-list-computation}) which defines the interest graph (\Cref{def:hl-interest-graph}). Next, we find an edge coloring for all cross-edges via the coloring CONGEST algorithm (\Cref{lemma:edge-coloring-congest}) by running it on the interest graph with the help of the simulation result (\Cref{lemma:hl-interest-simulation}). Since the maximum degree of the interest graph is at most $\tl{O}(1)$, this produces an coloring of cross-edges into $\chi \le \tl{O}(1)$ colors. Then, for each color class $c \in \{1, 2, \ldots, \chi\}$ we iteratively consider each pair of paths $\{P_i, P_j\}$ that are connected with a cross-edge of color $c$. By definition of a proper edge matching, all pairs of paths in the class $c$ are node-disjoint from other pairs in the same class. Furthermore, $G[ V(P_i) \cup V(P_j) ]$ is a connected graph (due to the existence of a cross-edge). Therefore, we can run simultaneous instances of Minor-Aggregation algorithms on each one of them. For each pair of such paths $\{P_i, P_j\}$, we construct a virtual node $r_{ij}$ and connect it to the top node of $P_i$ and $P_j$ with arbitrary weight (since it won't be considered); we run the 2-respecting path-to-path algorithm on each such pair (\Cref{thm:path-to-path-algorithm}). Due to the connectedness of $G[ V(P_i) \cup V(P_j) ]$, we can remove the virtual root node and recover an Minor-Aggregation on $G$ (\Cref{lemma:virtual-node-replacement}, with a $O(1)$-blowup in the number of rounds). The smallest cut ever seen is returned as the best 2-respecting cut. It is clear that every step takes at most $\tl{O}(1)$ rounds, hence the final algorithm takes $\tl{O}(1)$ rounds.

  \smallskip

  \noindent\textbf{Correctness analysis.} First, it is clear that the algorithm only finds feasible cuts (cannot return a cut smaller than possible). Hence, we only need to show it finds at least one optimal cut. The best 2-respecting cut is either a 1-respecting cut, or it occurs on a pair of paths $\{ P_{i^*}, P_{j^*} \}$ which are mutually interested in each other (\Cref{lemma:best-cut-has-interest}). However, there is an edge between $\{ P_{i^*}, P_{j^*} \}$ in the interest graph, hence we will find the best 2-respecting cut between $\{ P_{i^*}, P_{j^*} \}$ using the path-to-path algorithm (\Cref{thm:path-to-path-algorithm}), which will find the optimum. This completes the claim.
\end{proof}

\section{Between-Subtree 2-Respecting Min-Cut}\label{sec:2-respecting-between-subtree}

This section shows how to compute the minimum 2-respecting cut between $k$ subtrees $T_1, T_2, \ldots, T_k$ (adjoined with a root for orientation purposes, see \Cref{fig:between-subtree}). We call such an input a ``subtree instance'' and formalize it in the following definition.

\begin{definition}\label{def:subtree-instance}
  A \textbf{subtree instance} $\{ T_i \}_{i=1}^k \subseteq T \subseteq G$ is composed of the following. Suppose $G$ is a weighted graph and $T \subseteq G$ is $G$'s (rooted) spanning tree. Moreover, $T$ is composed of exactly a root $r$, and $k$ (disjoint) trees $T_1, T_2, \ldots, T_k$. 
\end{definition}

\begin{figure}
  \centering
  \includegraphics[width=0.18\textwidth]{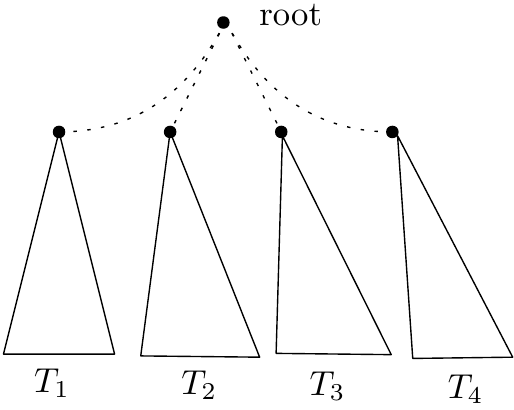}
  \caption{A subtree instance with $k = 4$ subtrees.}
  \label{fig:between-subtree}
\end{figure}

Our first idea is to reduce the problem for general $k$ to the case when $k = 2$. Suppose the optimum 2-respecting cut $(e^*, f^*)$ is contained in subtrees $e^* \in E(T_{i^*})$ and $f^* \in E(T_{j^*})$ where $i^* \neq j^*$. We want to find a way to break the symmetry between the subtrees $i^*$ and $j^*$, which we can do with the following structure.
\begin{definition}
  Given a universe of $k$ elements, a \textbf{pairwise coloring} is a collection $\{ f_1, \ldots, f_\chi \}$ where each $f_i : [k] \to \{ \mathrm{red}, \mathrm{blue} \}$ is called a \emph{color assignment} which assigns the color $f_i(j)$ to element $j$ and such that for all pairs $j \neq j' \in [k]$ there exists $i \in [\chi]$ such that $f_i(j) \neq f_i(j')$
\end{definition}
It is a folklore result that there exists such an assignment with $\chi = O(\log n)$. After constructing such a collection of colorings, we can iterate over each color assignment and for each assignment merge all the roots of all subtrees colored $\mathrm{red}$ and all subtrees colored $\mathrm{blue}$. This is, however, done implicitly in the proof of \Cref{thm:subtree-algorithm}.
\begin{lemma}\label{lemma:pairwise-coloring-construction} 
  Given a subtree instance $\{ T_i \}_{i=1}^k \subseteq T \subseteq G$, we can compute and distributedly store a pairwise coloring of the $k$ subtrees $\{T_i\}_{i=1}^k$ in $\tl{O}(1)$ deterministic Minor-Aggregation rounds.
\end{lemma}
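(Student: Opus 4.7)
The plan is to construct the pairwise coloring explicitly from the binary representation of distinct subtree IDs. The folklore $\chi = O(\log n)$ bound mentioned before the lemma is witnessed by the following scheme: assign each subtree a distinct $O(\log n)$-bit ID, then define $f_b(i)$ as the $b$-th bit of the ID of $T_i$. Because any two distinct subtrees $T_i, T_j$ have different IDs, they must differ in at least one bit position $b$, which gives $f_b(i) \neq f_b(j)$, establishing the pairwise-coloring property. So the task reduces to (i) assigning each subtree a distinct ID in $\tilde{O}(1)$ Minor-Aggregation rounds while ensuring every node in $T_i$ learns the ID of $T_i$, and (ii) noting that this immediately distributedly stores the coloring.

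First I would identify the subtrees via contraction. By \Cref{def:subtree-instance}, the spanning tree $T$ consists of the root $r$ together with the subtrees $T_1, \ldots, T_k$, so the $T_i$'s are precisely the connected components of $T - r$. In one Minor-Aggregation round I contract every edge of $E(T)$ that is not incident to $r$; each resulting supernode (other than the one containing $r$) is exactly some $T_i$. Concretely, each edge $e$ sets $c_e = \top$ iff $e \in E(T)$ and $r \notin e$. This can be arranged because each edge already knows whether it is in $E(T)$ (the tree is distributedly stored) and whether one of its endpoints is $r$ (using a preliminary round in which $r$'s ID is broadcast via full contraction, or simply because every edge stores the IDs of its endpoints and $r$'s ID can be fixed as, say, the minimum ID in $V(G)$ and broadcast once).

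Next I would assign IDs and propagate them. In the consensus step of the same round, each node $v$ outputs $x_v := \mathrm{ID}(v)$ using the min-aggregator $\bigoplus$; the supernode corresponding to $T_i$ then computes $y_i := \min_{v \in V(T_i)} \mathrm{ID}(v)$, which is unique across the $k$ subtrees, and by the semantics of \Cref{def:aggregation-congest} every node $v \in V(T_i)$ learns $y_i$. I define this $y_i$ to be the ID of $T_i$; since IDs are $\tilde{O}(1)$ bits, this fits in a single aggregate message. Set $\chi := \lceil \log_2 n \rceil + 1 = O(\log n)$ and define $f_b(i) := \mathrm{red}$ if the $b$-th bit of $y_i$ is $0$, and $f_b(i) := \mathrm{blue}$ otherwise, for $b \in [\chi]$. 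Each node of $T_i$ already holds $y_i$, hence can locally compute its color under every $f_b$; this is what it means for the coloring to be distributedly stored.

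The main (and essentially only) subtlety is making sure the contraction plus consensus is legal in the Minor-Aggregation model and that each node in $T_i$ truly learns the same ID $y_i$. Both follow directly from \Cref{def:aggregation-congest}: a single contraction step may contract any chosen subset of edges, and the consensus step guarantees that all $v \in s$ learn the same aggregate $y_s$. Broadcasting $r$'s ID (if not already known) takes one additional round by contracting the whole graph and using a min-aggregator consensus output of $\mathrm{ID}(v) \cdot \mathbb{1}[v = r] + \infty \cdot \mathbb{1}[v \neq r]$. Thus the total round count is $O(1)$, which is $\tilde{O}(1)$, and the algorithm is fully deterministic as required.
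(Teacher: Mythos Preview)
Your proposal is correct and follows essentially the same approach as the paper: contract each subtree to a supernode, assign it the minimum node ID as a unique identifier, and use the $O(\log n)$ bits of this ID as the color assignments $f_b$. The only difference is that you spell out a bit more carefully how each edge learns whether it is incident to $r$ (via an extra broadcast round), which the paper leaves implicit.
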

\begin{proof}
  Each subtree can compute its (arbitrary, but unique) ID by contracting all edges in the subtree and computing the minimum ID of all nodes within it. This ID has $\chi := \tl{O}(1)$ bits. We iterate over each bit $i$ and create a new color assignment $f_i$ for each bit: $f_i(j) := \mathrm{blue}$ if the $i$'th bit of $j$'th ID is $0$ and $f_i(j) := \mathrm{red}$ otherwise.
\end{proof}

The following result formalizes the goal of solving the between-subtree cuts, and is the main result of this section. The proof is illustrated with \Cref{fig:subtree-reduction}.
\begin{restatable}{theorem}{thmSubtreeAlgorithm}\label{thm:subtree-algorithm}
  Given a subtree instance $\{ T_i \}_{i=1}^k \subseteq T \subseteq G$, there exists a deterministic $\tl{O}(1)$-round Minor-Aggregation algorithm on $G$ that computes the minimum of 1-respecting cuts and 2-respecting cuts $\min_{i < j} \allowbreak \min_{e \in E(T_i), \allowbreak f \in E(T_j)} \allowbreak \Cut_{T, G}(e, f)$.
\end{restatable}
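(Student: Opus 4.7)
The plan is to follow the two-level reduction sketched in \Cref{sec:2-respecting-between-subtree}: use a pairwise coloring (\Cref{lemma:pairwise-coloring-construction}) to isolate the between-subtree optimum in some coloring where its two subtrees receive opposite colors, and, within each coloring, use heavy-light decomposition plus a guess of two HL-depths to reduce the instance to a star instance that \Cref{thm:star-algorithm} (suitably restricted to cross-color pairs) can solve. Concretely, we first invoke \Cref{thm:1-respecting-cut-algorithm} to record the 1-respecting minimum, and compute an HL-decomposition of each subtree $T_i$ (rooted at its top $r_i$) via \Cref{lemma:Det-Ops}. Next, for each of the $\chi = O(\log n)$ colorings $f_c : [k] \to \{\mathrm{red}, \mathrm{blue}\}$ and each pair $(d_1, d_2) \in \{0, \ldots, O(\log n)\}^2$, we contract all edges of HL-depth $\neq d_1$ inside every red subtree, all edges of HL-depth $\neq d_2$ inside every blue subtree, and every tree edge incident to the root $r$. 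The surviving tree edges are exactly the depth-$d_1$ HL-paths of the red subtrees and the depth-$d_2$ HL-paths of the blue subtrees, all hanging off the single supernode into which $r$ and the contracted remainders of the subtrees have collapsed---i.e., a star instance whose paths are labelled by the color of the subtree they came from. We then apply \Cref{thm:star-algorithm} restricted to cross-color path pairs and record its output; the final answer is the minimum of the 1-respecting value and all recorded star outputs.

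For correctness, let $(e^*, f^*) \in E(T_{i^*}) \times E(T_{j^*})$ with $i^* \neq j^*$ be the between-subtree optimum, and let $c^*$ be a coloring that separates $T_{i^*}$ and $T_{j^*}$ (guaranteed by \Cref{lemma:pairwise-coloring-construction}). In iteration $(c^*, d_1^* = \HLdepth_{T_{i^*}}(e^*), d_2^* = \HLdepth_{T_{j^*}}(f^*))$, both $e^*$ and $f^*$ survive the contraction and end up on two paths of opposite color, and because contractions preserve cut values for pairs not in the contracted set (exactly as used in the proof of \Cref{lemma:path-to-path-algorithm}), the cross-color-restricted star algorithm reports $\Cut_{T, G}(e^*, f^*) = \OPT$. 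Conversely, any value the algorithm ever returns is the cut value of some cross-color path pair, which necessarily corresponds to a pair of edges sitting in two different original subtrees, hence is at least $\OPT$. There are $\chi \cdot O(\log n)^2 = \tl{O}(1)$ iterations executed sequentially on the full graph, each dominated by a $\tl{O}(1)$-round deterministic call to \Cref{thm:star-algorithm}, yielding the claimed $\tl{O}(1)$-round deterministic Minor-Aggregation bound.

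The main subtlety is the cross-color restriction of \Cref{thm:star-algorithm}, which affects only the heavy-hitter-based interest-list computation of \Cref{lemma:interest-list-computation}: the labels fed to the heavy-hitter aggregator are restricted to cross-path edges whose two endpoints lie on opposite-colored paths. The structural bound of \Cref{lemma:best-cut-has-interest} adapts directly, because whenever $(e, f)$ is cross-color, any graph edge whose $T$-path covers both $e$ and $f$ must have one endpoint in each of the two subtrees containing $e$ and $f$ (hence is itself cross-color), so the ``$> 1/2$'' inequality that drives the argument is preserved when the cover count is restricted to cross-color edges. The cardinality bound on interest (\Cref{lemma:path-weak-interest-cardinality}), the interest graph edge coloring (\Cref{lemma:edge-coloring-congest}), and the matched path-to-path invocations (\Cref{thm:path-to-path-algorithm}) all carry over verbatim.
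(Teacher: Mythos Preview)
Your reduction---pairwise coloring, then for each coloring iterate over the two HL-depths and contract the remaining tree edges to obtain a star instance---is exactly the paper's approach. The explicit contraction of the root edges you add is a detail the paper leaves implicit (the paper only says the result ``transforms into a star instance'') but is needed to make the claim literally true, so that clarification is welcome.

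The one substantive difference is your cross-color restriction of \Cref{thm:star-algorithm}. The paper does \emph{not} modify the star algorithm at all: it simply invokes \Cref{thm:star-algorithm} as a black box on the contracted instance. You are right that, without the restriction, two paths in the star can originate from the \emph{same} original subtree $T_j$ (two HL-paths of depth $d_1$ inside a single red subtree), and the star algorithm may then report a within-subtree $2$-respecting cut that is strictly smaller than $\min_{i<j}\min_{e\in E(T_i),f\in E(T_j)}\Cut(e,f)$. The paper's correctness paragraph sidesteps this by arguing only that every reported value is \emph{some} valid $2$-respecting cut of $T$ in $G$; that is weaker than the theorem's stated objective but is precisely the guarantee the caller in \Cref{sec:general-2-resp-cut} needs (there, a within-subtree cut is also acceptable, since it will anyway be recovered by the recursive branch). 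So the paper trades a slightly overstated theorem for the convenience of not touching the star algorithm.

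Your fix is correct: for a cross-color pair $(e,f)$ every edge whose $T$-path covers both must have its endpoints in the two differently-colored subtrees, hence is itself cross-color, so \Cref{lemma:best-cut-has-interest} goes through with the restricted cross-cover; and the proof of \Cref{lemma:path-weak-interest-cardinality} is a pure counting argument over any fixed edge set, so it carries over. If you want to match the paper, you can drop the restriction entirely, call \Cref{thm:star-algorithm} unmodified, and either (i) note that any extra same-subtree cut the star algorithm happens to find is still a valid $2$-respecting cut, or (ii) weaken the conclusion to ``returns a value that is at least the global $2$-respecting minimum and at most the stated quantity'', which is all \Cref{thm:general-2-respecting-cut} uses.
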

\begin{proof}
  \textbf{Algorithm.} We first compute the 1-respecting cuts and remember the best result (\Cref{thm:1-respecting-cut-algorithm}). Furthermore, we construct a pairwise coloring $\{ f_i \}_{i=1}^\chi$ of $\{ T_i \}_{i=1}^k$ (\Cref{lemma:pairwise-coloring-construction}). Next, we iterate over all possibilities for (1) color assignment $i \in [\chi]$, (2) HL-depth $d_1 \le O(\log n)$, and (3) HL-depth $d_2 \le O(\log n)$. Next, all subtrees $T_j$ with $f_i(j) = \mathrm{red}$ will contract all edges $e \in E(T_j)$ where the $\HLdepth(e) \neq d_1$ and all subtrees $T_j$ with $f_i(j) = \mathrm{blue}$ will contract all edges $e \in E(T_j)$ where the $\HLdepth(e) \neq d_2$. This transforms the instance into a star instance (see \Cref{fig:subtree-reduction}). We use the star instance algorithm to find the best 2-respecting cut on this star (\Cref{thm:star-algorithm}). After iterating over all possibilities, the best result is returned. Note that, since there are $\tl{O}(1)$ color assignments, $O(\log n)$ HL-depths (\Cref{lemma:hl-has-logn-light-edges}), and the star algorithm takes $\tl{O}(1)$ rounds, the entire algorithm takes $\tl{O}(1)$ Minor-Aggregation rounds.

  \smallskip
  \noindent\textbf{Correctness analysis.} First, we note that the algorithm only checks some number of existing 2-respecting cuts, hence it can never report an answer that this is smaller than the optimum solution. We only need to show that it is successfully managed to find the optimum. Suppose that $(e^*, f^*) \in E(T_{a^*}) \times E(T_{b^*})$ with $a^* \neq b^*$ are the pair of edges that minimizes the 2-respecting cut. Let $d_1^* := \HLdepth(e^*)$ and $d_2^* := \HLdepth(f^*)$ and let $P_1^*$ and $P_2^*$ be the HL-paths which contain $e^*$ and $f^*$, respectively. Due to the definition of pairwise coloring, there exists a $i^* \in [\chi]$ where $f_{i^*}(a^*) \neq f_{i^*}(b^*)$. Therefore, when the algorithm iterates over $(i, d_1, d_2) = (i^*, d_1^*, d_2^*)$, it will transform the subtree instance $\{ T_i \}_{i=1}^k \subseteq T \subseteq G$ into a star instance $\{ P'_i \}_{i=1}^{k'} \subseteq T' \subseteq G'$ by contracting tree edges that are not on $P_1^*, P_2^*$. However, since contraction of edges not on $P_1^*$ or $P_2^*$ does not change the 2-respecting cut values, we have that $\Cut_{T, G}(e^*, f^*) = \Cut_{T', G'}(e^*, f^*)$. Therefore, since $\Cut_{T', G'}(e^*, f^*)$ will be considered in the star algorithm, the returned solution will return the optimum.
\end{proof}

\begin{figure}
  \centering
  \includegraphics[width=0.6\textwidth]{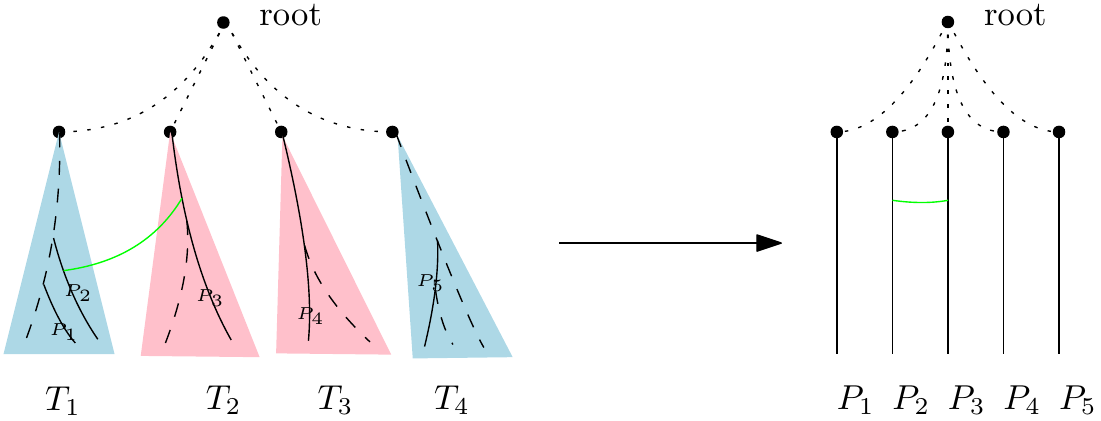}
  \caption{A depiction of the transformation from a subtree instance to a star instance for a particular pairwise coloring (two outer subtrees on the left are blue, and the two inner ones are red) and for a particular choice of $(d_1^*, d_2^*) = (0, 1)$. The green, between-subtree edge, is preserved in the star instance.}
  \label{fig:subtree-reduction}
\end{figure}

\section{General 2-Respecting Min-Cut}\label{sec:general-2-resp-cut}

In this section, we show how to compute the minimum 2-respecting on an arbitrary spanning tree $T$ of a weighted graph $G$.

\begin{restatable}{theorem}{thmGeneralRespectingCut}\label{thm:general-2-respecting-cut}
  Given a spanning tree $T$ of a weighted graph $G \supseteq T$, there exists a deterministic $\tl{O}(1)$-round Minor-Aggregation algorithm on $G$ that computes the minimum 2-respecting cut $\min_{e \in E(T), f \in E(T)} \allowbreak \Cut_{T, G}(e, f)$. This implies a deterministic $\tl{O}(D + \sqrt{n})$-round CONGEST algorithm for general graphs, and a deterministic $\tl{O}(D)$-round CONGEST algorithm for excluded-minor graphs.
\end{restatable}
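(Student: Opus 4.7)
The plan is to solve the general 2-respecting min-cut by recursion on the centroid decomposition of $T$, using the between-subtree algorithm (\Cref{thm:subtree-algorithm}) as the ``combine'' step at each level. First, I would compute a centroid $c \in V(T)$, i.e., a node such that every connected component of $T - c$ has at most $|V(T)|/2$ nodes. A centroid can be located deterministically in $\tl{O}(1)$ Minor-Aggregation rounds by computing all subtree sizes with the subtree-sum primitive (\Cref{lemma:Det-Ops}) from an arbitrary root, then doing a second subtree-sum round to let each node check the centroid property. Let $T_1,\dots,T_k$ be the components of $T - c$. If the optimal pair $(e^*,f^*)$ lies in two different components $T_{i^*}\neq T_{j^*}$, we recover it by a single call to the between-subtree algorithm on the subtree instance formed by $c$ and $T_1,\dots,T_k$. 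Otherwise both $e^*,f^*$ lie in a common $T_i$, which I handle by recursing on each $T_i$. Since the $T_i$ are node-disjoint and connected, all the recursive calls can be scheduled simultaneously (\Cref{lemma:node-disjoint-scheduling}), and the centroid property forces recursion depth $O(\log n)$.

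The subtlety is that the recursion on $T_i$ must faithfully preserve $\Cut_{T,G}(e,f)$ for $e,f\in E(T_i)$: an edge of $G$ whose endpoints leave $T_i$ still contributes to these cuts, but is invisible to $T_i$'s local subgraph. I would resolve this with virtual nodes, exactly as sketched in the overview. Attach a fresh \emph{virtual centroid} $c_i$ to each $T_i$ in place of $c$, and replace every $G$-edge $\{u,v\}$ with $u\in V(T_i)$, $v\in V(T_j)$ (for $i\neq j$) by two virtual edges $\{u,c_i\}$ and $\{v,c_j\}$ of weight $w(\{u,v\})$; $G$-edges fully inside some $T_i$ are left unchanged and private to that recursive call. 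A direct cover-value computation (using $\Cov(e)$ and $\Cov(e,f)$ from \Cref{sec:mincut-prelims} and \Cref{lemma:2-cut-vs-cov}) shows that, for every $e,f\in E(T_i)$, the value $\Cut$ in the modified $T_i+c_i$ graph equals $\Cut_{T,G}(e,f)$: the new virtual edge from $u$ to $c_i$ covers exactly the tree edges on the $u$-to-$c$ subpath of $T_i$, which are precisely the $T_i$-edges whose original $T$-path was crossed by $\{u,v\}$. Thus each recursive instance becomes private, cut-equivalent, and its communication graph contains the original $G$-restriction $G[V(T_i)]$ plus one virtual node.

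The main obstacle, as flagged in \Cref{sec:extending-minor-aggregation}, is \emph{simulation cascade}: naively applying \Cref{thm:simulating-virtual-nodes} at every recursion level would give a $2^{O(\log n)}$ blow-up. I would avoid this with the same trick used in the path-to-path algorithm. At each recursive call executing on some tree $T'$, collect the set $\mathrm{Virt}\subseteq V(T')$ of virtual nodes accumulated from this and all ancestor calls; by construction there is exactly one per recursion level, so $|\mathrm{Virt}|=O(\log n)$, and crucially $T'-\mathrm{Virt}$ is a connected subgraph of the underlying network $G$ (every non-virtual node of $T'$ is reachable via non-virtual $T$-edges). Before returning, I use \Cref{thm:simulating-virtual-nodes} \emph{once} to compile away all of $\mathrm{Virt}$ simultaneously, paying only an $O(|\mathrm{Virt}|+1)=\tl{O}(1)$ multiplicative factor at that level. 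Since each level pays this factor only once and the depth is $O(\log n)$, the total cost is $\tl{O}(1)$. Combining the between-subtree call, the construction of virtual centroids (each edge of $G$ locally rewrites itself in $O(1)$ rounds), and the simulation, each recursion level uses $\tl{O}(1)$ rounds, yielding a deterministic $\tl{O}(1)$-round Minor-Aggregation algorithm.

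Correctness follows by induction on $|V(T)|$: the base case $|V(T)|=O(1)$ is trivial, and for the inductive step either (1) $(e^*,f^*)$ lies in different components of $T-c$, whence \Cref{thm:subtree-algorithm} returns $\Cut_{T,G}(e^*,f^*)$ exactly, or (2) $(e^*,f^*)\in E(T_i)^2$, whence the inductive hypothesis on $T_i+c_i$ together with the cut-equivalence of the virtual construction returns the correct value. Taking the minimum over all these candidates, together with the 1-respecting cuts obtained from \Cref{thm:1-respecting-cut-algorithm}, gives the optimum. Finally, plugging the $\tl{O}(1)$-round Minor-Aggregation bound into the simulation result \Cref{thm:congest-simulation} delivers the two claimed CONGEST consequences: deterministic $\tl{O}(D+\sqrt{n})$ rounds for general graphs and deterministic $\tl{O}(D)$ rounds for excluded-minor networks.
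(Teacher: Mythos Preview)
Your proposal is correct and takes essentially the same approach as the paper: centroid decomposition of $T$, the between-subtree algorithm (\Cref{thm:subtree-algorithm}) as the combine step, private cut-equivalent recursive instances via a virtual centroid $c_i$ attached to each $T_i$, and avoidance of simulation cascade by eliminating all $O(\log n)$ accumulated virtual nodes once per level using that $T'-\mathrm{Virt}$ remains a connected subgraph of the underlying network. The only cosmetic difference is that the paper wraps the between-subtree call in an additional layer of virtual nodes (fresh $c_i$'s plus a fresh virtual root $r$, with the $r$--$c_i$ edges carrying arbitrary weight), whereas you invoke \Cref{thm:subtree-algorithm} with the real centroid $c$ as the root; both setups are equivalent once one checks that the root-adjacent tree edges $e_i$ are covered somewhere (in the paper they sit inside $E(T'_i)$, in your version they are the root-to-subtree edges of the instance and are also picked up by the recursive calls on $T_i+c_i$).
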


Our general idea will be to choose a node $c \in V(T)$ and split the tree $T$ around this node into maximal connected subtrees $T_1, \ldots, T_k$. Then, it can happen that the optimal pair $e^*, f^*$ is either (1) in two different subtrees $T_{i^*}, T_{j^*}$, or (2) in the same subtree $T_{i^*}$. For case (1), we need to call the 2-respecting between-subtree cut algorithm on $T_1, \ldots, T_k$; for case (2) we will use recursion on each one of the (node disjoint) subtrees $T_i$, allowing us to schedule all recursive calls simultaneously. One important consideration of this approach is the depth of the recursion. By choosing the pivot point as the centroid, a node that splits the tree into balanced subtrees, this depth can be bounded by $O(\log n)$. The existence of the centroid is a well-known result, and it can be found in Minor-Aggregation using a subtree sum operation.
\begin{fact}[Folklore]\label{lemma:centroid-exists} 
  Any tree $T$ with $n$ nodes has a node $c$, called a \textbf{centroid}, whose removal leaves the remaining maximal connected components to have at most $|V(T)| / 2$ nodes.
\end{fact}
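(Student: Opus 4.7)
The plan is to give a short existence argument based on subtree sizes, which will also align naturally with the eventual Minor-Aggregation implementation via the subtree-sum primitive (\Cref{lemma:Det-Ops}). First I would pick an arbitrary root $r \in V(T)$ and define $s(v) := |\desc(v)|$ for each node $v$. Since $s(r) = n > n/2$, the set $S := \{ v \in V(T) : s(v) > n/2 \}$ is nonempty, so I can let $c$ be any element of $S$ that minimizes $s(c)$.

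The key step is then to verify that $c$ is a centroid by analyzing the maximal connected components of $T - c$. These are in bijection with the neighbors of $c$: for each child $u$ of $c$ there is a ``downward'' component equal to $\subtree(u)$, of size $s(u)$; and if $c \ne r$ there is additionally one ``upward'' component of size $n - s(c)$. Minimality of $s(c)$ within $S$ forces $s(u) \le n/2$ for every child $u$ of $c$, since any child with $s(u) > n/2$ would itself be a strictly smaller member of $S$ (using the fact that $s$ strictly decreases along any parent-to-child edge). On the other side, $s(c) > n/2$ immediately gives $n - s(c) < n/2$. Combining these two bounds, every component of $T - c$ has at most $n/2$ nodes, which is exactly the centroid property.

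I do not expect any substantive obstacle here: the argument is classical and uses only the two structural facts that $s(r) = n$ and that $s$ is strictly decreasing from parent to child in a rooted tree. The only minor care point is the corner case $n = 1$, in which the unique node is trivially a centroid and the argument still applies with $s(r) = 1 > 1/2$. If desired, the same proof could be rephrased as an orientation argument (orient each edge $\{u,v\}$ toward the side with more than $n/2$ nodes, note that each vertex has at most one outgoing edge, and take any sink), but the subtree-size formulation has the advantage of being essentially already a recipe for the distributed implementation used later in \Cref{sec:general-2-resp-cut}.
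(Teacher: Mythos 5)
The paper states this as a folklore fact and supplies no proof of its own, so there is nothing to compare against directly; your argument is the standard one and it is correct. Picking $c$ as the node minimizing $s(c)$ among those with $s(c) > n/2$ (a nonempty set since $s(r)=n$), the children of $c$ each have $s(u) < s(c)$, so minimality forces $s(u) \le n/2$, and the upward component has size $n - s(c) < n/2$; together this is exactly the centroid property. The $n=1$ corner case and the strict decrease of $s$ along parent-to-child edges are both handled correctly. Your remark that the subtree-size formulation dovetails with the distributed implementation is apt: the paper's \Cref{lemma:find-centroid} indeed computes subtree sizes via the subtree-sum primitive and then has each node check the max-component condition locally, which is morally the algorithmic rendering of your existence proof.
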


\begin{lemma}\label{lemma:find-centroid}
  Finding a centroid of a tree can be done in can be solved in $\tO(1)$ deterministic rounds of Minor-Aggregation.
\end{lemma}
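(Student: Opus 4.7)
The plan is to root $T$ at an arbitrary node $r$ (e.g., the minimum-ID node, obtained via one round that contracts all edges and runs min-consensus), then compute, for every node $v$, the subtree size $s_v := |\desc(v)|$ by invoking the deterministic subtree-sum primitive of \Cref{lemma:Det-Ops} with private inputs $x_v := 1$ and the $+$-aggregator. In particular the root learns $s_r = n := |V(T)|$, and one further Minor-Aggregation round that contracts all of $G$ and uses a max-consensus broadcasts $n$ to every node.

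Recall (from the characterization underlying \Cref{lemma:centroid-exists}) that, after rooting, a node $c$ is a centroid iff (a) the ``upward'' component has size $n - s_c \le n/2$, and (b) for every child $u$ of $c$ in $T$, the subtree at $u$ satisfies $s_u \le n/2$. Condition (a) is locally checkable at $c$ once $s_c$ and $n$ are known. For condition (b) I use a single aggregation step on $T$: tree-edges are distributedly stored with their parent/child orientation, so the edge $e = \{\parent(u), u\}$ can set $z_{e,\parent(u)} := s_u$ and $z_{e,u} := -\infty$, letting each node $c$ receive $M_c := \max_{u \text{ child of } c} s_u$ via the max-aggregator. Thus $c$ declares itself a centroid iff $\max(n - s_c,\, M_c) \le n/2$. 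By \Cref{lemma:centroid-exists} at least one such $c$ exists; ties are broken by ID and the winner's ID is broadcast to everybody by one final contract-everything min-consensus round.

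The total cost is $\tilde O(1)$ Minor-Aggregation rounds: the initial rooting, the broadcast of $n$, the single aggregation computing the $M_c$'s, and the final broadcast of the centroid each take $O(1)$ rounds, while the one subtree-sum invocation costs $\tilde O(1)$ rounds by \Cref{lemma:Det-Ops}. There is no serious obstacle here; the only point requiring a little care is orienting the aggregation so that a parent sees \emph{only} the sizes of its children rather than the size contributed by its own parent side (hence the $-\infty$ sentinel sent toward the child endpoint), and relying on the fact that edges in the distributed storage of a rooted tree already know which endpoint is the parent.
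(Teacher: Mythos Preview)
Your proof is correct and follows essentially the same approach as the paper: root at the min-ID node, compute subtree sizes via the deterministic subtree-sum primitive of \Cref{lemma:Det-Ops}, have each node compute $\max(n - s_v,\, \max_{u\text{ child}} s_u)$ in one aggregation round, and then do a min-ID tiebreak among self-declared centroids. The only difference is that you spell out the mechanics of the single aggregation round (the $z_{e,\parent(u)} := s_u$, $z_{e,u} := -\infty$ trick) more explicitly than the paper does.
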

\begin{proof}
  We root the tree $T$ arbitrarily (e.g., use a single round to contract all edges and find the node with the minimum ID). Each node $v$ finds the size of the subtree $s_v$ via a subtree sum operation (\Cref{lemma:Det-Ops}; each node sets its input to $1$ and uses the $+$-aggregator). In a single round, each node $v$ computes the sizes of the largest component in $T - v$ (when $v$ is deleted from $T$). This can be computed by finding the largest subtree size $S_c$ of direct children $c$ of $v$. The maximum between that value and $n - S_v$ is the size of the largest component of $T_v$. If this value has size at most $n/2$, $v$ declares itself the centroid. We do a single round of leader election between the centroids (e.g., by contracting all edges and taking the minimum ID between them).
\end{proof}

When performing the recursion, we ideally want private copies of $E(G)$ that are used to evaluate the 2-respecting cut values in each recursive call in order to prevent congestion issues caused by multiple recursive calls trying to use the same resources. In order to side-step this issue, we create cut-equivalent subgraphs that are disjoint with the help of virtual nodes. The procedure is depicted in \Cref{fig:centroid-decomposition}.
\begin{lemma}[Cut-equivalent subtrees]\label{lemma:cut-equivalent-subtrees}
  Given a spanning tree $T \subseteq G$ and a node $c \in V(T)$, let $T_1, \ldots, T_k$ be the maximal connected subtrees of $T - c$ and let $e_1, \ldots, e_k$ be the edges connecting $c$ with $T_i$ for each $i$. Let $T'_i := T[V(T_i) \cup \{c\}]$. We can construct and distributedly store graphs $H_1,  \ldots, H_k$ where (1) $T'_i$ is a spanning tree of $H_i$, (2) $H_i$ is an extension of $G[V(T_i)]$ with $O(1)$ virtual nodes, and (3) for all $i \in [k]$ and $e, f \in E(T'_i) = E(T_i) \cup \{e_i\}$ we have $\Cut_{T'_i, H_i}(e, f) = \Cut_{T, G}(e, f)$. The algorithm is deterministic and takes $\tl{O}(1)$ Minor-Aggregation rounds.
\end{lemma}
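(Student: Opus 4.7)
The plan is to realize the ``virtual centroid'' construction sketched in the overview. For each subtree $T_i$, I introduce a fresh virtual node $c_i$ that plays the role of $c$ inside $H_i$. The graph $H_i$ consists of $G[V(T_i)]$ together with $c_i$ and a small number of virtual edges that redirect every $G$-edge leaving $V(T_i)$ so it lands at $c_i$ with its original weight. In more detail: edges of $G$ with both endpoints in $V(T_i)$ are kept as they are; the tree edge $e_i = \{c, w_i\}$ becomes the virtual tree edge $\{c_i, w_i\}$ of the same weight, with $c_i$ as the parent endpoint (so $T'_i$ is a rooted spanning tree of $H_i$); every non-tree edge $\{c, v\}$ with $v \in V(T_i)$ becomes $\{c_i, v\}$ of the same weight; and every $G$-edge $\{u, v\}$ with $u \in V(T_i)$ and $v \notin V(T_i) \cup \{c\}$ becomes a virtual edge $\{u, c_i\}$ of weight $w(\{u,v\})$. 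Parallel virtual edges at $c_i$ are merged by summing weights.

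The Minor-Aggregation implementation is straightforward using primitives we already have. First I would root $T$ at $c$ and use a single ancestor-sum (\Cref{lemma:Det-Ops}) so that every $v \neq c$ learns the identifier $\mathrm{subId}(v)$ of the direct child of $c$ on its root-to-$v$ path; this tells $v$ which $T_i$ contains it. One round of communication along $E(G)$ lets each $u$ learn $\mathrm{subId}(v)$ for every neighbor $v$, allowing $u$ to classify its incident edges as internal, incident to $c$, or cross-subtree, and to emit the appropriate virtual edges. The ID of $c_i$ can be taken to be the pair $(c, \mathrm{subId}(\cdot))$, which every node of $V(T_i)$ learns locally and which is fresh across subtrees. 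The storage conventions for virtual graphs (\Cref{sec:virtual-nodes}) are respected because $c_i$ is known to every node of $V(T_i)$, virtual edges $\{u, c_i\}$ with $u \in V(T_i)$ are stored at the non-virtual endpoint $u$, and no virtual-to-virtual edges are introduced. The whole procedure takes $\tO(1)$ deterministic Minor-Aggregation rounds.

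The main technical content is the cut-equivalence claim (3), which I would prove by accounting for each $G$-edge's contribution. Fix $e, f \in E(T'_i)$ and consider any edge $\{u, v\} \in E(G)$. If $u, v \in V(T_i)$, the $T$-path between them lies entirely in $T_i$ (since $T_i$ is a maximal connected component of $T - c$), so its intersection with $\{e, f\}$ is the same in $T$ and in $T'_i$, and the edge is preserved verbatim in $H_i$. If $u \in V(T_i)$ and $v \in V(T_j)$ with $j \neq i$, the $T$-path is $u \leadsto c \leadsto v$, and neither $e$ nor $f$ can lie on the $c \leadsto v$ portion (which is contained in $T_j$, disjoint from $E(T'_i)$); hence the $G$-contribution depends only on whether exactly one of $\{e, f\}$ lies on $u \leadsto c$, which is exactly the $T'_i$-contribution of the replacement virtual edge $\{u, c_i\}$, since its $T'_i$-path uses the same sequence of $T'_i$-edges with $c_i$ in place of $c$. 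Edges incident to $c$ are handled symmetrically, and edges with both endpoints outside $V(T_i) \cup \{c\}$ contribute $0$ on both sides and are correctly absent from $H_i$. A small but important subcase is $f = e_i$: the replacement virtual edge $\{u, c_i\}$ uses the virtualized $e_i$ at the top of its $T'_i$-path exactly as the original $\{u, v\}$ uses $e_i$ in $G$, so both sides agree.

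The main obstacle is not mathematical --- the case analysis above is routine --- but rather making sure the construction fits cleanly into the virtual-graph framework with exactly one virtual node per $H_i$, so that downstream recursive calls in the general 2-respecting algorithm can apply \Cref{thm:simulating-virtual-nodes} with only $O(1)$ multiplicative overhead, thereby avoiding simulation cascade. Since each $H_i$ introduces a single virtual node $c_i$, this cost is $O(1)$ as required, and the distinct $H_i$'s are supported on node-disjoint subsets of $V(G)$ (together with their private $c_i$'s), so they can be scheduled in parallel by \Cref{lemma:node-disjoint-scheduling} without further blowup.
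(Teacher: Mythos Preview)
Your proposal is correct and follows essentially the same approach as the paper: introduce a single virtual centroid $c_i$ per subtree, keep internal edges of $G[V(T_i)]$, and redirect every $G$-edge with exactly one endpoint in $V(T_i)$ to $c_i$; the cut-equivalence argument is the same edge-by-edge case analysis showing that the $T'_i$-path of a replacement edge intersects $\{e,f\}$ exactly as the original $T$-path did. Your write-up is in fact more careful than the paper's (which phrases the cut criterion somewhat loosely and skips the algorithmic details you spell out with the ancestor-sum), but the construction and the proof idea are identical.
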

\begin{proof}

  We now describe how to construct $H_i$ and $T'_i$. We first set $H_i \gets G[V(T_i)]$ and, furthermore, add a virtual node $c_i$ that is connected to the non-$c$ to endpoint of $e_i$. Next, we consider all edges $e \in E(G)$ and do the following. If both of its endpoints are in $V(H_i)$, we continue (this edge is already included in $H_i$). We call such an edge a \emph{preserved edge}. If neither of the endpoints are in $V(H_i) \setminus \{ c \}$, we ignore this edge (it does not contribute to $H_i$). Finally, if exactly one endpoint (say $u$) of $e = \{u, v\}$ is in $V(H_i) \setminus \{c\}$, we add to $H_i$ an edge between $u$ and $c_i$ of weight $w(e)$; we call this edge a \emph{split edge}. Note that only $u$ needs to knows that it is connected to $c_i$, but this is consistent with distributedly storing a virtual graph as an extension of $G[V(T_i)]$. It is clear that the extension has a single virtual node.

  We now argue the cut equivalency, i.e., that $\Cut_{T'_i, H_i}(e, f) = \Cut_{T, G}(e, f)$. Let $e, f \in E(T'_i) = E(T_i) \cup \{e_i\}$. Then, $\Cut_{T, G}(e, f)$ is the sum of weights of all edges in $G$ which have exactly one endpoint on the $T$-supported shortest path between $e$ and $f$. Similarly, $\Cut_{T'_i, H_i}(e, f)$ is the sum of weights of all edges in $H_i$ which have exactly one endpoint on the $T_i$-supported path between $e$ and $f$. However, there is a natural one-to-one between such edges: each edge in $e \in E(H_i)$ is either a preserved or a split edge in $E(G)$; each preserved or split edge in $E(G)$ which has exactly one endpoint on the path also has exactly one endpoint after being preserved/split, thereby proving cut equivalency.
\end{proof}

\begin{figure}
  \centering
  \includegraphics[width=0.7\textwidth]{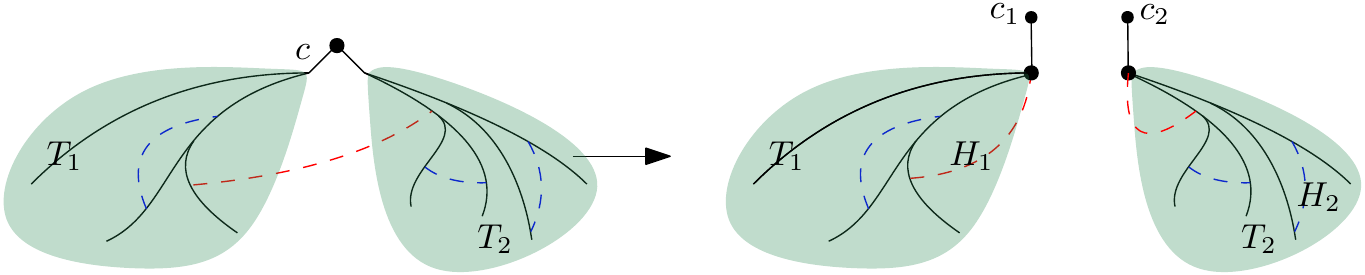}
  \caption{A depiction of disconnecting a centroid $c$ and creating private cut-equivalent subgraphs $H_1$ and $H_2$. The blue edges have both endpoints in the same subtree, hence they are preserved. The red edge is cross between subtrees, hence each subgraph creates its own copy of the red edge and reconnects its external endpoint to its virtual centroid $c_i$.}
  \label{fig:centroid-decomposition}
\end{figure}

Finally, with all the techniques in place, we are able to prove the main result of this section.
\thmGeneralRespectingCut*
\begin{proof}
  Let $G\orig$ be the initial communication network (we will construct various virtual network throughout the algorithm), let $T\orig$ be the initial spanning tree, and let $n := |V(G\orig)|$.

  We now describe a recursive procedure that will find the minimum 2-respecting cut on a weighted graph $G = (V, E_G)$ with respect to a tree $T = (V, E_T)$. Suppose, in some current recursive call we are considering, that that $G$ (and $T$) has $\beta$ many virtual nodes compared to $G\orig$ (i.e., as an extension of in the sense of \Cref{def:virtual-graph-extension}); initially, $\beta = 0$.

  Find a centroid $c$ of $T$, and suppose $T_1, \ldots, T_k$ are the maximal subtrees of $T - c$ and let $e_1, \ldots, e_k$ be the edges connecting $c$ with $T_i$ for each $i$. 

  Next, for each $i$ we create a virtual node $c_i$ representing the centroid and connect it to $T_i$ with the weight of $w(e_i)$; call this new tree $T'_i$. Furthermore,  we create a (common) virtual root $r$ and connect it to all $c_i$ (with an arbitrary weight); we call this tree $T''$. For technical purposes, we also define $G'$ where all edges adjacent to $c$ are subdivided into 2 sub-edges with the middle node corresponding to $c_i$ and the center node corresponding to $r$. We now call the between-subtree algorithm on the subtree instance $\{ T'_i \}_{i=1}^k \subseteq T'' \subseteq G'$ (\Cref{thm:subtree-algorithm}) and remember the best 2-respecting cut seen. While the algorithm operates on $T'' \subseteq G'$, we can eliminate the virtual node $r$ with a multiplicative $O(1)$ overhead (\Cref{thm:simulating-virtual-nodes}) since $G'$ is an extension of $G$ with a single virtual node.

  We construct and distributedly store $H_1, \ldots, H_k$ as in \Cref{lemma:cut-equivalent-subtrees}. Finally, we recursively (using the same procedure) find the minimum 2-respecting cut in all subtrees $T'_1 \subseteq H_1, T'_2 \subseteq H_2, \ldots, T'_k \subseteq H_k$. Due to the centroid guaranteeing $|V(T'_i)| \le |V(T)|/2 + 1$, the depth of this recursion is at most $O(\log n)$. With each recursive call we associate a virtual node. Specifically, for the call on $T'_i \subseteq H_i$ we associate the virtual node $c_i$. Furthermore, with a recursive call on $T \subseteq G$ we define the set $\mr{Virt} \subseteq V(G)$ as the union of all virtual nodes associated with itself and all (not necessarily direct) parent calls. Since each layer of recursion introduces a single node, we have that $|\mr{Virt}| = \beta \le O(\log n)$. Furthermore, by construction, we have that $G - \mr{Virt}$ and $T - \mr{Virt}$ are connected. Furthermore, by construction, $G - \mr{Virt} \subseteq G\orig$ and $T - \mr{Virt} \subseteq T\orig$.

  All branches of the recursion solve the 2-respecting cut in $T'_i \subseteq H_i$. Since $G - \mr{Virt}$ is connected, we eliminate the virtual nodes $\rm{Virt}$ with a $\tl{O}(1)$-blowup and recover an algorithm on $T - \mr{Virt} \subseteq G - \mr{Virt}$ (\Cref{thm:simulating-virtual-nodes}) that terminates in $C' \log_2^{C'} n$ rounds on $G$ (for some constant $C' > 0$). Note that, after elimination of the virtual nodes, all issued recursive calls on the same level are node disjoint, hence they can be scheduled together via \Cref{lemma:node-disjoint-scheduling}.

  \medskip

  \noindent\textbf{Runtime analysis.} The runtime of the entire algorithm being $\tl{O}(1)$ following immediately from the following facts: (1) the depth of the recursion is $O(\log n)$, (2) all work outside of the recursive calls is guaranteed to complete in $\tl{O}(1) \le C \log_2^C n$ rounds (where $C$ is some universal constant independent of the recursion analysis), (3) the recursive calls perform an algorithm on $G$ and all recursive calls on the same level of recursion are disjoint. This yields the $\tl{O}(1)$ runtime.

  \medskip
      
  \noindent\textbf{Correctness analysis.} Suppose that $e^*, f^*$ determine the 2-respecting minimum cut of $T \subseteq G$. Let $i_1^*, i_2^*$ be such that $e^* \in E(T'_{i_1^*})$, $f^* \in E(T'_{i_1^*})$ (they must exist since there is $\bigsqcup_i E(T'_i) = \bigsqcup_i E(T_i) \cup \{e_i\}$ is a partition of $E(T)$). We either have the case that $i_1^* = i_2^*$ in which case the recursive call on $T'_i \subseteq H_i$ combined with the cut equivalency $\Cut_{T'_i, H_i}(e^*, f^*) = \Cut_{T, G}(e^*, f^*)$ (\Cref{lemma:cut-equivalent-subtrees}) solves the problem by the recursive assumption. The other case, when $i_1^* \neq i_2^*$, this is solved in the subtree instance since the cuts there are trivially preserved.

  \medskip

  \noindent\textbf{Simulation in CONGEST.} We can directly simulate our deterministic Minor-Aggregation result in deterministic CONGEST using \Cref{thm:congest-simulation}.
\end{proof}

Finally, we can also prove the formal statements from the introductory section.
\begin{proof}[Proof of \Cref{theorem:final-min-cut-result}]
  We simply combine the $\poly(\log n)$-round tree packing Minor-Aggregation algorithm (\Cref{thm:treePacking}) with the $\poly(\log n)$-round 2-respective cut algorithm (\Cref{thm:general-2-respecting-cut}) to yield a $\poly(\log n)$-round Minor-Aggregation algorithm for the exact min-cut. Furthermore, we can compile down the Minor-Aggregation model to the CONGEST and obtain the guarantee for the algorithm (\Cref{thm:congest-simulation}). Finally, any correct algorithm for (even approximate) min-cut requires $\Omega(\SQ(G))$ rounds in CONGEST~\cite{haeupler2020shortcuts}, hence our algorithm is universally optimal (up to overhead factors).
\end{proof}

\paragraph{Acknowledgements.} We would like to thank Michal Dory, Bernhard Haeupler, and Richard Peng for helpful discussions.

\bibliographystyle{econ}
\bibliography{refs}

\appendix
\section{Deterministic Primitives}\label{sec:tree-primitives}

This section develops the deterministic Minor-Aggregation primitives for performing heavy-light decompositions, subtree sums and ancestor sums. The main technical idea used to achieve this result is by replacing the randomized star-merging technique used throughout the low-congestion shortcut framework with a deterministic version. Star-merging is a technique used in distributed computing, for example in Boruvka's algorithm, in which several adjacent node partitions are required to be merged together (requiring some data to be updated like the new leader or the ID of the new partition across all nodes). Randomized star-merging assigns to each node partition a label of either being a ``received'' or a ``joiner'' (using independent fair random coins)---joiners then merge into an adjacent receiver, guaranteeing that merging happens across star-like subgraphs (more formally, if we contract down each node partition to a single node and keep the edges across which the supernodes will merge, this graph will be a union of stars). Merging across star-like subgraphs is favorable since the diameter of the contracted graph is small. In order to derandomize star-merging, we leverage the deterministic 3-coloring of out-degree-one graphs developed by Cole and Vishkin~\cite{cole1986deterministic}. 


\begin{lemma}[Deterministic Star-Merging]
  \label{lemma:det-star-merging}
  Let $G = (V, \vec{E})$ be a $n$-node oriented graph without self-loops where the out-degree of each node is at most $1$, with $O \subseteq V$ having out-degree exactly $1$. Edges know their orientation. There is an $O(\log^* n)$-round Minor-Aggregation algorithm (communication is bidirectional) that partitions the nodes into $V = R \sqcup J$ (so-called receivers and joiners) such that (1) $|J| \ge 1/3 \cdot |O|$, (2) $J \subseteq O$, hence every $v \in J$ has a unique out-edge, and (3) for each $v \in J$, its out-edge points to a node in $R$. At termination, each node $v$ knows whether $v \in R$ or $v \in J$.
\end{lemma}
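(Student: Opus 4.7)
My plan is to adapt Cole--Vishkin's classical $3$-coloring for out-degree-one graphs to the Minor-Aggregation model, apply it to the ``functional subgraph'' $H := (O, \{v \to u : v, u \in O, \ v \to u \in \vec{E}\})$ induced on $O$, and then designate the most populous color class inside $O$ as the joiners $J$.

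The central Minor-Aggregation primitive is a single-round gadget by which each $v \in O$ learns the current color of its out-neighbor. In the consensus step, every node publishes its color as $y_v$; in the aggregation step, each oriented edge $e = \{a,b\}$ with $a \to b$ sets $z_{e,a} := y_b$ and $z_{e,b} := \bot$, where $\bot$ is the identity of a SUM aggregator. Because $a$'s (unique) out-edge reports $y_b$ while every other edge incident to $a$ is an in-edge contributing $\bot$, the aggregate at $a$ is precisely the color of its unique out-neighbor (and $\bot$ if $a \notin O$). Equipped with this primitive, I will run the standard Cole--Vishkin bit-shrinking iteration on $H$ for $O(\log^* n)$ rounds: each node $v$ with an out-neighbor in $O$ reads $c_p$, finds a bit index $i$ where its own color $c_v$ differs from $c_p$, and updates $c_v \gets 2i + c_v[i]$, compressing $k$-bit colors down to $\lceil \log_2 k \rceil + 1$ bits per iteration. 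Roots of $H$ (the $v \in O$ whose out-neighbor lies outside $O$) are seeded with a small fixed color to prime the reduction, as in the standard treatment of Cole--Vishkin on rooted pseudo-forests; a constant-round cleanup afterwards squeezes the resulting $O(1)$-coloring down to a proper $3$-coloring.

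To produce the partition, a single round that contracts the entire graph and aggregates (via SUM) a $3$-dimensional indicator vector per node (the $v$-th summand is $e_{\mathrm{color}(v)} \in \{0,1\}^3$ if $v \in O$, and $\vec{0}$ otherwise) lets every node learn the three color counts inside $O$. Each node then locally picks the most popular color $c^*$ and sets $J := \{v \in O : \mathrm{color}(v) = c^*\}$ and $R := V \setminus J$. The three claimed properties are immediate: $J \subseteq O$ by construction; $|J| \ge |O|/3$ by pigeonhole over three color classes in $O$; and for each $v \in J$ with out-neighbor $u$, either $u \notin O \subseteq R$, or else $u \in O$ and then the properness of the $H$-coloring forces $\mathrm{color}(u) \neq c^*$, so $u \in R$.

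The main obstacle is faithfully simulating Cole--Vishkin's ``read my parent's color'' step under the aggregation-only communication of the Minor-Aggregation model; the primitive above is what makes this possible, and it relies crucially on the hypothesis that out-degree is at most one (so all non-out incident edges can safely contribute the aggregator identity). The remaining ingredients---the bitwise shrinkage analysis, seeding roots in tree components, and the $O(1)$-round descent from $O(1)$ colors down to $3$---are standard Cole--Vishkin machinery and add only $O(1)$ additional rounds, yielding the claimed $O(\log^* n)$-round Minor-Aggregation algorithm.
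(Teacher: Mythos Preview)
Your proposal is correct and follows essentially the same approach as the paper: run Cole--Vishkin to obtain a proper $3$-coloring, count the color classes inside $O$ via a single global aggregation, and declare the largest class to be $J$. The only (harmless) difference is that you restrict the coloring to the subgraph $H$ on $O$ and treat vertices of $O$ whose out-neighbor lies outside $O$ as roots, whereas the paper simply $3$-colors all of $G$ (every node already has out-degree at most one, so no restriction is needed); your explicit description of how a single Minor-Aggregation round realizes the ``read my out-neighbor's color'' primitive is exactly what the paper leaves implicit when it says the Cole--Vishkin communication model is immediately simulable.
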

\begin{proof}
  Cole and Vishkin~\cite{cole1986deterministic} proposed a $O(\log^* n)$-round algorithm that 3-colors a graph with out degree at most one in the following communication model: in each round, each node $v$ broadcasts a $O(\log n)$-bit value $x_v$ that is received by nodes with their out-edge pointing to $v$. It is immediate to see that any $\tau$-round algorithm in this communication model can be simulated with a $\tau$-round algorithm in the Minor-Aggregation model.
    
  We run this 3-coloring algorithm in $O(\log^* n)$ Minor-Aggregation rounds on $G$, after which each node $v$ knows its colors $c_v \in \{0, 1, 2\}$. Using a single Minor-Aggregation round, we compute $N_k = |\{ v \in O : c_v = k \}|$, the number of nodes with out-degree $1$ of color $k$ for $k \in \{0, 1, 2\}$. This can be achieved by contracting all edges and using a sum-operator consensus step. Without loss of generality, let $N_0 \ge \max(N_1, N_2)$ be the color with most-frequent color. Therefore, $N_0 \ge |O|/3$.

  Nodes $v \in O$ (with out-degree $1$) with $c_v = 0$ are assigned to $J$. The rest, i.e., nodes with $c_v \neq 0$ or $v \not \in O$, are assigned to $R$. Since every two incident nodes have different color, each node in $J$ (color $0$) is pointing towards a node that is in $R$ (colors $1$ or $2$).
\end{proof}

With deterministic star-merging in place, we start building up higher-and-higher level primitives. We start off with a path prefix sum and suffix sum primitives, followed by ancestor and subtree sum under the assumption of a rooted tree, while finally developing a routing to orient a tree and removing the rootedness assumption.

\begin{lemma}[Numbered path prefix/suffix]\label{lemma:path-prefix-suffix}
  Let $G$ be a path graph on the nodes $(v_0, v_1, \ldots, v_{n-1})$. Suppose each node $v_i \in V(G)$ knows its index $i$ (hop-distance from $v_0$) and a $\tl{O}(1)$-bit private input $x_i$. Each node $v_k$ can compute the prefix and suffix aggregates $\bigoplus_{i=0}^{k} x_i$ and $\bigoplus_{i=k}^{n-1} x_i$, where $\bigoplus$ is some pre-defined $\tl{O}(1)$-bit aggregation operator. The algorithm is deterministic and terminates in $\tl{O}(1)$ Minor-Aggregation rounds on $G$.
\end{lemma}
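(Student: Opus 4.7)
My plan is to exploit the fact that each node $v_i$ already knows its index $i$ and to run a balanced tournament-tree reduction in $O(\log n)$ Minor-Aggregation rounds, rather than invoking star-merging (which would work but is clumsier on a labeled path). Each $v_i$ will maintain a state $p_i$, initialized to the identity $\bot$, whose intended meaning at the start of round $k$ is the $\bigoplus$ of all $x_j$ for $j$ strictly less than $i$ within the current ``level-$k$ block'' (the aligned block of length $2^k$ containing $v_i$). A single preprocessing round without contractions lets every edge $(v_i, v_{i+1})$ learn both endpoint indices so that all subsequent contraction decisions can be taken locally from $(i, i+1, k)$.

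The main loop runs $k = 0, 1, \ldots, \lceil \log_2 n \rceil - 1$. In round $k$, each edge $(v_i, v_{i+1})$ chooses to contract iff $(i+1) \bmod 2^{k+1} \neq 0$, so the supernodes are exactly the aligned level-$(k+1)$ blocks, each a concatenation of a ``left'' and ``right'' level-$k$ sub-block. The consensus step uses a pair-valued aggregator (componentwise $\bigoplus$): node $v_i$ emits $(x_i, x_i)$ if $\lfloor i / 2^k \rfloor$ is even (the left sub-block) and $(\bot, x_i)$ otherwise, so the supernode learns $(L, T)$ with $L$ the left-sub-block aggregate and $T$ the whole-block aggregate. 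Locally, every $v_i$ in the right sub-block then updates $p_i \leftarrow L \oplus p_i$, while nodes in the left sub-block leave $p_i$ untouched. After $\lceil \log_2 n \rceil$ rounds a single top-level block covers all of $V(G)$, giving $p_i = \bigoplus_{j<i} x_j$; the inclusive prefix requested by the lemma is then obtained in one local step as $p_i \oplus x_i$.

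To also produce the suffix $\bigoplus_{j \geq i} x_j$, I would run the symmetric algorithm with the roles of left and right swapped; alternatively both can be computed in a single pass by widening the aggregator to a quadruple, with no extra round cost. The correctness point I would verify explicitly is the update invariant: by associativity of $\bigoplus$, the within-block prefix of $v_i$ at level $k+1$ equals (left-sub-block total) $\oplus$ (within-level-$k$-block prefix of $v_i$) exactly when $v_i$ lies in the right sub-block, matching the update rule. The main care-point is not really an algorithmic obstacle but a modeling check: one has to confirm that the componentwise-pair operator is itself a valid $\tl{O}(1)$-bit aggregation operator in the sense of \Cref{def:aggregation-congest}, and that the edge-local contraction test is well-defined after the preprocessing round; both are immediate. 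The overall round count is $O(\log n) = \tl{O}(1)$ and the algorithm is entirely deterministic.
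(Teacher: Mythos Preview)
Your proposal is correct and follows essentially the same approach as the paper: a balanced binary divide-and-conquer on the indexed path, where at each of $O(\log n)$ levels the left-half aggregate is delivered to every right-half node, which prepends it to its running within-block prefix. The only cosmetic difference is that the paper phrases this as a top-down recursion on halves (appealing to node-disjoint scheduling for the two branches, and reusing the recursively computed inclusive prefix at the last left-half node as the left-half total to broadcast), whereas you unroll the same computation into an explicit bottom-up loop that recomputes the left-sub-block total $L$ from scratch via the consensus step each round.
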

\begin{proof}
  Every node can learn the number of nodes $n$ by contracting all edges and performing a consensus step with the sum-operator.
  
  We show how to compute the prefix sum $p_k = \bigoplus_{i=1}^{k-1} x_i$ of a node $v_k$; the suffix sum computation is analogous. We use recursion. We recursively solve the (same) problem on the first half (i.e., nodes with numbering $i$ such that $0 \le i < \lfloor n/2 \rfloor$) and recursively solve it on the remaining (second) half. Let $w$ be the largest-indexed node in the first half and let $p_w$ be its prefix sum (i.e., the aggregate of all inputs in the first half). In a single Minor-Aggregation round, we contract the second half of the path, as well as the interconnecting edge to broadcast $p_w$ to all nodes in the second half. Finally, each node $v_i$ in the second half performs $p_i \gets p_i \bigoplus p_w$ and outputs $p_i$ as its prefix sum.

  We analyze the algorithm. First, it is easy to verify that the above algorithm is correct: the prefix in the left half is correct by assumption, and the prefix in the second half is the one returned by the recursion aggregated with $p_w$. Second, we note that since every node knows its index and $n$, then the node knows how to translate the ``global algorithm'' from the previous paragraph to local actions. Third, since we apply the recursion on the first and second half (which are node-disjoint), we can run both branches of the recursion simultaneously (\Cref{lemma:node-disjoint-scheduling}). Since the depth of the recursion is $O(\log n)$, a total of $\tl{O}(1)$ Minor-Aggregation rounds suffice.
\end{proof}

\begin{lemma}[Heavy-light ancestor and subtree sums]\label{lemma:hl-subtree-sum}
  Let $T$ be a rooted tree (each node knows its parent). Furthermore, a heavy-light decomposition is known (each edge knows its HL-info) and each node $v$ has an $\tl{O}(1)$-bit private input $x_v$. There is a deterministic $\tl{O}(1)$-round Minor-Aggregation algorithm that computes for each node $v$ the values values $p_v := \bigoplus_{w \in A(v)} x_w$ and $s_v := \bigoplus_{w \in D(v)} x_w$, where $A(v)$ and $D(v)$ are the set of ancestors and descendants of $v$.
\end{lemma}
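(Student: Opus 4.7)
The plan is to compute $s_v$ and $p_v$ by processing the tree one HL-depth class at a time. Partition $T$ into \emph{heavy paths} (maximal descending chains of heavy edges); these are vertex-disjoint, so all heavy paths at the same HL-depth can be handled simultaneously via \Cref{lemma:node-disjoint-scheduling}. Within one heavy path I will use \Cref{lemma:path-prefix-suffix}, and between consecutive HL-depths I will use the aggregation step of the Minor-Aggregation model to carry partial sums across the single light edge joining a heavy path to its parent.

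First I would bootstrap the data that \Cref{lemma:path-prefix-suffix} requires, namely a per-node index and the total length of each heavy path. From its HL-info $L_v$, node $v$ identifies the top $u$ of its heavy path (the bottom endpoint of the deepest light edge in $L_v$, or the root if $L_v$ is empty) and reads off $\depth(u)$, so it knows its index $\depth(v)-\depth(u)$ along the path. The length is obtained in $O(1)$ Minor-Aggregation rounds: each $v$ decides locally via one aggregation step whether it is the unique \emph{bottom} of its heavy path (all child edges are light); then, contracting the heavy edges, the bottom broadcasts its depth to the whole chain in a consensus step.

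For subtree sums I process HL-depths bottom-up, $d=d_{\max},d_{\max}-1,\ldots,0$. At depth $d$, in parallel across all heavy paths $P$ at this depth, each $v\in P$ first uses one aggregation step in which each incident light-child edge $\{v,w\}$ contributes the previously computed $s_w$ and all other incident edges contribute the identity; it sets $y_v := x_v \oplus \bigoplus_{w \text{ light child of } v} s_w$. An invocation of \Cref{lemma:path-prefix-suffix} along $P$ with inputs $y_\cdot$ then yields the suffix aggregate, which equals $s_v$ because $\desc(v)$ decomposes cleanly into $v$'s descendants on $P$ plus the subtrees hanging off them through the light children. Ancestor sums are symmetric and processed top-down: at depth $d$, the top $u$ of each heavy path $P$ at depth $d$ receives $p_{\parent(u)}$ (computed at depth $d-1$) in a single aggregation step across the light edge above it, after which a path prefix sum of $x_\cdot$ along $P$, offset by $p_{\parent(u)}$ at $u$, yields $p_v$ for every $v\in P$.

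The main subtle point is the sequential dependence across HL-depths: depth-$d$ subtree (respectively ancestor) sums genuinely need depth-$(d{+}1)$ (respectively $d{-}1$) values already in hand, so we cannot collapse the outer loop. This is harmless for the claimed bound, however: each HL-depth is handled in $\tl{O}(1)$ Minor-Aggregation rounds (a constant number of aggregation/consensus steps plus one call to \Cref{lemma:path-prefix-suffix}), and there are only $O(\log n)$ HL-depth classes by \Cref{lemma:hl-has-logn-light-edges}, giving $\tl{O}(1)$ rounds in total. Every ingredient (HL-info, \Cref{lemma:path-prefix-suffix}, aggregation/consensus steps) is deterministic, so the final algorithm is as well.
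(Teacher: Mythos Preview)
Your proposal is correct and follows essentially the same approach as the paper: process HL-depth classes sequentially (bottom-up for subtree sums, top-down for ancestor sums), and within each class invoke \Cref{lemma:path-prefix-suffix} on the node-disjoint paths with per-node inputs augmented by the already-computed values hanging off the light child edges. The paper is terser (it leaves the ancestor-sum direction and the index/length bootstrapping implicit), but the structure and correctness argument are the same.
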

\begin{proof}
  We explain the subtree operation; the ancestor sum is completely analogous. We process the HL-paths in a bottom-up fashion by looping a variable $d$ from $O(\log n)$ to $0$. In iteration $d$, we compute the subtree sums for nodes of HL-depth equal to $d$ (note: the top-most node of an HL-path of HL-depth $d$ is processed in iteration $d-1$). Due to \Cref{lemma:hl-has-logn-light-edges}, each root-leaf path has at most $O(\log n)$ light edges, hence starting $d$ from $O(\log n)$ is sufficient.

  For a fixed $d$, the set of nodes with HL-depth equal to $d$ form a node-disjoint set of numbered paths, which are paths where each node knows its index in the path (i.e., hop-distance from the first node). Note that each node can deduce its HL-depth and its path-numbering from its HL-info. Therefore, the preconditions of \Cref{lemma:path-prefix-suffix} apply and we can compute the suffix sums on these paths. We use this in the following way. As private input $x_v$, each node $v$ sets $x_v := \bigoplus_{w \in \mathrm{NHC}(v)} s_w$, where $\mathrm{NHC}(v)$ is the set of non-heavy children of $v$. Computing $x_v$ can be implemented in a single Minor-Aggregation round by making each edge collect the HL-info and subtree sum $s$ from both of its endpoints, enabling each edge to deduce whether it is a heavy edge and passing the subtree sum to the parent node of non-heavy edges. We then compute the suffix sums on all nodes of HL-depth $d$ via \Cref{lemma:path-prefix-suffix}. It is easy to see that for each node $v$ of HL-depth $d$, the computed suffix sum at $v$ is the exact subtree sum at $v$, completing the computation in $\tl{O}(1)$ Minor-Aggregation rounds for a fixed $d$. Since $d$ takes on $O(\log n)$ different values, we conclude the entire computation requires $\tl{O}(1)$ Minor-Aggregation rounds.
\end{proof}

\begin{lemma}[Rooted heavy-light construction]\label{lemma:rooted-hl}
  Let $T$ be a rooted tree (each node knows its parent). There is a deterministic $\tl{O}(1)$-round Minor-Aggregation algorithm that constructs a heavy-light decomposition on $T$. Specifically, upon termination, each node learns its HL-info.
\end{lemma}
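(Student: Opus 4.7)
The goal is to equip every node with its $T$-depth and the list $L_v$ of light edges on its root-to-$v$ path. Because \Cref{lemma:hl-subtree-sum} presupposes a heavy-light decomposition we do not yet have, I cannot directly invoke subtree sums; I plan instead a three-phase bootstrap that uses only \Cref{lemma:det-star-merging}, \Cref{lemma:path-prefix-suffix}, and local aggregation.

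Phase 1 computes subtree sizes $|\desc(v)|$ by running a deterministic rake-and-compress tree contraction on $T$. In each iteration I would (i) apply star-merging to the out-edges of the current leaves---each such node has a unique parent-edge, so the preconditions of \Cref{lemma:det-star-merging} are met---to absorb a constant fraction of leaves into their parents while accumulating counts; and (ii) apply star-merging on the current degree-two internal nodes to compress adjacent pairs along the chains, again accumulating counts. Each node commits its current count as its final subtree size the moment it becomes a leaf with no surviving children. The standard rake-and-compress analysis yields $O(\log n)$ iterations, so the phase costs $\tl{O}(1)$ Minor-Aggregation rounds. Phase 2 then marks heavy/light edges: in one round each parent aggregates a max-by-(size, child-ID) pair from its children to select its heavy child, and in one further round broadcasts heavy/light labels down each edge.

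In Phase 3 I would contract all heavy edges to obtain the minor $T^{\star}$, whose edges are exactly the light edges of $T$ and whose depth is $O(\log n)$ by \Cref{lemma:hl-has-logn-light-edges}. I would then propagate $L$-lists top-down in $T^{\star}$, one layer per round: starting from the root (which has empty $L$ and $T$-depth $0$), each $T^{\star}$-node that has learned its data sends it to its $T^{\star}$-children, who append the connecting light edge along with the $T$-depths of its two endpoints. Each payload carries at most $O(\log n)$ entries of $O(\log n)$ bits, fitting in the $\tl{O}(1)$-bit budget, and $O(\log n)$ such rounds suffice. In parallel I would compute each heavy chain's length and each node's in-chain index via a star-merging-based list ranking, then invoke \Cref{lemma:path-prefix-suffix} to obtain each node's $T$-depth as the chain-top's $T$-depth plus its in-chain index, and to broadcast the shared $L$-list (identical for all nodes of a chain) to every node in the chain.

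The main obstacle is Phase 1: realizing deterministic rake-and-compress inside the Minor-Aggregation model while carefully interleaving star-merging with value propagation, so that no node is absorbed upward before it has itself collected all descendants that have already been absorbed into it. Detection of leaves and of degree-two nodes is routine (a single round of local aggregation over each node's incident edges counts children), and the classical amortized analysis of rake-and-compress translates directly to this model. Once subtree sizes are in hand, Phases 2 and 3 are essentially bookkeeping atop the primitives already developed in this appendix.
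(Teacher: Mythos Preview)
Your strategy is structurally different from the paper's. You compute all subtree sizes up front via rake-and-compress, then label heavy/light edges, then propagate HL-info top-down through the $O(\log n)$-depth light-edge minor. The paper instead bootstraps: it maintains a partition $\mathcal{P}$ in which each part $T[P_i]$ already carries a valid HL decomposition, merges joiner parts into their parent receiver parts via \Cref{lemma:det-star-merging}, and then \emph{uses the receiver part's existing HL decomposition} to invoke \Cref{lemma:hl-subtree-sum} and recompute subtree sizes inside the receiver (each node's private input is $1$ plus the already-correct sizes of its newly attached joiner-children). Heavy/light labels and HL-info are then refreshed by further calls to \Cref{lemma:hl-subtree-sum}. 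The partial HL decomposition is thus simultaneously the invariant being maintained and the tool that maintains it, which sidesteps any need for generic tree contraction.

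Your Phase~1 has a real gap. A degree-two node $v$ that is absorbed in a \emph{compress} step still has a surviving child whose subtree has not yet been counted, so $v$'s own subtree size is undetermined at absorption time; your commit rule (``commit when you become a leaf with no surviving children'') therefore never fires for such $v$. Concretely, on the rooted path $r\text{--}a\text{--}b\text{--}c\text{--}d$: rake $d$ into $c$, then compress $b$ into $a$; when the supernode $\{a,b\}$ later becomes a leaf its accumulated count is $4=|\desc(a)|$, but $|\desc(b)|=3$ is never produced. Standard parallel tree contraction fixes this with an \emph{expansion} phase that reverses the compressions and resolves each stored relation $s_v = s_{\text{child}} + \text{offset}$ once $s_{\text{child}}$ becomes known; you neither describe such a phase nor argue it fits the $\tl{O}(1)$-bit message constraint of the model. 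Your closing safety condition (``no node is absorbed upward before it has itself collected all descendants that have already been absorbed into it'') is vacuously satisfied by every compress step---the surviving child has \emph{not} been absorbed---so it does not guard against this error. With an explicit expansion phase your route would go through; as written, Phase~1 does not compute $|\desc(v)|$ for all $v$.
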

\begin{proof}
  We maintain a node partition $\calP = (P_1, \ldots, P_k)$ (specifically, $P_i \subseteq V(T)$, $P_i \cap P_j = \emptyset$ for $i \neq j$ and $\bigcup_i P_i = V(T)$) and in each part $P_i$ we maintain (1) that the induced subtree $T[P_i]$ is connected, (2) a valid heavy-light decomposition of the induced rooted subtree $T[P_i]$. Specifically, each edge $e \in E(T)$ maintains whether both of its endpoints are in the same part $P_i$ (called \emph{inside-part edges}) and each node maintains its HL-info with respect to $T[P_i]$. Initially, $\calP = \{ \{ v \} \mid v \in V(T) \}$ contains all singleton nodes and the task is trivial. We perform the following \emph{merging} iteration for $O(\log n)$ times.

  We denote with $T / \calP$ the minor of $T$ with all inside-part edges contracted (recall that Minor-Aggregation algorithms can freely operate on minors, \Cref{corollary:computation-on-minors}). First, each part $P_i$ marks its unique parent edge in $T / \calP$ ($T / \calP$ inherits the orientations from $T$) and suppose the marked edge is oriented from $P_i \in \calP$ to its parent $P_j \in \calP, j \neq i$. We call the part that contains the root of $T$ the \emph{root-part}, and the root-part does not mark any edges. With respect to the marked edges, each node in $T / \calP$ has out-degree at most $1$, and $|\calP| - 1$ of these nodes have out-degree exactly $1$ (all except the root part). Therefore, we can use \Cref{lemma:det-star-merging} to partition the set of parts $\calP = R \sqcup J$ (so-called \emph{receivers} and \emph{joiners}, resp.) such that $|J| \ge \frac{1}{3} ( |\calP| - 1 )$, each $P_i \in J$ has an out-edge to its parent $P_j, j \neq i$ that is a receiver (i.e., $P_j \in R$). Let $F \subseteq E(T)$ be the set of such edges, one for each $P_i \in J$ and pointing to a receiver, and the root-part is a receiver. Our goal is to merge the parts $\calP$ that form connected components along $F$ (specifically, maximal connected components in $T / \calP$, if we only consider edges in $F$). Note that such connected components are stars and joiners of a particular receivers are its descendants. Let $\calP' = \{ P'_1, P'_2, \ldots, P'_{k'} \}$ be the node partition representing the set of post-merge parts (each $P'_i$ corresponds to the union of parts $P_i$ in a single connected component along $F$).
  


  We recompute for each node $v \in P'_i$ its \emph{subtree size}, defined as $|V(\subtree(v))|$ with respect to its post-merge part $T[P'_i]$. Note that the previously-computed subtree sizes in the nodes of (pre-merge) receivers parts are correct, hence we only need to recompute it for $P'_i$ that are receivers. However, this can be achieved via a single heavy-light subtree sum call (\Cref{lemma:hl-subtree-sum}) on $P'_i \in R$: each node initializes its private input with one plus the sum of subtree sizes of its direct children that are in joiners (which are already computed). It is straightforward to verify that after the subtree sum call, each node $v \in P'_i$ has its subtree size correctly computed.

  Each node $v$ can now compute the child $u$ with the largest size, and label the edge $\{u, v\}$ heavy (all others are labeled light). In order to complete the merge, the invariant requires us to compute the remaining contents of HL-info: the depth and the list $L_v$ (containing IDs/depths of light edges on the root-to-$v$ path). Both can be achieved a single heavy-light ancestor sum computation (\Cref{lemma:hl-subtree-sum}): the depth is the result of an ancestor sum computation in $T[P'_i]$ with each node initializing their private input to $1$. The list $L_v$ is computed by an ancestor sum where each node $v$ whose parent is a light edge stores the information about this light edge as its private input and the rest leave the input empty; an ancestor sum is then performed that simply collects (concatenates) all of the private input. Since each node $v$ has only $O(\log n)$ light edges on the root-to-$v$ path, \Cref{lemma:hl-has-logn-light-edges} stipulates the final results fits within $\tl{O}(1)$ bits, rendering the operation valid.



  It is easy to verify that the invariants are maintained via this procedure since only joiner-part merge into receiver-parts which are their the joiners' parents. Furthermore, the number of parts $|\calP|$ decreases in each iteration by a constant factor, hence after $O(\log n)$ iterations we have that $\calP = \{ V(T) \}$. In conclusion, all steps take $\tl{O}(1)$ Minor-Aggregation rounds.%
\end{proof}

\begin{theorem}[Orienting a tree]\label{lemma:orienting-and-hl-construction}
  Let $T$ be a (unrooted) tree and let $r \in V(T)$ be an arbitrary node (each node knows whether it is the root). There is a deterministic $\tl{O}(1)$-round Minor-Aggregation algorithm that constructs a heavy-light decomposition of $T$ rooted at $r$. Specifically, upon termination, each node learns its HL-info.
\end{theorem}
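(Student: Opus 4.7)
The plan is to reduce this to \Cref{lemma:rooted-hl} by first \emph{orienting} the tree---i.e., having every non-root node learn which neighbor is its parent in $T$ rooted at $r$---and then running the rooted HL construction once. The orientation phase will be a Boruvka-style merging procedure. Throughout, we maintain a partition $\calP$ of $V(T)$ into parts $P_i$, each inducing a connected subtree of $T$, together with an \emph{internal orientation}: every $P_i$ has a designated local root $\ell_i$ and every $w \in P_i \setminus \{\ell_i\}$ knows its neighbor on the $P_i$-path to $\ell_i$. We preserve the invariant that whenever $r \in P_i$, the local root satisfies $\ell_i = r$. Initially each singleton $\{v\}$ is a part with $\ell_{\{v\}} = v$, so the invariant holds trivially.

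In each iteration, every non-$r$ part selects an arbitrary adjacent edge in the minor $T/\calP$ (say, to the lowest-ID adjacent part), while the $r$-part selects no edge; this gives each vertex of $T/\calP$ out-degree at most one. Applying deterministic star-merging (\Cref{lemma:det-star-merging}) to this oriented minor splits $\calP$ into receivers and joiners. The $r$-part has out-degree $0$, so it is automatically a receiver and the invariant is preserved. Each joiner $P_J$ then merges into its receiver $P_R$ across the selected edge $\{u \in P_J, v \in P_R\}$, and the merged part adopts $\ell_R$ as its new local root.

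The delicate step is repairing the internal orientation of $P_J$ after the merge. Exactly the nodes on the old $\ell_J$-to-$u$ path in $T[P_J]$ must flip their parent pointer (so they all point toward $u$, which newly points to $v$); nodes off this path keep their old parent, since its role on the $w$-to-$\ell_R$ path is unchanged. To identify this path distributedly, we run \Cref{lemma:rooted-hl} once per part at the start of the iteration (valid because internal orientations are maintained), and then invoke the HL subtree-sum primitive \Cref{lemma:hl-subtree-sum} with private input $1$ at $u$ and $0$ elsewhere; a node $w$ lies on the path iff the subtree sum at $w$ equals $1$. One additional aggregation among $w$'s children in $P_J$ lets each on-path $w$ find its unique on-path child, which becomes its new parent; node $u$ separately sets its new parent to $v$. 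Since joiners are node-disjoint, all such repairs are scheduled in parallel via \Cref{lemma:node-disjoint-scheduling}.

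Star-merging guarantees $|J| \ge \tfrac{1}{3}(|\calP|-1)$ per iteration, so $|\calP|$ shrinks by a constant factor and collapses to the single part $V(T)$ after $O(\log n)$ iterations; by the invariant its local root is $r$, and $T$ is now fully oriented with root $r$. A final application of \Cref{lemma:rooted-hl} to this oriented $T$ produces the HL-info. Each iteration uses $O(1)$ calls each to \Cref{lemma:rooted-hl}, \Cref{lemma:hl-subtree-sum}, and \Cref{lemma:det-star-merging} along with basic consensus/aggregation, contributing $\tl{O}(1)$ rounds; summing over the $O(\log n)$ iterations plus the final call yields $\tl{O}(1)$ rounds total. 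The main obstacle I foresee is the apparent chicken-and-egg between orientation and the subtree-sum primitive used to re-orient; we sidestep it by noting that the repairs after each merge concern only paths strictly \emph{inside} joiner parts, which are \emph{already} internally oriented at that moment, so the HL subtree-sum primitive is applicable at every step of the procedure.
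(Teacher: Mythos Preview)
Your proposal is correct and takes essentially the same approach as the paper: both maintain a partition with per-part orientations, merge via deterministic star-merging on $T/\calP$, repair the joiner's orientation by flipping exactly the $\ell_J$-to-$u$ path, and finish with one call to \Cref{lemma:rooted-hl}. The only cosmetic difference is how the flip-path is identified: you use a subtree-sum with indicator input at $u$, while the paper reads it off from the HL-info directly (a node $w$ is on the path iff $\LCA(w,u)=w$ and $\depth(w)\le\depth(u)$); both are $\tl{O}(1)$-round computations once \Cref{lemma:rooted-hl} has been run on the joiner.
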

\begin{proof}
  We maintain a node partition $\calP = (P_1, \ldots, P_k)$ (specifically, $P_i \subseteq V(T)$, $P_i \cap P_j = \emptyset$ for $i \neq j$ and $\bigcup_i P_i = V(T)$) and in each part $P_i$ we maintain (1) that induced subtree $T[P_i]$ is connected, (2) a root $r_i$ of $T[P_i]$ which is the global root $r$ if $r \in P_i$ and is otherwise arbitrary, (3) consistent edge orientations on tree edges of $T[P_i]$ with respect to $r_i$, and (3) a valid heavy-light decomposition of $T[P_i]$ with respect to $r_i$. Specifically, each edge $e \in E(T)$ maintains whether both of its endpoints are in the same part $P_i$ (called \emph{inside-part edges}) and each node maintains its HL-info with respect to $T[P_i]$. Initially, $\calP = \{ \{ v \} \mid v \in V(T) \}$ contains all singleton nodes and the task is trivial. We perform the following \emph{merging} iteration for $O(\log n)$ times.

  We denote with $T / \calP$ the minor of $T$ with all inside-part edges contracted (recall that Minor-Aggregation algorithms can freely operate on minors, \Cref{corollary:computation-on-minors}). First, each part $P_i$ chooses and marks an arbitrary adjacent edge $e_i \in E(T)$ that is not an inside-part edge (i.e., connects $P_i$ to another part $P_j, j \neq i$). Marking such an edge in $T / \calP$ can be performed in a single Minor-Aggregation round. We oriented the edge $P_i$ marked as going from $P_i$ to $P_j$. We call the part that contains the root of $T$ the \emph{root-part}, and the root-part does not mark any edges. With respect to the marked edges, each node in $T / \calP$ has out-degree at most $1$, and $|\calP| - 1$ of these nodes have out-degree exactly $1$ (all except the root part). Therefore, we can use \Cref{lemma:det-star-merging} to partition the set of parts $\calP = R \sqcup J$ (so-called \emph{receivers} and \emph{joiners}, resp.) such that $|J| \ge \frac{1}{3} ( |\calP| - 1 )$, each $P_i \in J$ has an out-edge to some part $P_j, j \neq i$ that is a receiver (i.e., $P_j \in R$) Let $F \subseteq E(T)$ be the set of such edges, one for each $P_i \in J$ and pointing to a receiver, and the root-part is a receiver. Our goal is to merge the parts $\calP$ that form connected components along $F$ (specifically, maximal connected components in $T / \calP$, if we only consider edges in $F$). Note that such connected components are stars. Let $\calP' = \{ P'_1, P'_2, \ldots, P'_{k'} \}$ be the node partition representing the set of post-merge parts (each $P'_i$ corresponds to the union of parts $P_i$ in a single connected component along $F$).



  The joiner part $P_i \in J$ considers its outgoing edge denoted as $e_i$ and merge with the part $P_j \in R$ on the other side of $e_i$. This requires us to maintain the invariants on $\calP'$. First, we floor the root $r_j$ of $P_j$ to all nodes in (all corresponding receivers) $P_i$. Note that the global root $r$ always remains the root of the part containing this since the root-part is always a receiver.

  One of the potentially violated invariants requires us to maintain the edge orientations on $T[P_i \cup P_j]$, that might not correspond to a properly rooted tree. We rectify this by keeping the orientations of $P_j$ unchanged and correcting the orientations of $P_i$ (note that multiple tails-parts can merge into the same heads-part). Specifically, suppose that $e_i = \{u_i, u_j\} \in E(T)$, $u_i \in P_i$, $u_j \in P_j$. We need to reverse the orientation of the edges on the path in $P_i$ between the root of $P_i$ and $u_i$. This can be achieved by constructing an heavy-light decomposition on $P_i$ via \Cref{lemma:rooted-hl} (since $P_i$ is rooted) and identifying each edge to be reverse in the following way. The parent edge of a node $v$ should be reversed if its depth is at most the depth of $u_i$ and the LCA of $v$ and $u_i$ is $v$ (which can be obtained from HL-info and \Cref{lemma:lca-labels}). It is easy to verify that this maintains the orientation invariant.

  It is easy to verify that the invariants are maintained via this procedure since only joiner-part merge into receiver-parts which are their the joiners' parents. Furthermore, the number of parts $|\calP|$ decreases in each iteration by a constant factor, hence after $O(\log n)$ iterations we have that $\calP = \{ V(T) \}$. In conclusion, all steps take $\tl{O}(1)$ Minor-Aggregation rounds.
\end{proof}

\lemmaDetOps*
\begin{proof}  
  We orient and construct a heavy-light decomposition via \Cref{lemma:orienting-and-hl-construction}. Then, we use \Cref{lemma:hl-subtree-sum} to compute the prefix and subtree sums.
\end{proof}

\thmCongestSimulation*
\begin{proof}
  
  The randomized claims were explicitly argued in \cite{goranci2022universally}. We now argue the deterministic claims. First, we define the part-wise aggregation (PA) problem: we are given a set of disjoint and connected parts $P_1, P_2, \ldots, P_k$ where $P_i \subseteq V(G)$ (each node knows its part-ID) and each node is given a private $O(\log n)$-bit input $x_v$. The goal is for each node in part $P_i$ to learn the value $\bigoplus_{w \in P_i} x_w$. Prior work has shown that we can solve PA in deterministic $\tl{O}(D + \sqrt{n})$ rounds in general graphs~\cite{haeupler2016low} and in $\tl{O}(D)$ rounds in excluded-minor graphs.~\cite{haeupler2016low,ghaffari2021excluded}. Therefore, it is sufficient to show that we can simulate a single Minor-Aggregation round in $\tl{O}(Q)$ rounds of deterministic CONGEST, where $Q$ is the number of rounds it takes to solve the part-wise aggregation problem.

  To prove this, we can directly follow the proof from \cite{goranci2022universally} of Theorem 4.2. The only randomized step of the proof is the leader election step which uses a randomized star-merging technique. However, we can immediately substitute this step with the deterministic star-merging of \Cref{lemma:det-star-merging}, leading to a deterministic simulation result.

\end{proof}

\end{document}